\numberwithin{equation}{section}
\newtheorem{theorem}{Theorem}
\newtheorem{lem}{Lemma}
\newtheorem{assumption}{Assumption}
\newtheorem{defi}{Definition} 
\newtheorem{cor}{Corollary} 
\newtheorem{rem}{Remark} 
\newcommand*{\rom}[1]{\expandafter\@slowromancap\romannumeral #1@}
\DeclareMathOperator{\sign}{sign}
\title{A Unified Framework for Uniform Signal Recovery in Nonlinear Generative Compressed Sensing}
\author{%
  Junren Chen$^{*}$ \\
    University of Hong Kong\\
  \texttt{chenjr58@connect.hku.hk} \\
  \And
  Jonathan Scarlett  \\
  National University of Singapore \\
  \texttt{scarlett@comp.nus.edu.sg} \\
  \AND
  Michael K. Ng \\
  Hong Kong Baptist University\\
  \texttt{michael-ng@hkbu.edu.hk} \\
  \And
  Zhaoqiang Liu\thanks{Corresponding authors.} \\
  UESTC \\ 
  \texttt{zqliu12@gmail.com} \\\
}
\begin{document}

\maketitle

\begin{abstract}
  In generative compressed sensing (GCS), we want to recover a signal $\bm{x^*}\in \mathbb{R}^n$ from $m$ measurements ($m\ll n$) using a generative prior $\bm{x^*}\in G(\mathbb{B}_2^k(r))$, where $G$ is typically an $L$-Lipschitz continuous generative model and $\mathbb{B}_2^k(r)$ represents the radius-$r$ $\ell_2$-ball in $\mathbb{R}^k$. Under nonlinear measurements, most prior results are non-uniform, i.e., they hold with high probability for a fixed $\bm{x^*}$ rather than for all $\bm{x^*}$ simultaneously. In this paper, we build a unified framework to derive uniform recovery guarantees for nonlinear GCS where the observation model is nonlinear and possibly discontinuous or unknown. Our framework accommodates GCS with 1-bit/uniformly quantized observations and single index models as canonical examples. Specifically, using a single realization of the sensing ensemble and generalized Lasso, {\it all} $\bm{x^*}\in G(\mathbb{B}_2^k(r))$ can be recovered up to an $\ell_2$-error at most $\epsilon$ using roughly $\tilde{O}({k}/{\epsilon^2})$ samples, with omitted logarithmic factors typically being dominated by $\log L$. Notably, this almost coincides with existing non-uniform guarantees up to logarithmic factors, hence the uniformity costs very little. As part of our technical contributions, we introduce the Lipschitz approximation to handle discontinuous observation models. We also develop a concentration inequality that produces tighter bounds for product processes whose index sets have low metric entropy. Experimental results are presented to corroborate our theory. 
\end{abstract}


\section{Introduction}
\label{sec:intro}

In compressed sensing (CS) that   concerns the reconstruction of low-complexity signals (typically sparse signals) 
\cite{foucart2013invitation,candes2006near,candes2005decoding}, it is standard to employ a random measurement ensemble, i.e., a random sensing matrix and other randomness that produces the observations. Thus, a recovery guarantee involving a single draw of the measurement ensemble could be {\it non-uniform}  or {\it uniform} --- the non-uniform one ensures the accurate recovery of any fixed signal with high probability, while the uniform one states that one realization of the measurements works simultaneously for all structured signals of interest. Uniformity is a highly desired property in CS, since in applications the measurement ensemble is typically fixed and should work for all signals \cite{genzel2022unified}. Besides, the derivation of a uniform guarantee is often significantly harder than a non-uniform one, making uniformity an interesting theoretical problem in its own right.  

Inspired by the tremendous success of deep generative models in different applications, it was recently proposed to use a generative prior to replace the commonly used sparse prior in CS \cite{bora2017compressed}, which led to numerical success such as a significant reduction of the measurement number. This new perspective for CS, which we call generative compressed sensing (GCS), has attracted a large volume of research interest, e.g., nonlinear GCS \cite{liu2020generalized,liu2022non,qiu2020robust}, MRI applications \cite{jalal2021robust,quan2018compressed}, and information-theoretic bounds \cite{liu2020information,kamath2020power}, among others. This paper focuses on the uniform recovery problem for nonlinear GCS, which is formally stated below. Our main goal is to build a unified framework that can produce uniform recovery guarantees for various nonlinear measurement models. 

\noindent{{\textbf{Problem:}}} Let $\mathbb{B}_2^k(r)$ be the $\ell_2$-ball with radius $r$ in $\mathbb{R}^k$. Suppose that $G \,:\, \mathbb{B}^k_2(r)\to \mathbb{R}^n$ is an $L$-Lipschitz continuous generative model, $\bm{a}_1,...,\bm{a}_m\in \mathbb{R}^n$ are the sensing vectors, $\bm{x^*}\in \mathcal{K}:=G(\mathbb{B}_2^k(r))$  is the underlying signal, and we have the observations $y_i= f_i(\bm{a}_i^\top\bm{x^*}),~i=1,\ldots,m$, where $f_1(\cdot),\ldots,f_m(\cdot)$ are possibly unknown,\footnote{In order to establish a unified framework, our recovery method \eqref{2.1} involves a parameter $T$ that should be chosen according to $f_i$. For the specific single index model with possibly unknown $f_i$, we can follow prior works~\cite{plan2016generalized, liu2020generalized} to assume that $T\bm{x}^* \in \mathcal{K}$, and recover $\bm{x^*}$ without using $T$. See Remark~\ref{pratical} for more details.} possibly random non-linearities. Given a single realization of $\{\bm{a}_i,f_i\}_{i=1}^m$,    
under what conditions we can {\it uniformly} recover all $\bm{x^*}\in \mathcal{K}$ from the corresponding $\{\bm{a}_i,y_i\}_{i=1}^m$ up to an $\ell_2$-norm error of $\epsilon$?


\subsection{Related Work}
\label{sec:related_work}

We divide the related works into nonlinear CS (based on traditional structures like sparsity) and nonlinear GCS. 

\noindent{{\textbf{Nonlinear CS:}}} Beyond the standard linear CS model where one observes $y_i=\bm{a}_i^\top \bm{x^*}$, recent years have witnessed rapidly increasing literature on nonlinear CS. An important nonlinear CS model is 1-bit CS that only retains the sign $y_i=\sign(\bm{a}_i^\top\bm{x^*})$ \cite{boufounos20081,jacques2013robust,plan2012robust,plan2013one}. Subsequent works also considered 1-bit CS with dithering $y_i=\sign(\bm{a}_i^\top\bm{x^*}+\tau_i)$ to achieve norm reconstruction under sub-Gaussian sensing vectors \cite{dirksen2021non,chen2023high,thrampoulidis2020generalized}.   
Besides, the benefit of using dithering was found in uniformly quantized CS with observation $y_i=\mathcal{Q}_\delta(\bm{a}_i^\top \bm{x^*}+\tau_i)$, where $\mathcal{Q}_\delta(\cdot)=\delta \big(\lfloor \frac{\cdot}{\delta}\rfloor+\frac{1}{2}\big)$ is the uniform quantizer with resolution $\delta$ \cite{xu2020quantized,thrampoulidis2020generalized,chen2022quantizing}. Moreover, the authors of \cite{plan2016generalized,plan2017high,genzel2016high} studied the more general single index model (SIM) where the observation $y_i=f_i(\bm{a}_i^\top\bm{x^*})$ involves (possibly) unknown nonlinearity $f_i$. 
 
While the restricted isometry property (RIP) of the sensing matrix $\bm{A}=[\bm{a}_1,...,\bm{a}_m]^\top$ leads to uniform recovery in linear CS \cite{foucart2013invitation,traonmilin2018stable,cai2013sparse}, this is not true in nonlinear CS. In fact, many existing results are non-uniform \cite{plan2016generalized,thrampoulidis2020generalized,plan2017high,jacques2021importance,chen2023high,genzel2016high,plan2012robust}, and some uniform guarantees can be found in \cite{genzel2022unified,xu2020quantized,chen2022uniform,chen2022quantizing,dirksen2021non,plan2013one,plan2012robust}. Most of these uniform guarantees suffer from a slower error rate. 

The most relevant work to this paper is the recent work \cite{genzel2022unified} that described a unified approach to uniform signal recovery for nonlinear CS. The authors of \cite{genzel2022unified} showed that in the aforementioned models with $k$-sparse $\bm{x^*}$, a uniform $\ell_2$-norm recovery error of $\epsilon$ could be achieved via generalized Lasso using roughly $k/\epsilon^4$ measurements \cite[Section 4]{genzel2022unified}. In this work, we build a unified framework for uniform signal recovery in nonlinear GCS. To achieve a uniform $\ell_2$-norm error of $\epsilon$ in the above models with the generative prior $\bm{x^*}\in G(\mathbb{B}_2^k(r))$, our framework only requires a number of samples proportional to $k/\epsilon^2$.  Unlike  \cite{genzel2022unified} that used the technical results \cite{mendelson2016upper} to bound the product process, we develop a concentration inequality that produces a tighter bound in the setting of generative prior, thus allowing us to derive a sharper uniform error rate.   

\noindent{{\textbf{Nonlinear GCS:}}} Building on the seminal work by Bora {\em et al.}~\cite{bora2017compressed}, numerous works have investigated linear or nonlinear GCS ~\cite{dhar2018modeling, hand2018global, hand2018phase, asim2020invertible, jalal2020robust, ongie2020deep, daras2021intermediate, liu2021towards, whang2021composing, jalal2021instance,naderi2022sparsity}, with a recent survey~\cite{scarlett2022theoretical} providing a comprehensive overview.    Particularly for nonlinear GCS, 1-bit CS with generative models has been studied in~\cite{qiu2020robust, liu2020sample, jiao2023just}, and generative priors have been used for SIM in~\cite{liu2020generalized, liu2022non, liu2022projected}. {\color{black} In addition, score-based generative models have been applied to nonlinear CS in~\cite{meng2023quantized, chung2022diffusion}.}  

The majority of research for nonlinear GCS focuses on non-uniform recovery, with only a few exceptions~\cite{liu2020generalized, qiu2020robust}. Specifically, under a generative prior,~\cite[Section 5]{liu2020generalized} presented  uniform recovery guarantees for SIM  where $y_i=f_i(\bm{a}_i^\top\bm{x^*})$ with deterministic Lipschitz $f_i$ or $f_i(x)=\sign(x)$. Their proof technique is based on the local embedding property developed in~\cite{liu2020sample}, which is a geometric property that is often problem-dependent and currently only known for 1-bit measurements and deterministic Lipschitz link functions. In contrast, our proof technique does not rely on such geometric properties and yields a unified framework with more generality. Furthermore,~\cite{liu2020generalized} did not consider dithering, which limits their ability to estimate the norm of the signal.

The authors of~\cite{qiu2020robust} derived a uniform guarantee from dithered 1-bit measurements under bias-free ReLU neural network generative models, while we obtain a uniform guarantee with the comparable rate for more general Lipschitz generative models. Additionally, their recovery program differs from the generalized Lasso approach ({\em cf.}~Section~\ref{sec:setup}) used in our work. Specifically, they minimize an $\ell_2$ loss with $\|\bm{x}\|_2^2$ as the quadratic term, while generalized Lasso uses $\|\bm{Ax}\|_2^2$ that depends on the sensing vector. As a result, our approach can be readily generalized to sensing vectors with an unknown covariance matrix ~\cite[Section 4.2]{liu2020generalized}, unlike~\cite{qiu2020robust} that is restricted to isotropic sensing vectors. Under random dithering, while \cite{qiu2020robust} only considered 1-bit measurements, we also present new results for uniformly quantized measurements (also referred to as multi-bit quantizer in some works \cite{dirksen2019quantized}).   


\subsection{Contributions}
In this paper, we build a unified framework for uniform signal recovery in nonlinear GCS. We summarize the paper structure and our main contributions as follows:
\begin{itemize}[leftmargin=5ex,topsep=0.25ex]
\setlength\itemsep{-0.05em}
    \item We present Theorem \ref{thm2} as our main result in Section \ref{sec:main_res}. Under rather general observation models that can be discontinuous or unknown, Theorem \ref{thm2} states that the uniform recovery of all $\bm{x^*}\in G(\mathbb{B}_2^k(r))$ up to an $\ell_2$-norm error of $\epsilon$ can be achieved using roughly $O\big(\frac{k\log L}{\epsilon^2}\big)$ samples. Specifically, we obtain uniform recovery guarantees for 1-bit GCS, 1-bit GCS with dithering, Lipschitz-continuous SIM, and uniformly quantized GCS with dithering. 
    \item We provide a proof sketch in Section \ref{sec:sketch}. Without using the embedding property as in \cite{liu2020generalized}, we handle the discontinuous observation model by constructing a Lipschitz approximation. Compared to \cite{genzel2022unified}, we develop a new concentration inequality (Theorem \ref{thm1}) to derive tighter bounds for the product processes arising in the proof. 
\end{itemize}  
We also perform proof-of-concept experiments on the MNIST~\cite{lecun1998gradient} and CelebA~\cite{liu2015deep} datasets for various nonlinear models to demonstrate that by using a single realization of $\{\bm{a}_i,f_i\}_{i=1}^m$, we can obtain reasonably accurate reconstruction for multiple signals. Due to the page limit, the experimental results and detailed proofs are provided in the supplementary material.


\subsection{Notation}
\label{sec:notation}
We use boldface letters to denote vectors and matrices, while regular letters are used for scalars. For a vector $\bm{x}$, we let $\|\bm{x}\|_q$ ($1\leq q\leq \infty$) denote its $\ell_q$-norm. We use $\mathbb{B}^n_q(r):=\{\bm{z}\in \mathbb{R}^n\,:\, \|\bm{z}\|_q\leq r\}$ to denote the $\ell_q$ ball in $\mathbb{R}^n$, and $(\mathbb{B}_q^n(r))^c$ represents its complement. The unit Euclidean sphere is denoted by $\mathbb{S}^{n-1}:=\{\bm{x}\in \mathbb{R}^n\,:\,\|\bm{x}\|_2=1\}$. We use $C, C_i, c_i, c$ to denote absolute constants whose values may differ from line to line. We write $A=O(B)$ or $A\lesssim B$ (resp. $A=\Omega(B)$ or $A\gtrsim B$) if $A\leq CB$ for some $C$ (resp. $A\geq cB$ for some $c$). We write $A\asymp B$ if $A=O(B)$ and $A=\Omega(B)$ simultaneously hold. We sometimes use $\tilde{O}(\cdot)$ to further hide logarithmic factors, where the hidden factors are typically dominated by $\log L$ in GCS, or $\log n$ in CS. We let $\mathcal{N}(\bm{\mu},\bm{\Sigma})$ be the Gaussian distribution with mean $\bm{\mu}$ and covariance matrix $\bm{\Sigma}$. Given $\mathcal{K}_1,\mathcal{K}_2\subset \mathbb{R}^n$, $\bm{a}\in \mathbb{R}^n$ and some $a\in \mathbb{R}$, we define $\mathcal{K}_1\pm\mathcal{K}_2:=\{\bm{x}_1\pm\bm{x}_2:\bm{x}_1\in \mathcal{K}_1,\bm{x}_2\in \mathcal{K}_2\}$, $\bm{a}+\mathcal{K}_1:=\{\bm{a}\}+\mathcal{K}_1$, and $a\mathcal{K}_1:=\{a\bm{x}:\bm{x}\in \mathcal{K}_1\}$. We also adopt the conventions of $a\wedge b=\min\{a,b\}$, and $a\vee b=\max\{a,b\}$. 


\section{Main Results}
\label{sec:main_res}

We first give some preliminaries.
\begin{defi}
 For a random variable $X$, we define the sub-Gaussian norm $\|X\|_{\psi_2}:=\inf\{t>0:\mathbbm{E}\exp(X^2/t^2)\leq 2\}$ and the sub-exponential norm $\|X\|_{\psi_1}:=\inf\{t>0:\mathbbm{E}\exp(|X|/t)\leq 2\}$. $X$ is sub-Gaussian (resp. sub-exponential) if $\|X\|_{\psi_2}<\infty$ (resp. $\|X\|_{\psi_1}<\infty$). For a random vector $\bm{x}\in \mathbb{R}^n$, we let $\|\bm{x}\|_{\psi_2}:=\sup_{\bm{v}\in\mathbb{S}^{n-1}}\|\bm{v}^\top\bm{x}\|_{\psi_2}$.   
 \end{defi}
\begin{defi}\label{entropydefi}
    Let $\mathcal{S}$ be a subset of $\mathbb{R}^n$. We say that a subset $\mathcal{S}_0\subset \mathcal{S}$ is an $\eta$-net of $\mathcal{S}$ if every point in $\mathcal{S}$ is at most $\eta$ distance away from some point in $\mathcal{S}_0$, i.e., $\mathcal{S}\subset \mathcal{S}_0+\mathbb{B}_2^n(\eta)$. Given a radius $\eta$, we define the covering number $\mathscr{N}(\mathcal{S},\eta)$ as the minimal cardinality of an $\eta$-net of $\mathcal{S}$. The metric entropy of $\mathcal{S}$ with respect to radius $\eta$ is defined as $\mathscr{H}(\mathcal{S},\eta)=\log\mathscr{N}(\mathcal{S},\eta)$.
\end{defi}


\subsection{Problem Setup}
\label{sec:setup}
We make the following assumptions on the observation model.
\begin{assumption}\label{assump1}
Let $\bm{a}\sim \mathcal{N}(0,\bm{I}_n)$ and let $f$ be a possibly unknown, possibly random non-linearity that is independent of $\bm{a}$. Let $(\bm{a}_i,f_i)_{i=1}^m$ be i.i.d. copies of $(\bm{a},f)$. With a {\it single}   draw of $(\bm{a}_i,f_i)_{i=1}^m$, for $\bm{x^*}\in \mathcal{K}=G(\mathbb{B}_2^k(r))$, where $G:\mathbb{B}_2^k(r)\to \mathbb{R}^n$ is an $L$-Lipschitz generative model,
we observe $\big\{y_i:=f_i(\bm{a}_i^\top\bm{x^*})\big\}_{i=1}^m$. We can express the model more compactly as $\bm{y}=\bm{f}(\bm{Ax^*})$, where $\bm{A}=[\bm{a}_1,...,\bm{a}_m]^\top\in \mathbb{R}^{m\times n}$, $\bm{f}=(f_1,...,f_m)^\top$ and $\bm{y}=(y_1,...,y_m)^\top \in \mathbb{R}^m$. 
\end{assumption}

In this work, we consider the generalized Lasso as the recovery method \cite{liu2020generalized,plan2016generalized,genzel2016high}, whose core idea is to ignore the non-linearity and minimize the regular $\ell_2$ loss. In addition, we need to specify a constraint that reflects the low-complexity 
nature of $\bm{x^*}$, and specifically, we introduce a problem-dependent scaling factor $T\in\mathbb{R}$ and use the constraint "$\bm{x}\in T\mathcal{K}$". Note that this is necessary even if the problem is linear; for example, with observations $\bm{y}=2\bm{A}\bm{x^*}$, one needs to minimize the $\ell_2$ loss over "$\bm{x}\in 2\mathcal{K}$". Also, when the generative prior is given by $T\bm{x^*}\in \mathcal{K}=G(\mathbb{B}_2^k(r))$, we should simply use "$\bm{x}\in\mathcal{K}$" as  constraint; this is technically equivalent to the treatment adopted in \cite{liu2020generalized} (see more discussions in Remark \ref{pratical} below). Taken collectively, we consider 
\begin{equation}\label{2.1}
\bm{\hat{x}}= \mathrm{arg}\min_{\bm{x}\in T\mathcal{K}}\|\bm{y}-\bm{Ax}\|_2.
\end{equation}
Importantly, we want to achieve uniform recovery of all $\bm{x^*}\in \mathcal{K}$ with a single realization of $(\bm{A},\bm{f})$.


\subsection{Assumptions}

Let $f$ be the function that characterizes our nonlinear measurements. We introduce several assumptions on $f$ here, and then verify them for specific models in Section \ref{subsec:main}. We define the set of discontinuities as
\begin{equation}\nonumber
    \mathscr{D}_f=\{a\in \mathbb{R}:f\text{ is discontinuous at }a\}.
\end{equation}
We define the notion of jump discontinuity as follows. 
\begin{defi}
{\rm (Jump discontinuity)\textbf{.}} A function $f:\mathbb{R}\to \mathbb{R}$ has a jump discontinuity at $x_0$ if both $L^-:=\lim_{x\to x_0^-}f(x)$ and $L^+:=\lim_{x\to x_0^+}f(x)$ exist but $L^-\neq L^+$. We simply call the   oscillation at $x_0$, i.e., $|L^+-L^-|$, the jump.   
\end{defi}

Roughly put, our framework applies to piece-wise Lipschitz continuous $f_i$ with (at most) countably infinite jump discontinuities, which have bounded jumps and are well separated. The precise statement is given below.  
\begin{assumption}\label{assumption2}
   For some $(B_0,L_0,\beta_0)$, the following statement unconditionally holds true for any realization of $f$ (specifically, $f_1,\ldots,f_m$ in our observations): 
   \begin{itemize}[leftmargin=5ex,topsep=0.25ex]
    \setlength\itemsep{-0.3em}
    \item $\mathscr{D}_f$ is one of the following: $\varnothing$, a  finite set, or a countably infinite set;
        \item All discontinuities of $f$ (if any) are jump discontinuities with the jump bounded by $B_0$;
        \item $f$ is $L_0$-Lipschitz on any interval $(a,b)$ satisfying $(a,b)\cap \mathscr{D}_f=\varnothing$.
        \item $|a-b|\geq \beta_0$ holds for any $a,b\in \mathscr{D}_f$, $a\neq b$ (we set $\beta_0=\infty$ if $|\mathscr{D}_f|\leq 1$).       
    \end{itemize}
    For simplicity, we assume $f(x_0)=\lim_{x\to x_0^+}f(x)$ for $x_0\in \mathscr{D}_f$.\footnote{This is very mild because the observations are $f_i(\bm{a}_i^\top\bm{x})$, while $\mathbbm{P}(\bm{a}^\top\bm{x}\in \mathscr{D}_{f_i})=0$ (as $\mathscr{D}_{f_i}$ is at most countably infinite and $\bm{a}\sim \mathcal{N}(0,\bm{I}_n)$).}
\end{assumption}
We note that Assumption \ref{assumption2} is satisfied by $L$-Lipschitz $f$ with $(B_0,L_0,\beta_0)=(0,L,\infty)$, 1-bit quantized observation  $f(\cdot)=\sign(\cdot+\tau)$ ($\tau$ is the potential dither, similarly below) with $(B_0,L_0,\beta_0)=(2,0,\infty)$, and uniformly quantized  observation $f(\cdot)=\delta\big(\lfloor \frac{\cdot+\tau}{\delta}\rfloor+\frac{1}{2}\big)$ with $(B_0,L_0,\beta_0)=(\delta, 0,\delta)$. 

Under Asssumption \ref{assumption2}, for any $\beta \in [0,\frac{\beta_0}{2})$ we construct $f_{i,\beta}$ as the Lipschitz approximation of $f_i$ to deal with the potential discontinuity of $f_i$ (i.e., $\mathscr{D}_{f_i}\neq \varnothing$). Specifically, $f_{i,\beta}$ modifies $f_i$ in $\mathscr{D}_{f_i}+[-\frac{\beta}{2},\frac{\beta}{2}]$ to be piece-wise linear and Lipschitz continuous; see its precise definition in (\ref{approxi}).

We develop Theorem \ref{thm1} to bound certain product processes appearing in the analysis, which produces bounds tighter than \cite{mendelson2016upper} when the index sets have low metric entropy. To make Theorem \ref{thm1} applicable, we further make the following  Assumption \ref{assump3}, which can be checked case-by-case by estimating the sub-Gaussian norm and probability tail. Also, $U_g^{(1)}$ and $U_g^{(2)}$ can even be a bit crude because the measurement number in Theorem \ref{thm2} depends on them in a logarithmic manner.   
\begin{assumption}\label{assump3}
 Let $\bm{a}\sim \mathcal{N}(0,\bm{I}_n)$, under Assumptions \ref{assump1}-\ref{assumption2}, we define the Lipschitz approximation $f_{i,\beta}$ as in (\ref{approxi}). We let
\begin{equation}\label{xivarepi}
     \xi_{i,\beta}(a):=f_{i,\beta}(a)-Ta,~\varepsilon_{i,\beta}(a):=f_{i,\beta}(a)-f_i(a).
\end{equation}
 For all $\beta\in (0,\frac{\beta_0}{2})$, we assume the following holds with some parameters $(A_g^{(1)},U_g^{(1)},P_0^{(1)})$ and $(A_g^{(2)},U_g^{(2)},P_0^{(2)})$:
\begin{itemize}
[leftmargin=5ex,topsep=0.15ex]
 \setlength\itemsep{-0.3em}
    \item $\sup_{\bm{x}\in \mathcal{K}}\|\xi_{i,\beta}(\bm{a}^\top\bm{x})\|_{\psi_2}\leq A_g^{(1)}$, $\mathbbm{P}\big(\sup_{\bm{x} \in \mathcal{K}} |\xi_{i,\beta}(\bm{a}^\top\bm{x})|\leq U_g^{(1)}\big)\geq 1-P_0^{(1)};$
    \item $\sup_{\bm{x}\in \mathcal{K}}\|\varepsilon_{i,\beta}(\bm{a}^\top\bm{x})\|_{\psi_2}\leq A_g^{(2)}$, $\mathbbm{P}\big(\sup_{\bm{x}\in \mathcal{K}}|\varepsilon_{i,\beta}(\bm{a}^\top\bm{x})|\leq U_g^{(2)}\big)\geq 1-P_0^{(2)}$.
\end{itemize}   
\end{assumption}
To build a more complete theory we further introduce two useful quantities. For some $\bm{x}\in \mathcal{K}$, we  define the target mismatch $\rho(\bm{x})$ as in \cite[Definition 1]{genzel2022unified}:
 \begin{equation}\label{mismatch}
     \rho(\bm{x})= \big\|\mathbbm{E}\big[f_i(\bm{a}_i^\top\bm{x})\bm{a}_i\big]-T\bm{x}\big\|_2.
 \end{equation}
   It is easy to see that $\mathbbm{E}\big[f_i(\bm{a}_i^\top\bm{x})\bm{a}_i\big]$ minimizes the expected $\ell_2$ loss $\mathbbm{E}\big[\|\bm{y}-\bm{Ax}\|_2^2\big]$, thus one can roughly understand $\mathbbm{E}\big[f_i(\bm{a}_i^\top\bm{x})\bm{a}_i\big]$ as the expectation of $\bm{\hat{x}}$. Since  $T\bm{x}$ is the desired ground truth, a small $\rho(\bm{x})$ is intuitively an important ingredient for generalized Lasso to succeed. Fortunately, in many models,  $\rho(\bm{x})$  with a suitably chosen $T$  will vanish (e.g., linear model \cite{bora2017compressed}, single index model \cite{liu2020generalized}, 1-bit model \cite{liu2020sample}) or at least be sufficiently small (e.g., 1-bit model with dithering \cite{qiu2020robust}). 

 As mentioned before, our method to deal with discontinuity of $f_i$ is to introduce its approximation $f_{i,\beta}$, which differs from $f_i$ only in $\mathscr{D}_{f_i}+[-\frac{\beta}{2},\frac{\beta}{2}]$. This will produce some bias because the actual observation is $f_i(\bm{a}_i^\top\bm{x}^*)$ rather than $f_{i,\beta}(\bm{a}_i^\top\bm{x}^*)$. 
Hence, for some $\bm{x}\in\mathcal{K}$ we define the following quantity to measure the bias induced by $f_{i,\beta}$: 
 \begin{equation}\label{approxiexpe}
     \mu_\beta(\bm{x})= \mathbbm{P}\Big(\bm{a}^\top\bm{x}\in \mathscr{D}_{f_i}+\Big[-\frac{\beta}{2},\frac{\beta}{2}\Big]\Big),~~\bm{a}\sim\mathcal{N}(0,\bm{I}_n).
 \end{equation}
 
The following assumption can often be satisfied by choosing suitable $T$ and sufficiently small $\beta_1$.  
 \begin{assumption}
     \label{assump4}
     Suppose Assumptions \ref{assump1}-\ref{assump3} hold true with parameters $B_0,L_0,\beta_0,A_g^{(1)},A_g^{(2)}$. For the $T$ used in (\ref{2.1}), $\rho(\bm{x})$ defined in (\ref{mismatch}) satisfies  
     \begin{equation}
         \label{3.3a}
         \sup_{\bm{x}\in\mathcal{K}}\rho(\bm{x})\lesssim (A_g^{(1)}\vee A_g^{(2)})\sqrt{\frac{k}{m}}.
     \end{equation}
     Moreover, there exists some 
    $0<\beta_1<\frac{\beta_0}{2}$ such that
    \begin{equation}
        \label{3.3b}
        (L_0\beta_1+B_0)\sup_{\bm{x}\in\mathcal{K}}\sqrt{\mu_{\beta_1}(\bm{x})}\lesssim (A_g^{(1)}\vee A_g^{(2)})\sqrt{\frac{k}{m}}.
    \end{equation}
 \end{assumption}
In the proof, the estimation error $\|\bm{\hat{x}}-T\bm{x}^*\|$ is contributed by a concentration term of scaling $\tilde{O}\big((A_g^{(1)}\vee A_g^{(2)})\sqrt{{k}/{m}}\big)$ and some bias terms. The main aim of Assumption \ref{assump4} is to pull down the bias terms so that the concentration term is dominant. 

 \subsection{Main Theorem and its Implications}\label{subsec:main}
We now present our general theorem and apply it to some specific models.
\begin{theorem}\label{thm2}
    Under Assumptions \ref{assump1}-\ref{assump4},  
     given any recovery accuracy $\epsilon\in(0,1)$, if it holds that $m\gtrsim (A_g^{(1)}\vee A_g^{(2)})^2\frac{k\mathscr{L}}{\epsilon^2}$,   then with probability at least $1-m(P_0^{(1)}+P_0^{(2)})-m\exp(-\Omega(n))-C\exp(-\Omega(k))$ on a single realization of $(\bm{A},\bm{f}):=(\bm{a}_i,f_i)_{i=1}^m$, we have the uniform signal recovery guarantee $\|\bm{\hat{x}}-T\bm{x^*}\|_2\leq \epsilon$ for all $\bm{x^*}\in \mathcal{K}$, where $\bm{\hat{x}}$ is the solution to (\ref{2.1}) with $\bm{y}=\bm{f}(\bm{Ax^*})$, and $\mathscr{L}=\log \widetilde{{P}}$ is a logarithmic factor with $\widetilde{{P}}$ being   polynomial in $(L,n)$ and other parameters that typically scale as $O(L+n)$. See (\ref{B.20}) for the precise expression of $\mathscr{L}$.   
     \end{theorem}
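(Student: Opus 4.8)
The plan is to prove Theorem~\ref{thm2} by a standard-in-spirit analysis of the generalized Lasso solution, but with the two novel ingredients highlighted in the introduction: the Lipschitz approximation $f_{i,\beta}$ to tame discontinuities, and the sharp product-process bound of Theorem~\ref{thm1}. First I would set up the optimality condition: since $\bm{\hat x}$ minimizes $\|\bm{y}-\bm{Ax}\|_2$ over $\bm{x}\in T\mathcal{K}$ and $T\bm{x}^*\in T\mathcal{K}$, feasibility of $T\bm{x}^*$ gives $\|\bm{y}-\bm{A}\bm{\hat x}\|_2^2\le \|\bm{y}-\bm{A}T\bm{x}^*\|_2^2$. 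Expanding this and writing $\bm{h}=\bm{\hat x}-T\bm{x}^*\in T\mathcal{K}-T\mathcal{K}\subset 2T(\mathcal{K}-\mathcal{K})$, one obtains the basic inequality $\|\bm{A}\bm{h}\|_2^2\le 2\langle \bm{y}-\bm{A}T\bm{x}^*,\,\bm{A}\bm{h}\rangle = 2\sum_i (y_i - T\bm{a}_i^\top\bm{x}^*)(\bm{a}_i^\top\bm{h})$. Now replace $y_i=f_i(\bm{a}_i^\top\bm{x}^*)$ by $f_{i,\beta}(\bm{a}_i^\top\bm{x}^*)$ at the cost of the error term $\varepsilon_{i,\beta}$, so that $y_i - T\bm{a}_i^\top\bm{x}^* = \xi_{i,\beta}(\bm{a}_i^\top\bm{x}^*) - \varepsilon_{i,\beta}(\bm{a}_i^\top\bm{x}^*)$ with $\xi_{i,\beta}$ as in \eqref{xivarepi}. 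This splits the right-hand side into a ``main'' product process driven by $\xi_{i,\beta}$ and a ``perturbation'' process driven by $\varepsilon_{i,\beta}$.

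Next I would control each piece uniformly over $\bm{x}^*\in\mathcal{K}$ and over the direction $\bm{h}$. The key point is that the relevant index set is a subset of $\mathcal{K}-\mathcal{K}$ (or its normalization), and because $G$ is $L$-Lipschitz on $\mathbb{B}_2^k(r)$, this set has metric entropy $\mathscr{H}(\mathcal{K}-\mathcal{K},\eta)\lesssim k\log(Lr/\eta)$ — this is exactly the low-entropy regime in which Theorem~\ref{thm1} beats the generic chaining bound of \cite{mendelson2016upper}. I would apply Theorem~\ref{thm1} to the product process $\sum_i \xi_{i,\beta}(\bm{a}_i^\top\bm{x}^*)(\bm{a}_i^\top\bm{u})$ indexed by $(\bm{x}^*,\bm{u})$ ranging over $\mathcal{K}$ and a normalized version of $\mathcal{K}-\mathcal{K}$; Assumption~\ref{assump3} supplies the sub-Gaussian parameter $A_g^{(1)}$ and the high-probability uniform bound $U_g^{(1)}$ (which enters only logarithmically), and the output is a bound of order $(A_g^{(1)}\vee A_g^{(2)})\sqrt{k\mathscr{L}/m}\cdot\|\bm{h}\|_2$ plus the mean contribution, which is $\sup_{\bm{x}^*}\rho(\bm{x}^*)\|\bm{h}\|_2$ by definition \eqref{mismatch} of the target mismatch and is absorbed via \eqref{3.3a} of Assumption~\ref{assump4}. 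The same argument applied with $\varepsilon_{i,\beta}$ gives a term of the same order controlled by $A_g^{(2)}$, whose mean contribution is handled through $\mu_{\beta_1}(\bm{x}^*)$ and the bound \eqref{3.3b}: indeed $|\varepsilon_{i,\beta}|$ is supported on $\mathscr{D}_{f_i}+[-\beta/2,\beta/2]$, bounded there by roughly $L_0\beta+B_0$, so its second moment is at most $(L_0\beta+B_0)^2\mu_\beta(\bm{x}^*)$.

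For the left-hand side, I would establish a uniform lower bound $\|\bm{A}\bm{h}\|_2^2\gtrsim m\|\bm{h}\|_2^2$ over all $\bm{h}$ in the relevant cone/difference set; this is a matrix-deviation / restricted-isometry-type statement for Gaussian $\bm{A}$ acting on the low-complexity set $\mathcal{K}-\mathcal{K}$, which holds once $m\gtrsim k\log(Lr)$, well within our sample budget. Combining the lower bound on the left with the upper bound on the right yields $m\|\bm{h}\|_2^2 \lesssim (A_g^{(1)}\vee A_g^{(2)})\sqrt{k\mathscr{L}/m}\,\|\bm{h}\|_2\cdot\sqrt{m}$ — here I must be careful about the $\sqrt{m}$ normalization: the product process scales like $\sqrt{m}$ times its ``per-sample'' size — which after dividing by $\|\bm{h}\|_2$ and rearranging gives $\|\bm{h}\|_2 \lesssim (A_g^{(1)}\vee A_g^{(2)})\sqrt{k\mathscr{L}/m}$, and this is $\le\epsilon$ precisely under the stated condition $m\gtrsim (A_g^{(1)}\vee A_g^{(2)})^2 k\mathscr{L}/\epsilon^2$. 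Finally I would collect the failure probabilities: $m(P_0^{(1)}+P_0^{(2)})$ from the uniform-boundedness events in Assumption~\ref{assump3} across the $m$ draws, $m\exp(-\Omega(n))$ from Gaussian norm concentration of the rows, and $C\exp(-\Omega(k))$ from Theorem~\ref{thm1} and the lower isometry bound, and read off $\mathscr{L}=\log\widetilde P$ with $\widetilde P$ polynomial in $(L,n)$ and the auxiliary parameters.

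I expect the main obstacle to be making the uniformity genuinely simultaneous in \emph{both} $\bm{x}^*$ and $\bm{h}$ while keeping the rate at $\sqrt{k/m}$ rather than $\sqrt{k/m}$ degraded to $(k/m)^{1/4}$ as in \cite{genzel2022unified}. Concretely, the delicate step is bounding the product process $\sup_{\bm{x}^*\in\mathcal{K}}\sup_{\bm{u}}\frac{1}{m}\sum_i \xi_{i,\beta}(\bm{a}_i^\top\bm{x}^*)(\bm{a}_i^\top\bm{u})$: a naive application of a generic product-process bound introduces a Gaussian-width factor of $\mathcal{K}$ that, when paired with a second width factor from the $\bm{u}$ set, costs an extra $\sqrt{k/m}$ and forces $\epsilon^4$. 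The resolution — and the reason Theorem~\ref{thm1} is needed — is that when \emph{one} of the two index sets has small metric entropy (which $\mathcal{K}$ does, $\lesssim k\log L$), the mixed-tail term can be controlled by that entropy times a sub-Gaussian increment, yielding only a single $\sqrt{k/m}$ and an extra $\log$ factor rather than a second polynomial factor; verifying that the $\xi_{i,\beta}$ and $\varepsilon_{i,\beta}$ processes satisfy the increment/tail hypotheses of Theorem~\ref{thm1} uniformly over $\beta\in(0,\beta_0/2)$, together with the bookkeeping needed to choose $\beta=\beta_1$ optimally against \eqref{3.3b}, is where the real work lies. A secondary subtlety is the discretization of the continuum of signals $\bm{x}^*$ and the transfer of the $f_{i,\beta}$-bias bound from a net back to all of $\mathcal{K}$, which relies on $\mu_\beta$ being controlled uniformly and on the Lipschitz-in-$\beta$ behavior of $f_{i,\beta}$.
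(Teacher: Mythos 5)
Your proposal follows essentially the same route as the paper's proof: the same basic inequality from optimality, the same decomposition $y_i - T\bm{a}_i^\top\bm{x}^* = \xi_{i,\beta} - \varepsilon_{i,\beta}$ into a centered product process (bounded via Theorem~\ref{thm1} and the low metric entropy of $\mathcal{K}$ and the normalized difference set) plus mean terms absorbed by $\rho(\bm{x})$ and $\mu_{\beta_1}(\bm{x})$, and the same probability bookkeeping. The only imprecision is the claimed pure multiplicative lower bound $\|\bm{Ah}\|_2^2\gtrsim m\|\bm{h}\|_2^2$ over the whole difference set, which cannot hold when $m<n$; the paper instead uses the S-REC (Lemma~\ref{S-REC}) with an additive slack $\epsilon/(2T)$ and restricts to $\|\bm{h}\|_2\ge 2\epsilon$ so the slack is absorbed, a standard fix that does not change your argument.
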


To illustrate the power of Theorem \ref{thm2}, we specialize it to several models to obtain concrete uniform signal recovery results. Starting with Theorem \ref{thm2}, the remaining work is to select parameters that justify Assumptions \ref{assumption2}-\ref{assump4}. We summarize the strategy as follows: (i) Determine the parameters in Assumption \ref{assumption2} by the measurement model; (ii) Set $T$ that verifies (\ref{3.3a}) (see Lemmas \ref{1bitscaling}-\ref{uniformdither} for the following models); (iii) Set the parameters in Assumption \ref{assump3}, for which bounding the norm of Gaussian vector is useful; (iv) Set $\beta_1$ to guarantee (\ref{3.3b}) based on some standard probability argument. We only provide suitable parameters for the following concrete models due to space limit, while leaving more details to Appendix \ref{sec:parasele}. 

\noindent{\textbf{(A) 1-bit GCS.}} Assume that we have the 1-bit observations $y_i=\sign(\bm{a}_i^\top\bm{x^*})$; then $f_i(\cdot)=f(\cdot)=\sign(\cdot)$ satisfies Assumption \ref{assumption2} with $(B_0,L_0,\beta_0)= (2,0,\infty)$. In this model, it is hopeless to recover the norm of $\|\bm{x^*}\|_2$; as done in previous work, we assume $\bm{x^*}\in \mathcal{K}\subset \mathbb{S}^{n-1}$ \cite[Remark 1]{liu2020sample}. We set $T=\sqrt{2/\pi}$ and
  take the parameters in Assumption \ref{assump3} as $A_g^{(1)}\asymp 1,U_g^{(1)}\asymp \sqrt{n},P_0^{(1)}\asymp \exp(-\Omega(n)),
A_g^{(2)}\asymp 1, U_g^{(2)}\asymp 1,P_0^{(2)}=0.$ We take $\beta=\beta_1\asymp{\frac{k}{m}}$ to guarantee (\ref{3.3b}). With these choices, Theorem \ref{thm2} specializes to the following: 
\begin{cor}\label{1bitcs}
Consider Assumption \ref{assump1} with $f_i(\cdot)=\sign(\cdot)$ and $\mathcal{K}\subset \mathbb{S}^{n-1}$, let $\epsilon\in (0,1)$ be any given recovery accuracy. If $m\gtrsim \frac{k}{\epsilon^2}\log \left(\frac{Lr\sqrt{m}n}{\epsilon\wedge ({k}/{m})}\right)$,\footnote{Here and in other similar statements, we implicitly assume a large enough implied constant.} then with probability at least $1-2m\exp(-cn)-m\exp(-\Omega(k))$ on a single draw of $(\bm{a}_i)_{i=1}^m$, we have the uniform signal recovery guarantee $\big\|\bm{\hat{x}}-\sqrt{\frac{2}{\pi}}\bm{x^*}\big\|_2\leq \epsilon$   for all $\bm{x^*}\in \mathcal{K}$, where $\bm{\hat{x}}$ is the solution to (\ref{2.1}) with $\bm{y}=\sign(\bm{Ax^*})$ and $T=\sqrt{\frac{2}{\pi}}$.
\end{cor}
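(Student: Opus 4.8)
The plan is to derive Corollary~\ref{1bitcs} as a direct specialization of Theorem~\ref{thm2}; the main work is to check Assumptions~\ref{assumption2}--\ref{assump4} with the stated parameter choices and then read off the sample-complexity and probability bounds. For Assumption~\ref{assumption2}, $f(\cdot)=\sign(\cdot)$ has a single jump discontinuity at $0$, of size $2$, and is constant (hence $0$-Lipschitz) on $(-\infty,0)$ and $(0,\infty)$, so $(B_0,L_0,\beta_0)=(2,0,\infty)$ works, with $\beta_0=\infty$ because $|\mathscr{D}_f|=1$. For Assumption~\ref{assump3}, I would write down the Lipschitz approximation $f_\beta$ explicitly (it replaces $\sign$ on $[-\beta/2,\beta/2]$ by the linear interpolant of slope $2/\beta$), and then, using $T=\sqrt{2/\pi}$, bound $\xi_{i,\beta}(a)=f_\beta(a)-Ta$ and $\varepsilon_{i,\beta}(a)=f_\beta(a)-f(a)$. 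Since $\|f_\beta\|_\infty\le 1$, the term $\xi_{i,\beta}(\bm a^\top\bm x)$ is a bounded perturbation of $-Ta$, which is Gaussian on $\mathbb{S}^{n-1}$; hence $\sup_{\bm x\in\mathcal K}\|\xi_{i,\beta}(\bm a^\top\bm x)\|_{\psi_2}\asymp 1$, and $\sup_{\bm x\in\mathcal K}|\xi_{i,\beta}(\bm a^\top\bm x)|\le 1+T\sup_{\bm x\in\mathcal K}|\bm a^\top\bm x|\lesssim \|\bm a\|_2\lesssim\sqrt n$ with probability $1-\exp(-\Omega(n))$ by a standard Gaussian-norm concentration bound, giving $U_g^{(1)}\asymp\sqrt n$, $P_0^{(1)}\asymp\exp(-\Omega(n))$. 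For $\varepsilon_{i,\beta}$, note $|\varepsilon_{i,\beta}(a)|\le 1$ deterministically (it is nonzero only on $[-\beta/2,\beta/2]$ and bounded there by the jump), so $A_g^{(2)}\asymp 1$, $U_g^{(2)}\asymp 1$, $P_0^{(2)}=0$.

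Next, for Assumption~\ref{assump4}: the target-mismatch condition~\eqref{3.3a} requires $\sup_{\bm x\in\mathcal K}\rho(\bm x)\lesssim\sqrt{k/m}$, and here $\rho(\bm x)=\|\mathbbm E[\sign(\bm a^\top\bm x)\bm a]-T\bm x\|_2$; the classical Gaussian identity $\mathbbm E[\sign(\bm a^\top\bm x)\bm a]=\sqrt{2/\pi}\,\bm x/\|\bm x\|_2=T\bm x$ for $\bm x\in\mathbb{S}^{n-1}$ gives $\rho(\bm x)\equiv 0$, so~\eqref{3.3a} holds trivially (I would cite Lemma~\ref{1bitscaling} for this). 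For the bias condition~\eqref{3.3b}, with $L_0=0$ and $B_0=2$ we need $\sup_{\bm x\in\mathcal K}\sqrt{\mu_{\beta_1}(\bm x)}\lesssim\sqrt{k/m}$, i.e.\ $\mu_{\beta_1}(\bm x)=\mathbbm P(|\bm a^\top\bm x|\le\beta_1/2)\lesssim k/m$; since $\bm a^\top\bm x\sim\mathcal N(0,1)$ for $\bm x\in\mathbb{S}^{n-1}$, this probability is at most $\beta_1/(2\sqrt{2\pi})$, so choosing $\beta_1\asymp k/m$ (which is $<\beta_0/2=\infty$) secures~\eqref{3.3b}. With $A_g^{(1)}\vee A_g^{(2)}\asymp 1$, Theorem~\ref{thm2}'s sample requirement $m\gtrsim(A_g^{(1)}\vee A_g^{(2)})^2 k\mathscr L/\epsilon^2$ becomes $m\gtrsim k\mathscr L/\epsilon^2$, and it remains to verify that the logarithmic factor $\mathscr L=\log\widetilde P$ specializes to $\log\bigl(\tfrac{Lr\sqrt m\,n}{\epsilon\wedge(k/m)}\bigr)$; this follows by substituting $U_g^{(1)}\asymp\sqrt n$, $U_g^{(2)}\asymp 1$, $\beta_1\asymp k/m$, recovery accuracy $\epsilon$, and the generative-model parameters $L,r$ into the expression~\eqref{B.20} for $\mathscr L$ and absorbing constants. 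The failure probability $1-m(P_0^{(1)}+P_0^{(2)})-m\exp(-\Omega(n))-C\exp(-\Omega(k))$ collapses to $1-2m\exp(-cn)-m\exp(-\Omega(k))$ since $P_0^{(2)}=0$ and $P_0^{(1)}\asymp\exp(-\Omega(n))$.

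I expect the only genuinely delicate step to be the bookkeeping in~(iv)/$\mathscr L$: one must confirm that the precise form of~\eqref{B.20} does not hide a factor that, under $\beta_1\asymp k/m$, blows up faster than polynomially in $(L,n,m,1/\epsilon)$ — in particular the $\beta_1^{-1}$-type dependence that typically enters the metric-entropy estimate of the Lipschitz-approximated model must be controlled by the $\epsilon\wedge(k/m)$ in the denominator. Everything else (the $\psi_2$-norm and tail estimates in Assumption~\ref{assump3}, the exact-mismatch computation, and the Gaussian small-ball bound for $\mu_{\beta_1}$) is routine. The cleanest exposition is therefore: (1) state the parameter choices; (2) verify Assumptions~\ref{assumption2}--\ref{assump4} in turn, each in a short paragraph; (3) invoke Theorem~\ref{thm2} and simplify the sample bound, the log factor, and the probability.
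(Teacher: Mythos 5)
Your proposal is correct and follows essentially the same route as the paper: Appendix~\ref{sec:parasele} verifies Assumptions~\ref{assumption2}--\ref{assump4} with exactly the parameter choices you list ($(B_0,L_0,\beta_0)=(2,0,\infty)$, $T=\sqrt{2/\pi}$ via Lemma~\ref{1bitscaling}, $A_g^{(1)},A_g^{(2)},U_g^{(2)}\asymp 1$, $U_g^{(1)}\asymp\sqrt{n}$, $\beta_1\asymp k/m$ from $\mu_\beta(\bm{x})=O(\beta)$) and then reads off Theorem~\ref{thm2}. The bookkeeping you flag as delicate does go through: substituting into \eqref{B.20} gives $k\log\big(Lr\sqrt{m}\,n[\tfrac{m}{k}+\tfrac{1}{\epsilon}]\big)\asymp k\log\big(\tfrac{Lr\sqrt{m}\,n}{\epsilon\wedge(k/m)}\big)$, so the $\beta_1^{-1}$ dependence is exactly what produces the $k/m$ in the denominator.
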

\begin{rem}
    A uniform recovery guarantee for generalized Lasso in 1-bit GCS was obtained in \cite[Section 5]{liu2020generalized}. Their proof relies on the local embedding property in \cite{liu2020sample}. Note that such geometric property is often problem-dependent and highly nontrivial. By contrast, our argument 
    is free of geometric properties of this kind. 
\end{rem}
\begin{rem}\label{focmcompa}
    For traditional 1-bit CS, \cite[Corollary 2]{genzel2022unified} requires $m\gtrsim \tilde{O}(k/\epsilon^4)$ to achieve uniform $\ell_2$-accuracy of $\epsilon$ for all $k$-sparse signals, which is inferior to our $\tilde{O}(k/\epsilon^2)$. This is true for all remaining examples. To obtain such a sharper rate, the key technique is to use our Theorem \ref{thm1} (rather than \cite{mendelson2016upper}) to obtain tighter bound for the product processes, as will be discussed in Remark \ref{rem1}. 
\end{rem}

\noindent{\textbf{(B) 1-bit GCS with dithering.}} Assume that the   $\bm{a}_i^\top\bm{x^*}$
is quantized to 1-bit with    dither\footnote{Throughout this work, the random dither is independent of the $\{\bm{a}_i\}_{i=1}^m$.} $\tau_i  \stackrel{iid}{\sim} \mathscr{U}[-\lambda,\lambda]$) for some $\lambda$ to be chosen, i.e., we observe $y_i=\sign(\bm{a}_i^\top\bm{x^*}+\tau_i)$.  Following \cite{qiu2020robust} we assume $\mathcal{K}\subset \mathbb{B}_2^n(R)$ for some $R>0$. Here,   using dithering allows  the recovery of  signal norm $\|\bm{x^*}\|_2$, so we do not need to assume $\mathcal{K}\subset \mathbb{S}^{n-1}$ as in Corollary \ref{1bitcs}. We set $\lambda= CR\sqrt{\log m}$ with sufficiently large $C$, and $T=\lambda^{-1}$. In Assumption \ref{assump3}, we take $A_g^{(1)}\asymp 1,~U_g^{(1)}\asymp \sqrt{n},~P_0^{(1)}\asymp \exp(-\Omega(n)),~A_g^{(2)}\asymp 1,~U_g^{(2)}\asymp 1$, and $P_0^{(2)}=0$. Moreover, we take $\beta=\beta_1={\frac{\lambda k}{m}}$ to guarantee (\ref{3.3b}). Now we can invoke Theorem \ref{thm2} to get the following. 
\begin{cor}
    \label{cor2}
    Consider Assumption \ref{assump1} with $f_i(\cdot)=\sign(\cdot+\tau_i)$, $\tau_i\sim \mathscr{U}[-\lambda,\lambda]$ and $\mathcal{K}\subset \mathbb{B}_2^n(R)$, and $\lambda=CR\sqrt{\log m}$ with sufficiently large $C$. Let $\epsilon\in (0,1)$ be any given recovery accuracy. If $m\gtrsim \frac{k}{\epsilon^2}\log \left(\frac{Lr\sqrt{m}n}{\lambda(\epsilon\wedge ({k/m}))}\right)$, then with probability at least $1-2m\exp(-cn)-m\exp(-\Omega(k))$ on a single draw of $(\bm{a}_i,\tau_i)_{i=1}^m$, we have the uniform signal recovery guarantee $\|\bm{\hat{x}}-\lambda^{-1}\bm{x^*}\|_2\leq \epsilon$   for all $\bm{x^*}\in \mathcal{K}$, where $\bm{\hat{x}}$ is the solution to (\ref{2.1}) with $\bm{y}=\sign(\bm{Ax^*}+\bm{\tau})$ (here, $\bm{\tau}=[\tau_1,...,\tau_m]^\top$) and $T=\lambda^{-1}$.
\end{cor}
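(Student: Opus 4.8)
The plan is to obtain Corollary~\ref{cor2} as a direct specialization of Theorem~\ref{thm2}, so the bulk of the work is verifying Assumptions~\ref{assumption2}--\ref{assump4} for the dithered 1-bit model $f_i(\cdot)=\sign(\cdot+\tau_i)$ with $\tau_i\sim\mathscr U[-\lambda,\lambda]$ and then translating the generic sample complexity $m\gtrsim (A_g^{(1)}\vee A_g^{(2)})^2 k\mathscr L/\epsilon^2$ into the stated bound. First I would record the elementary structural facts about $f_i$: for each fixed realization of $\tau_i$ the map $x\mapsto\sign(x+\tau_i)$ has a single jump discontinuity at $x=-\tau_i$, with jump size exactly $2$ and no other discontinuities, so Assumption~\ref{assumption2} holds with $(B_0,L_0,\beta_0)=(2,0,\infty)$ uniformly over the dither. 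This makes the Lipschitz-approximation parameter range $\beta\in(0,\beta_0/2)=(0,\infty)$, and the approximation $f_{i,\beta}$ simply linearly interpolates across the interval $[-\tau_i-\beta/2,-\tau_i+\beta/2]$.

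Next I would handle the target-mismatch condition \eqref{3.3a}. With $\bm a\sim\mathcal N(0,\bm I_n)$ and the uniform dither on $[-\lambda,\lambda]$, a standard computation (this is exactly the scaling lemma invoked as Lemma~\ref{uniformdither} in the appendix, analogous to the dithered 1-bit analysis in~\cite{qiu2020robust}) shows that $\mathbbm E[\sign(\bm a^\top\bm x+\tau)\bm a] = T\bm x$ up to an error that decays exponentially in $\lambda^2/\|\bm x\|_2^2$ when $T=\lambda^{-1}$; since $\|\bm x\|_2\le R$ for all $\bm x\in\mathcal K$ and $\lambda=CR\sqrt{\log m}$ with $C$ large, this residual is $\le R\,m^{-c C^2}$, which is dominated by $\sqrt{k/m}$ for large enough $C$, giving $\sup_{\bm x\in\mathcal K}\rho(\bm x)\lesssim\sqrt{k/m}$ as required with $A_g^{(1)}\vee A_g^{(2)}\asymp 1$. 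For Assumption~\ref{assump3} I would bound the two processes: $\xi_{i,\beta}(\bm a^\top\bm x)=f_{i,\beta}(\bm a^\top\bm x)-T\bm a^\top\bm x$ satisfies $\|\xi_{i,\beta}\|_{\psi_2}\lesssim 1$ because $f_{i,\beta}$ is bounded by $1$ in absolute value and $T\bm a^\top\bm x$ has sub-Gaussian norm $\lambda^{-1}\|\bm x\|_2\le R/\lambda\lesssim 1$, and $\sup_{\bm x\in\mathcal K}|\xi_{i,\beta}(\bm a^\top\bm x)|\le 1+T\max_{\bm x\in\mathcal K}|\bm a^\top\bm x|\lesssim\sqrt n$ with probability $1-\exp(-\Omega(n))$ by a net argument over $\mathcal K$ using the $L$-Lipschitz property of $G$ (this is where $\log L$ and $\log(Lr\sqrt m)$ enter); meanwhile $\varepsilon_{i,\beta}(\bm a^\top\bm x)=f_{i,\beta}(\bm a^\top\bm x)-f_i(\bm a^\top\bm x)$ is supported on $\bm a^\top\bm x\in[-\tau_i-\beta/2,-\tau_i+\beta/2]$ and bounded by $B_0=2$ everywhere, so $A_g^{(2)}\asymp 1$, $U_g^{(2)}\asymp 1$, $P_0^{(2)}=0$.

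For Assumption~\ref{assump4}, the mismatch part \eqref{3.3a} is already done; the bias part \eqref{3.3b} requires $(L_0\beta_1+B_0)\sup_{\bm x\in\mathcal K}\sqrt{\mu_{\beta_1}(\bm x)}\lesssim\sqrt{k/m}$, and since $L_0=0$, $B_0=2$, this reduces to controlling $\mu_{\beta_1}(\bm x)=\mathbbm P(\bm a^\top\bm x\in -\tau_i+[-\beta_1/2,\beta_1/2])$. Conditioning on $\bm a^\top\bm x\sim\mathcal N(0,\|\bm x\|_2^2)$, the event has probability at most the density bound times the window width, roughly $\beta_1/(2\lambda)$ after integrating out the uniform $\tau_i$, hence $\mu_{\beta_1}(\bm x)\lesssim\beta_1/\lambda$ uniformly over $\mathcal K$; choosing $\beta_1=\lambda k/m$ yields $\sqrt{\mu_{\beta_1}(\bm x)}\lesssim\sqrt{k/m}$, so \eqref{3.3b} holds. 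Finally I would assemble: plugging $A_g^{(1)}\vee A_g^{(2)}\asymp 1$ into Theorem~\ref{thm2} gives $m\gtrsim k\mathscr L/\epsilon^2$, and unwinding the definition of $\mathscr L=\log\widetilde P$ with $\widetilde P$ polynomial in $(L,n)$ and the parameters $r,\sqrt m,1/\lambda,1/(\epsilon\wedge(k/m))$ collapses to the stated factor $\log\!\big(Lr\sqrt m\,n/(\lambda(\epsilon\wedge(k/m)))\big)$; the failure probability $m(P_0^{(1)}+P_0^{(2)})+m\exp(-\Omega(n))+C\exp(-\Omega(k))$ becomes $2m\exp(-cn)+m\exp(-\Omega(k))$. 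The main obstacle I anticipate is not any single verification but the bookkeeping in step two: getting the exponential-decay estimate for $\rho(\bm x)$ sharp enough, uniformly over the (uncountable, but low-metric-entropy) set $\mathcal K$, and confirming that the choice $\lambda\asymp R\sqrt{\log m}$ is simultaneously large enough to kill $\rho$ and small enough that the $1/\lambda$ inside the logarithm does not spoil the claimed sample complexity — essentially checking that all the $\lesssim\sqrt{k/m}$ budgets in Assumption~\ref{assump4} are mutually consistent with a single $\beta_1$ and a single $\lambda$.
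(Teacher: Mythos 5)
Your proposal is correct and follows essentially the same route as the paper: verify $(B_0,L_0,\beta_0)=(2,0,\infty)$, use the dithered scaling lemma (Lemma~\ref{mis1dither}, not Lemma~\ref{uniformdither} as you cite --- that one is for the uniform quantizer) to get $\rho(\bm{x})=O(m^{-9})$, set $A_g^{(1)}\asymp A_g^{(2)}\asymp 1$, $U_g^{(1)}\asymp\sqrt{n}$ via $\|\bm{a}\|_2=O(\sqrt{n})$, bound $\mu_\beta(\bm{x})\leq\beta/\lambda$ by conditioning on $\bm{a}$, and take $\beta_1=\lambda k/m$. The only cosmetic difference is that you invoke a net argument for $\sup_{\bm{x}\in\mathcal{K}}|\bm{a}^\top\bm{x}|$ where the paper's Cauchy--Schwarz bound $\lambda^{-1}\|\bm{a}\|_2\|\bm{x}\|_2\lesssim\sqrt{n}$ suffices.
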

\begin{rem}
To our knowledge, the only related prior result is in \cite[Theorem 3.2]{qiu2020robust}. However, their result is restricted to ReLU networks. By contrast, we deal with the more general Lipschitz generative models; by specializing our result to the ReLU network that is typically  $(n^{\Theta(d)})$-Lipschitz \cite{bora2017compressed} ($d$ is the number of layers), our error rate coincides with theirs up to a logarithmic factor. Additionally, as already mentioned in the Introduction Section, our result can be generalized to a sensing vector with an unknown covariance matrix, unlike theirs which is restricted to isotropic sensing vectors. The advantage of their result is in allowing sub-exponential sensing vectors. 
\end{rem}
\noindent{\textbf{(C) Lipschitz-continuous SIM with generative prior.}} Assume that any realization of $f$ is unconditionally $\hat{L}$-Lipschitz, which implies Assumption \ref{assumption2} with $(B_0,L_0,\beta_0)=(0,\hat{L},\infty)$. We further assume $\mathbbm{P}(f(0)\leq \hat{B})\geq 1-P_0'$ for some $(\hat{B},P_0')$. 
Because the norm of $\bm{x^*}$ is absorbed into the unknown $f(\cdot)$, we assume $\mathcal{K}\subset \mathbb{S}^{n-1}$. We set $\beta=0$ so that $f_{i,\beta}=f_i$. We introduce the quantities $\mu = \mathbbm{E}[f(g)g],\psi=\|f(g)\|_{\psi_2},~\text{where }g\sim \mathcal{N}(0,1)$. We choose $T=\mu$ and set parameters in Assumption \ref{assump3} as $A_g^{(1)}\asymp \psi+\mu,~U_g^{(1)}\asymp (\hat{L}+\mu)\sqrt{n}+\hat{B},~P_0^{(1)}\asymp P_0'+\exp(-\Omega(n)),~A_g^{(2)}\asymp \psi+\mu, ~U_g^{(2)}=0,~P_0^{(2)}=0$. Now we are ready to apply Theorem \ref{thm2} to this model. We obtain:
\begin{cor}\label{cor3}
    Consider Assumption \ref{assump1} with $\hat{L}$-Lipschitz $f$,   suppose that $\mathbbm{P}(f(0)\leq \hat{B})\geq 1-P_0'$, and define the parameters $\mu=\mathbbm{E}[f(g)g]$, $\psi=\|f(g)\|_{\psi_2}$ with $g\sim \mathcal{N}(0,1)$. Let $\epsilon\in (0,1)$ be any given recovery accuracy. If $m\gtrsim  \frac{(\mu+\psi)k}{\epsilon^2}\log \left( \frac{Lr\sqrt{m}[n(\mu+\epsilon)(\hat{L}+\mu)+\sqrt{n}\mu\hat{B}+\psi]}{(\mu+\psi)\epsilon}\right)$, then with probability at least $1-2m\exp(-cn)-mP_0'-c_1\exp(-\Omega(k))$ on a single draw of $(\bm{a}_i,f_i)_{i=1}^m$, we have the uniform signal recovery guarantee $\|\bm{\hat{x}}-\mu\bm{x^*}\|_2\leq \epsilon$  for all $\bm{x^*}\in \mathcal{K}$, where $\bm{\hat{x}}$ is the solution to (\ref{2.1}) with $\bm{y}=\bm{f}(\bm{Ax^*})$ and $T=\mu$.
\end{cor}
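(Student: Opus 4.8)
The plan is to obtain Corollary~\ref{cor3} as a direct specialization of Theorem~\ref{thm2}: I would fix the parameter choices announced just before the corollary, check that they verify Assumptions~\ref{assump1}--\ref{assump4} (Assumption~\ref{assump1} being part of the hypothesis, and $\mathcal{K}\subset\mathbb{S}^{n-1}$ a standing assumption here, so that $\bm{a}^\top\bm{x}\sim\mathcal{N}(0,1)$ for every $\bm{x}\in\mathcal{K}$), and then read off the conclusion, simplifying the generic logarithmic factor $\mathscr{L}$ of~(\ref{B.20}) into the explicit argument displayed in the statement.

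First, since every realization of $f$ is unconditionally $\hat{L}$-Lipschitz, its set of discontinuities $\mathscr{D}_{f_i}$ is empty, so Assumption~\ref{assumption2} holds with $(B_0,L_0,\beta_0)=(0,\hat{L},\infty)$; as there is nothing to repair I would take $\beta=0$, so that $f_{i,\beta}=f_i$ and, in the notation of~(\ref{xivarepi}), $\xi_{i,0}(a)=f_i(a)-\mu a$ while $\varepsilon_{i,0}\equiv 0$. The $\varepsilon$-part of Assumption~\ref{assump3} is then vacuous ($A_g^{(2)}$ arbitrary, $U_g^{(2)}=0$, $P_0^{(2)}=0$). For the $\xi$-part I would use $\bm{a}^\top\bm{x}\sim\mathcal{N}(0,1)$ to get $\|\xi_{i,0}(\bm{a}^\top\bm{x})\|_{\psi_2}\le\|f_i(g)\|_{\psi_2}+\mu\|g\|_{\psi_2}\lesssim\psi+\mu$, hence $A_g^{(1)}\asymp\psi+\mu$; and for the uniform tail bound I would combine $|f_i(t)|\le|f_i(0)|+\hat{L}|t|$ with $\sup_{\bm{x}\in\mathcal{K}}|\bm{a}^\top\bm{x}|\le\|\bm{a}\|_2\lesssim\sqrt{n}$ (Gaussian norm concentration) and the assumed tail $\mathbbm{P}(f_i(0)\le\hat{B})\ge 1-P_0'$ to obtain $U_g^{(1)}\asymp(\hat{L}+\mu)\sqrt{n}+\hat{B}$ and $P_0^{(1)}\asymp P_0'+\exp(-\Omega(n))$.

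Next I would verify Assumption~\ref{assump4}. The one genuine computation is that the target mismatch~(\ref{mismatch}) vanishes for $T=\mu$: for $\bm{a}\sim\mathcal{N}(0,\bm{I}_n)$ and $\|\bm{x}\|_2=1$, rotation invariance of the Gaussian (equivalently, Stein's identity applied coordinatewise) gives $\mathbbm{E}[f_i(\bm{a}^\top\bm{x})\bm{a}]=\mathbbm{E}[f(g)g]\,\bm{x}=\mu\bm{x}$ with $g\sim\mathcal{N}(0,1)$, so $\rho(\bm{x})\equiv 0$ and~(\ref{3.3a}) holds with room to spare; this is precisely the SIM scaling computation behind Lemmas~\ref{1bitscaling}--\ref{uniformdither}. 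Since $\mathscr{D}_{f_i}=\varnothing$, the quantity $\mu_{\beta_1}(\bm{x})$ in~(\ref{approxiexpe}) is $0$ for every $\bm{x}$ and every $\beta_1>0$, and $B_0=0$, so the left-hand side of~(\ref{3.3b}) is $0$ and the condition holds for any small $\beta_1\in(0,\beta_0/2)$. At this point all hypotheses of Theorem~\ref{thm2} are in force with $A_g^{(1)}\vee A_g^{(2)}\asymp\mu+\psi$, so the sample requirement $m\gtrsim(A_g^{(1)}\vee A_g^{(2)})^2 k\mathscr{L}/\epsilon^2$ and the failure probability $1-m(P_0^{(1)}+P_0^{(2)})-m\exp(-\Omega(n))-C\exp(-\Omega(k))$ become, after substituting the parameters above into~(\ref{B.20}) and simplifying the logarithm, exactly the statement of the corollary, with $\|\bm{\hat{x}}-\mu\bm{x^*}\|_2\le\epsilon$ being Theorem~\ref{thm2}'s conclusion for $T=\mu$.

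I do not expect a real obstacle inside this corollary: the heavy machinery --- the product-process concentration through Theorem~\ref{thm1} and the Lipschitz-approximation argument --- all lives in the proof of Theorem~\ref{thm2}, which I am taking as given, and here $f$ is already Lipschitz so the approximation is trivial ($\beta=0$). The two points that need a little care are (i) the vanishing-mismatch identity $\mathbbm{E}[f(\bm{a}^\top\bm{x})\bm{a}]=\mu\bm{x}$, which is what forces the choice $T=\mu$, and (ii) making the bound on $\sup_{\bm{x}\in\mathcal{K}}|\xi_{i,0}(\bm{a}^\top\bm{x})|$ \emph{uniform} over $\mathcal{K}$ rather than pointwise, which is why I dominate $\sup_{\bm{x}\in\mathcal{K}}|\bm{a}^\top\bm{x}|$ by $\|\bm{a}\|_2$ and invoke the tail bound on $f_i(0)$.
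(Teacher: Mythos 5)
Your proposal is correct and follows essentially the same route as the paper's own parameter-selection argument in Appendix E.3: set $(B_0,L_0,\beta_0)=(0,\hat L,\infty)$, take $\beta=0$ so $\varepsilon_{i,\beta}\equiv 0$, bound $\|\xi_{i,0}(\bm{a}^\top\bm{x})\|_{\psi_2}\lesssim \psi+\mu$ and $\sup_{\bm{x}\in\mathcal{K}}|\xi_{i,0}(\bm{a}^\top\bm{x})|\lesssim(\hat L+\mu)\sqrt{n}+\hat B$ via $\|\bm{a}\|_2\lesssim\sqrt{n}$ and the tail on $f_i(0)$, use rotation invariance to get $\rho\equiv 0$ with $T=\mu$ (the paper's Lemma~\ref{missim}), note (\ref{3.3b}) is vacuous, and invoke Theorem~\ref{thm2}. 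The only cosmetic difference is that the paper explicitly sets $A_g^{(2)}\asymp\psi+\mu$ rather than leaving it arbitrary, which does not affect the conclusion.
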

\begin{rem}
    While the main result of \cite{liu2020generalized} is non-uniform, it was noted in \cite[Section 5]{liu2020generalized} that a similar uniform error rate can be established for any deterministic $1$-Lipschitz $f$. Our result here is more general in that the $\hat{L}$-Lipschitz $f$ is possibly random. Note that randomness on $f$ is significant because it provides much more flexibility (e.g., additive random noise).
\end{rem}
\begin{rem}
    \label{pratical}
     For SIM with unknown $f_i$ it may seem impractical to use (\ref{2.1}) as it requires $\mu=\mathbbm{E}[f(g)g]$ where $g\sim \mathcal{N}(0,1)$. However, by assuming $\mu\bm{x^*}\in \mathcal{K}=G(\mathbb{B}_2^k(r))$ as in \cite{liu2020generalized}, which is natural for sufficiently expressive $G(\cdot)$, we can simply use $\bm{x}\in \mathcal{K}$ as constraint in (\ref{2.1}). Our Corollary \ref{cor3} remains valid in this case under some inessential changes of $\log \mu$ factors in the sample complexity.  
\end{rem}
\noindent{\textbf{(D) Uniformly quantized GCS with dithering.}} The uniform quantizer with resolution $\delta>0$ is defined as $\mathcal{Q}_\delta(a)=\delta\big(\lfloor\frac{a}{\delta}\rfloor+\frac{1}{2}\big)$ for $a\in\mathbb{R}$. Using dithering $\tau_i\sim \mathscr{U}[-\frac{\delta}{2},\frac{\delta}{2}]$, we suppose that the observations are  $y_i=\mathcal{Q}_\delta(\bm{a}_i^\top\bm{x^*}+\tau_i)$. This satisfies Assumption \ref{assumption2} with $(B_0,L_0,\beta_0)=(\delta,0,\delta)$. We set $T=1$ and take  parameters for Assumption \ref{assump3} as follows: $
    A_g^{(1)},U_g^{(1)},A_g^{(2)},U_g^{(2)}\asymp \delta$, and $P_0^{(1)}=P_0^{(2)}=0.$
    We take $\beta=\beta_1\asymp \frac{k\delta}{m}$ to confirm (\ref{3.3b}). With these parameters, we obtain the following from Theorem \ref{thm2}.
    \begin{cor}
        Consider Assumption \ref{assump1} with $f(\cdot)=\mathcal{Q}_\delta(\cdot+\tau)$, $\tau\sim \mathscr{U}[-\frac{\delta}{2},\frac{\delta}{2}]$ for some quantization resolution $\delta>0$. Let $\epsilon>0$ be any given recovery accuracy. If $m\gtrsim \frac{\delta^2k}{\epsilon^2}\log\left(\frac{Lr\sqrt{mn}}{\epsilon\wedge [k\delta/(m\sqrt{n})]}\right)$, then with probability at least $1-2m\exp(-cn)-c_1\exp(-\Omega(k))$ on a single draw of $(\bm{a}_i,\tau_i)_{i=1}^m$, we have the uniform recovery guarantee $\|\bm{\hat{x}}-\bm{x^*}\|_2\leq \epsilon$   for all $\bm{x^*}\in \mathcal{K}$, where $\bm{\hat{x}}$ is the solution to (\ref{2.1}) with $\bm{y}=\mathcal{Q}_\delta(\bm{Ax}+\bm{\tau})$ and $T=1$ (here, $\bm{\tau}=[\tau_1,\ldots,\tau_m]^\top$). 
    \end{cor}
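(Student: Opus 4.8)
The plan is to obtain this corollary as a specialization of Theorem~\ref{thm2}: all that is needed is to verify Assumptions~\ref{assumption2}--\ref{assump4} for $f(\cdot)=\mathcal{Q}_\delta(\cdot+\tau)$ with $\tau\sim\mathscr{U}[-\delta/2,\delta/2]$ and the claimed parameters, and then to simplify the logarithmic factor $\mathscr{L}$ from (\ref{B.20}) to the displayed expression. For a fixed realization of $\tau$, the map $a\mapsto\mathcal{Q}_\delta(a+\tau)$ is constant on each interval $[\delta j-\tau,\delta(j+1)-\tau)$ and has a jump of size exactly $\delta$ at every point of $\mathscr{D}_f=\delta\mathbb{Z}-\tau$, with consecutive discontinuities $\delta$ apart; hence Assumption~\ref{assumption2} holds with $(B_0,L_0,\beta_0)=(\delta,0,\delta)$. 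For Assumption~\ref{assump3}, since $\mathcal{Q}_\delta(b)-b\in[-\delta/2,\delta/2]$ and $|\tau_i|\le\delta/2$, the quantization error satisfies $|f_i(a)-Ta|=|\mathcal{Q}_\delta(a+\tau_i)-a|\le\delta$ deterministically (using $T=1$), while the Lipschitz approximation $f_{i,\beta}$ differs from $f_i$ by at most $B_0=\delta$ because it only interpolates across jumps of size $\le\delta$; therefore $|\xi_{i,\beta}(a)|\le 2\delta$ and $|\varepsilon_{i,\beta}(a)|\le\delta$ uniformly over $a$ and over all realizations, and since a bounded variable has $\psi_2$-norm comparable to its bound, this yields $A_g^{(1)},U_g^{(1)},A_g^{(2)},U_g^{(2)}\asymp\delta$ and $P_0^{(1)}=P_0^{(2)}=0$.

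The substance is in verifying Assumption~\ref{assump4} with $T=1$. For the target mismatch I would invoke the unbiasedness of subtractively-dithered quantization: $\mathbbm{E}_{\tau\sim\mathscr{U}[-\delta/2,\delta/2]}[\mathcal{Q}_\delta(b+\tau)]=b$ for every $b$, because $(b+\tau)/\delta$ sweeps an interval of length one. Since $\tau_i\perp\bm{a}_i$, this gives $\mathbbm{E}[f_i(\bm{a}_i^\top\bm{x})\bm{a}_i]=\mathbbm{E}[(\bm{a}_i^\top\bm{x})\bm{a}_i]=\bm{x}$, so $\rho(\bm{x})=0$ for all $\bm{x}\in\mathcal{K}$ and (\ref{3.3a}) holds trivially (this is Lemma~\ref{uniformdither}). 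For (\ref{3.3b}), with $L_0=0$ and $B_0=\delta$ it reduces to $\sup_{\bm{x}\in\mathcal{K}}\mu_{\beta_1}(\bm{x})\lesssim k/m$; the crucial point is that because $\tau$ is uniform over a \emph{full} quantization period, conditioning on $\bm{a}^\top\bm{x}=w$ makes $w+\tau$ uniform on an interval of length $\delta$, so $\mathbbm{P}_\tau(\operatorname{dist}(w+\tau,\delta\mathbb{Z})\le\beta/2)=\beta/\delta$ for \emph{every} $w$; hence $\mu_\beta(\bm{x})=\mathbbm{P}(\bm{a}^\top\bm{x}\in\delta\mathbb{Z}-\tau+[-\beta/2,\beta/2])=\beta/\delta$ exactly, uniformly over $\bm{x}\in\mathcal{K}$ and in particular with no dependence on $\|\bm{x}\|_2$. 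Choosing $\beta_1\asymp k\delta/m$ (which lies in $(0,\delta/2)=(0,\beta_0/2)$ for $m\gtrsim k$) then gives $\mu_{\beta_1}(\bm{x})\asymp k/m$, establishing (\ref{3.3b}).

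It then remains to apply Theorem~\ref{thm2}. Since $(A_g^{(1)}\vee A_g^{(2)})^2\asymp\delta^2$, the sample-complexity hypothesis becomes $m\gtrsim\delta^2 k\mathscr{L}/\epsilon^2$; with $P_0^{(1)}=P_0^{(2)}=0$ the failure probability collapses to $m\exp(-\Omega(n))+C\exp(-\Omega(k))$, matching the statement, and $\|\bm{\hat{x}}-T\bm{x^*}\|_2=\|\bm{\hat{x}}-\bm{x^*}\|_2\le\epsilon$ follows from $T=1$. The last step is to substitute $A_g^{(\cdot)},U_g^{(\cdot)}\asymp\delta$, $\beta_1\asymp k\delta/m$, $L_0=0$, $B_0=\delta$ into (\ref{B.20}) and check that $\mathscr{L}=\log\widetilde{P}$ reduces, up to constants inside the logarithm, to $\log\big(\tfrac{Lr\sqrt{mn}}{\epsilon\wedge[k\delta/(m\sqrt{n})]}\big)$, the $\sqrt{n}$ factors entering through the $O(L+n)$-type growth of the quantities in (\ref{B.20}) (e.g.\ $\|\bm{a}_i\|_2\approx\sqrt{n}$ and the metric-entropy estimates for $\mathcal{K}$). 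I expect the only genuinely delicate point to be the exact identity $\mu_\beta\equiv\beta/\delta$ — recognizing that the full-period dither is precisely what removes any dependence on $\|\bm{x^*}\|_2$, which would otherwise preclude recovering small-norm signals to accuracy $\epsilon\ll\delta$ — together with carefully matching the $\mathscr{L}$ bookkeeping to the displayed logarithm; everything else is a routine specialization of Theorem~\ref{thm2}.
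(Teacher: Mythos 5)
Your proposal is correct and follows essentially the same route as the paper: it verifies Assumption \ref{assumption2} with $(B_0,L_0,\beta_0)=(\delta,0,\delta)$, uses the unbiasedness of subtractive dithering (Lemma \ref{uniformdither}) to get $\rho\equiv 0$, bounds $|\xi_{i,\beta}|\le 2\delta$ and $|\varepsilon_{i,\beta}|\le\delta$ deterministically as in Lemma \ref{bounduniformquan}, computes $\mu_\beta(\bm{x})=\beta/\delta$ via the full-period dither, and takes $\beta_1\asymp k\delta/m$ before invoking Theorem \ref{thm2}. The only (harmless) difference is that you note the exact identity $\mu_\beta\equiv\beta/\delta$ where the paper records only the $O(\beta/\delta)$ bound.
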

    \begin{rem}
        While this dithered uniform quantized model has been widely studied in traditional CS (e.g., non-uniform recovery \cite{chen2022quantizing,thrampoulidis2020generalized}, uniform recovery \cite{xu2020quantized,genzel2022unified}), it has not been investigated in GCS even for non-uniform recovery. Thus, this is new to the best of our knowledge. 
    \end{rem} 
    {\color{black}
A simple extension to the noisy model $\bm{y}= \bm{f}(\bm{Ax^*})+\bm{\eta}$ where $\bm{\eta}\in \mathbb{R}^m$ has i.i.d. sub-Gaussian entries can be obtained by a fairly straightforward extension of our analysis; see Appendix \ref{noisylemma}.} 


\section{Proof Sketch}\label{sec:sketch}

To provide a sketch of our proof, we begin with the optimality condition $\|\bm{y}- \bm{A\hat{x}}\|^2_2\leq \|\bm{y}-\bm{A}(T\bm{x^*})\|^2_2$. We expand the square and plug in $\bm{y}=\bm{f}(\bm{Ax^*})$ to obtain \begin{equation}\label{2.2}
    \left\|\frac{\bm{A}}{\sqrt{m}}(\bm{\hat{x}}-T\bm{x^*})\right\|_2^2\leq \frac{2}{m}\big<\bm{f}(\bm{Ax^*})-T\bm{Ax^*},\bm{A}(\bm{\hat{x}}-T\bm{x^*})\big>.
\end{equation}
For the final goal $\|\bm{\hat{x}}-T\bm{x^*}\|_2\leq \epsilon$, up to rescaling, it is enough to prove $\|\bm{\hat{x}}-T\bm{x^*}\|_2\leq 3\epsilon$. We assume for convenience that $\|\bm{\hat{x}}-T\bm{x^*}\|_2> 2\epsilon$, without loss of generality. Combined with $\bm{\hat{x}}, T\bm{x^*}\in T\mathcal{K}$, we know $\bm{\hat{x}}-T\bm{x^*}\in \mathcal{K}^-_{\epsilon}$, where 
$\mathcal{K}^-_{\epsilon}:=(T\mathcal{K}^-)\cap\big(\mathbb{B}_2^n(2\epsilon)\big)^c ,~\mathcal{K}^-=\mathcal{K}-\mathcal{K}.$ We further define \begin{equation}
    \label{errset}
    (\mathcal{K}^-_{\epsilon})^*:= \big\{{\bm{z}}/{\|\bm{z}\|_2}:\bm{z}\in \mathcal{K}^-_{\epsilon}\big\}
\end{equation} where the normalized error lives, i.e. $\frac{\bm{\hat{x}}-T\bm{x^*}}{\|\bm{\hat{x}}-T\bm{x^*}\|_2}\in (\mathcal{K}^-_\epsilon)^*$. Our strategy is to establish a uniform lower bound  (resp., upper bound) for the left-hand side  (resp., the right-hand side) of (\ref{2.2}). We emphasize that these bounds must hold uniformly for all $\bm{x^*}\in \mathcal{K}$. 

It is relatively easy to use set-restricted eigenvalue condition (S-REC) \cite{bora2017compressed} to establish a uniform lower bound for the left-hand side of (\ref{2.2}), see Appendix \ref{sec:s-rec} for more details. 
It is significantly more challenging to derive an upper bound for the right-hand side of (\ref{2.2}). As the upper bound must hold uniformly for all $\bm{x^*}$, we first take the supremum over $\bm{x^*}$ and $\bm{\hat{x}}$ and   consider bounding the following: 
\begin{equation}\label{2.5}
   \begin{aligned}
        &\mathscr{R}:=\frac{1}{m}\big<\bm{f}(\bm{Ax^*})-T\bm{Ax^*},\bm{A}(\bm{\hat{x}}-T\bm{x^*})\big> \\&=\frac{1}{m}\sum_{i=1}^m\left(f_i(\bm{a}_i^\top\bm{x^*})-T\bm{a}_i^\top\bm{x^*}\right)\cdot\left(\bm{a}_i^\top[\bm{\hat{x}}-T\bm{x^*}]\right)\\&\leq \|\bm{\hat{x}}-T\bm{x^*}\|_2\cdot {\sup_{{\bm{x}\in \mathcal{K} }}\sup_{\bm{v}\in (\mathcal{K}^-_\epsilon)^*}\frac{1}{m}\sum_{i=1}^m\left(f_i(\bm{a}_i^\top\bm{x})-T\bm{a}_i^\top\bm{x}\right)\cdot\left(\bm{a}_i^\top\bm{v}\right)}:=\|\bm{\hat{x}}-T\bm{x^*}\|_2\cdot\mathscr{R}_u,
   \end{aligned}
\end{equation}
where $(\mathcal{K}^-_\epsilon)^*$ is defined in (\ref{errset}). Clearly, $\mathscr{R}_u$ is the supremum of a product process, whose factors are indexed by $\bm{x}\in \mathcal{K}$ and $\bm{v}\in (\mathcal{K}_\epsilon^-)^*$.  It is, in general, challenging to control a product process, and existing results often require both factors to satisfy a certain "sub-Gaussian increments" condition (e.g., \cite{mendelson2016upper,mendelson2017multiplier}). However, the first factor of $\mathscr{R}_u$ (i.e., $f_i(\bm{a}_i^\top\bm{x^*})-T\bm{a}_i^\top\bm{x^*}$) does not admit such a condition when $f_i$ is not continuous (e.g., the 1-bit model $f_i=\sign(\cdot)$). We will construct the Lipschitz approximation of $f_i$ to overcome this difficulty shortly in Section \ref{lipschitz}.

\begin{rem}
We note that these challenges stem from our pursuit of uniform recovery. In fact,  a non-uniform guarantee for SIM was presented in \cite[Theorem 1]{liu2020generalized}. In its proof, the key ingredient is \cite[Lemma 3]{liu2020generalized} that bounds $\mathscr{R}_u$ without the supremum on $\bm{x}$. This can be done as long as $f_i(\bm{a}_i^\top\bm{x^*})$ is sub-Gaussian, while the potential discontinuity of $f_i$ is totally unproblematic. \end{rem}

\subsection{Lipschitz Approximation}\label{lipschitz}

For any $x_0\in \mathscr{D}_{f_i}$ we define the one-sided limits as $f_i^-(x_0)=\lim_{x\to x_0^-}f_i(x)$ and $f_i^+(x_0)=\lim_{x\to x_0^+}f_i(x)$, and write  their average as $f_i^a(x_0)=\frac{1}{2}(f_i^-(x_0)+f_i^+(x_0))$. Given any approximation accuracy $\beta\in (0,\frac{\beta_0}{2})$, we construct the Lipschitz continuous function   $f_{i,\beta}$ as:\begin{equation}\label{approxi}
    f_{i,\beta}(x)= \begin{cases}
        ~~~~~~~~~~~~~~~~~~~~~~~f_i(x)~~~~~~~~~~~~~~~~,~~~~~~~~~~~~~\text{if }x\notin \mathscr{D}_{f_i}+ [-\frac{\beta}{2},\frac{\beta}{2}]\\
        f^a_i(x_0)-\frac{2[f_i^a(x_0)-f_i(x_0-\frac{\beta}{2})](x_0-x)}{\beta},~~~~\text{if } \exists x_0\in \mathscr{D}_{f_i}~\text{s.t.}~x\in [x_0-\frac{\beta}{2},x_0]\\f_i^a(x_0)+\frac{2[f_i(x_0+\frac{\beta}{2})-f_i^a(x_0)](x-x_0)}{\beta},
        ~~~~\text{if }\exists x_0\in \mathscr{D}_{f_i},~\text{s.t.}~x\in [x_0, x_0+\frac{\beta}{2}]
    \end{cases}. 
\end{equation}
We have defined the approximation error $\varepsilon_{i,\beta}(\cdot)=f_{i,\beta}(\cdot)-f_i(\cdot)$ in Assumption \ref{assump3}. An important observation is that both $f_{i,\beta}$ and $|\varepsilon_{i,\beta}|$ are Lipschitz continuous (see Lemma \ref{lem1} below). Here, it is crucial to consider $|\varepsilon_{i,\beta}|$ rather than $\varepsilon_{i,\beta}$ as the latter is not continuous; see Figure \ref{fig1} for an intuitive graphical illustration and more explanations in Appendix \ref{sec:lipappro}.   
\begin{lem}\label{lem1}
With $B_0,L_0,\beta_0$ given in Assumption \ref{assumption2}, for any $\beta\in (0,\frac{\beta_0}{2})$,
$f_{i,\beta}$ is $\big(L_0+\frac{B_0}{\beta}\big)$-Lipschitz over $\mathbb{R}$, and $|\varepsilon_{i,\beta}|$ is $\big(2L_0+\frac{B_0}{\beta}\big)$-Lipschitz over $\mathbb{R}$.
\end{lem}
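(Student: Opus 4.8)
\textbf{Proof proposal for Lemma \ref{lem1}.}

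The plan is to verify the two Lipschitz claims by a case analysis dictated directly by the piecewise definition (\ref{approxi}), exploiting the separation $\beta_0$ of the jump set so that the modified intervals $[x_0-\tfrac{\beta}{2},x_0+\tfrac{\beta}{2}]$, $x_0\in\mathscr{D}_{f_i}$, are pairwise disjoint (this uses $\beta<\tfrac{\beta_0}{2}$, hence $\beta<\beta_0$). First I would record the elementary fact that to check a function $h:\mathbb{R}\to\mathbb{R}$ is $M$-Lipschitz it suffices to (i) show $h$ is $M$-Lipschitz on each piece of a locally finite closed cover of $\mathbb{R}$ by intervals, and (ii) check continuity at the shared endpoints; since a function that is $M$-Lipschitz on each of two abutting closed intervals is automatically $M$-Lipschitz on their union, this reduces the global statement to a per-interval estimate plus matching of one-sided values.

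For $f_{i,\beta}$: on any maximal interval disjoint from $\mathscr{D}_{f_i}+[-\tfrac{\beta}{2},\tfrac{\beta}{2}]$ we have $f_{i,\beta}=f_i$, which is $L_0$-Lipschitz by the third bullet of Assumption \ref{assumption2} (the interval meets no discontinuity). On a ``left half'' $[x_0-\tfrac{\beta}{2},x_0]$ the function is affine with slope $\tfrac{2[f_i^a(x_0)-f_i(x_0-\beta/2)]}{\beta}$; bounding the numerator by $|f_i^a(x_0)-f_i^-(x_0)|+|f_i^-(x_0)-f_i(x_0-\tfrac{\beta}{2})|\le \tfrac{B_0}{2}+L_0\cdot\tfrac{\beta}{2}$ (using the jump bound $B_0$ for the first term and $L_0$-Lipschitzness of $f_i$ on $(x_0-\tfrac{\beta}{2},x_0)$ for the second, where $|f_i^a(x_0)-f_i^-(x_0)|=\tfrac12|f_i^+(x_0)-f_i^-(x_0)|\le\tfrac{B_0}{2}$) gives slope magnitude $\le L_0+\tfrac{B_0}{\beta}$; the ``right half'' $[x_0,x_0+\tfrac{\beta}{2}]$ is symmetric. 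Finally I would check the three matching conditions: at $x=x_0\pm\tfrac{\beta}{2}$ the affine pieces agree with $f_i$ (plug in: the left piece at $x_0-\tfrac{\beta}{2}$ returns $f_i(x_0-\tfrac{\beta}{2})$, at $x_0$ returns $f_i^a(x_0)$; likewise on the right), and at $x=x_0$ both halves return $f_i^a(x_0)$, so $f_{i,\beta}$ is continuous and hence globally $(L_0+\tfrac{B_0}{\beta})$-Lipschitz.

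For $|\varepsilon_{i,\beta}|$: outside the modified intervals $\varepsilon_{i,\beta}\equiv 0$, so $|\varepsilon_{i,\beta}|\equiv 0$ there; on each half-interval, $\varepsilon_{i,\beta}=f_{i,\beta}-f_i$ is a difference of an affine function (slope $\le L_0+\tfrac{B_0}{\beta}$ in magnitude, as above) and the $L_0$-Lipschitz $f_i$ restricted to a subinterval disjoint from $\mathscr{D}_{f_i}$ except possibly at the single endpoint $x_0$ — but on the open piece $f_i$ is $L_0$-Lipschitz and extends continuously to the relevant endpoint — so $\varepsilon_{i,\beta}$ is $(2L_0+\tfrac{B_0}{\beta})$-Lipschitz on each piece, and therefore so is $|\varepsilon_{i,\beta}|$ since $x\mapsto|x|$ is $1$-Lipschitz. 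The only subtlety is continuity of $|\varepsilon_{i,\beta}|$ at the interface points $x_0\pm\tfrac{\beta}{2}$ and at $x_0$: at $x_0\pm\tfrac{\beta}{2}$ we have $\varepsilon_{i,\beta}=0$ on both sides (on the outer side trivially; on the inner side because $f_{i,\beta}$ matches $f_i$ there, as just checked), so $|\varepsilon_{i,\beta}|$ is continuous; at $x_0$, $\varepsilon_{i,\beta}$ itself may jump — this is exactly where the sign wrapper matters — but $|\varepsilon_{i,\beta}(x_0^-)|=|f_i^a(x_0)-f_i^-(x_0)|=|f_i^a(x_0)-f_i^+(x_0)|=|\varepsilon_{i,\beta}(x_0^+)|$ (using $f_i(x_0)=f_i^+(x_0)$ by the convention in Assumption \ref{assumption2} for the right limit, and symmetry of the average about $f_i^\pm(x_0)$), so $|\varepsilon_{i,\beta}|$ is continuous at $x_0$ and hence globally $(2L_0+\tfrac{B_0}{\beta})$-Lipschitz.

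The main obstacle is conceptual rather than computational: getting the continuity bookkeeping at $x_0$ right for $|\varepsilon_{i,\beta}|$, i.e.\ seeing that although $\varepsilon_{i,\beta}$ genuinely has a jump at each $x_0\in\mathscr{D}_{f_i}$ (of size $|f_i^+(x_0)-f_i^-(x_0)|$), its absolute value does not, because the two one-sided limits of $\varepsilon_{i,\beta}$ at $x_0$ are $f_i^a(x_0)-f_i^\mp(x_0)=\mp\tfrac12(f_i^+(x_0)-f_i^-(x_0))$, which are negatives of each other and thus have equal modulus. Everything else is routine affine-slope estimation together with the bookkeeping that the separation hypothesis $\beta<\tfrac{\beta_0}{2}$ keeps the modified neighborhoods disjoint so the case analysis is exhaustive and non-overlapping.
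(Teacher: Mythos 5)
Your proposal is correct and follows essentially the same route as the paper's proof: a case analysis over the pieces of (\ref{approxi}), the identical triangle-inequality bound $|f_i^a(x_0)-f_i(x_0-\tfrac{\beta}{2})|\le \tfrac{B_0}{2}+\tfrac{L_0\beta}{2}$ for the affine slopes, and the observation that $|\varepsilon_{i,\beta}|$ adds at most $L_0$ to the Lipschitz constant on each piece. The only difference is that you spell out the continuity matching at the interface points (in particular that the one-sided limits of $\varepsilon_{i,\beta}$ at $x_0$ are negatives of each other, so $|\varepsilon_{i,\beta}|$ is continuous there), which the paper asserts as "straightforward to check" and discusses separately in Appendix B.2.
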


\begin{figure}[ht]
    \centering
    \includegraphics[scale = 0.51]{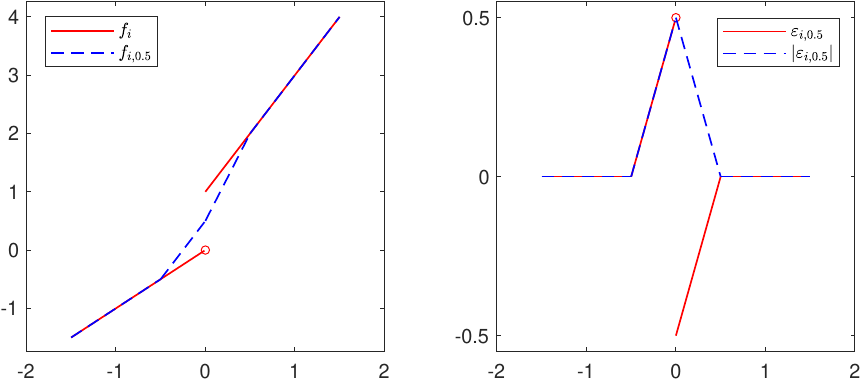}
    \caption{(Left): $f_i$ and its approximation $f_{i,0.5}$; (Right): approximation error $\varepsilon_{i,0.5},|\varepsilon_{i,0.5}|$.}
    \label{fig1}
\end{figure}

\subsection{Bounding the product process}

We now present our technique to bound $\mathscr{R}_u$. Recall that $\xi_{i,\beta}(a)$ and $\varepsilon_{i,\beta}(a)$ were defined in (\ref{xivarepi}). By Lemma \ref{lem1}, $\xi_{i,\beta}$ is $\big(L_0+T+\frac{B_0}{\beta}\big)$-Lipschitz. Now we use $f_i(a)-Ta=\xi_{i,\beta}(a)-\varepsilon_{i,\beta}$ to decompose $\mathscr{R}_u$ (in the following, we sometimes shorten "$\sup_{\bm{x}\in \mathcal{K}}\sup_{\bm{v}\in (\mathcal{K}_\epsilon^-)^*}$" as "$\sup_{\bm{x},\bm{v}}$"):
\begin{equation}
\begin{aligned}
\label{2.7}
\mathscr{R}_u\leq\underbrace{\sup_{\bm{x},\bm{v}}\frac{1}{m}\sum_{i=1}^m\xi_{i,\beta}(\bm{a}_i^\top\bm{x})\cdot\left(\bm{a}_i^\top\bm{v}\right) }_{\mathscr{R}_{u1}}+\underbrace{\sup_{\bm{x},\bm{v}}\frac{1}{m}\sum_{i=1}^m \left|\varepsilon_{i,\beta}(\bm{a}_i^\top\bm{x})\right|\left|\bm{a}_i^\top\bm{v}\right|}_{\mathscr{R}_{u2}}.
\end{aligned}
\end{equation}
It remains to control $\mathscr{R}_{u1}$ and $\mathscr{R}_{u2}$. By the Lipschitz continuity of $\xi_{i,\beta}$ and $|\varepsilon_{i,\beta}|$, the factors of $\mathscr{R}_{u1}$ and $\mathscr{R}_{u2}$ admit sub-Gaussian increments, so it is natural to first center them and then invoke the concentration inequality for product process due to Mendelson \cite[Theorem 1.13]{mendelson2016upper}, which we restate in Lemma \ref{mendel} (Appendix \ref{append:secA}). However, this does not produce a tight bound and would eventually require $\tilde{O}({k}/{\epsilon^4})$ to achieve a uniform $\ell_2$-error of $\epsilon$, as is the case in \cite[Section 4]{genzel2022unified}. 

In fact, Lemma \ref{mendel} is based on {\it Gaussian width} and hence blind to the fact that $\mathcal{K},(\mathcal{K}^-_\epsilon)^*$ here have low  {\it metric entropy} (Lemma \ref{entropy}). By characterizing the low intrinsic dimension of index sets via metric entropy, we develop the following concentration inequality that can produce tighter bound for $\mathscr{R}_{u1}$ and $\mathscr{R}_{u2}$. This also allows us to derive uniform error rates sharper than those in \cite[Section 4]{genzel2022unified}. 
\begin{theorem}\label{thm1}
Let $g_{\bm{x}}=g_{\bm{x}}(\bm{a})$ and $h_{\bm{v}}=h_{\bm{v}}(\bm{a})$ be stochastic processes indexed by $\bm{x}\in \mathcal{X}\subset \mathbb{R}^{p_1},\bm{v}\in \mathcal{V}\subset \mathbb{R}^{p_2}$, both defined with respect to   a common random variable $\bm{a}$. Assume that: 
\begin{itemize}[leftmargin=5ex,topsep=-0.05ex]
 \setlength\itemsep{-0.5em}
\item \textbf{(A1.)} $g_{\bm{x}}(\bm{a}),~h_{\bm{v}}(\bm{a})$ are sub-Gaussian for some $(A_g,A_h)$ and admit sub-Gaussian increments regarding $\ell_2$ distance for some $(M_g,M_h)$:
\begin{equation}
   \begin{aligned}\label{33.6}
        &\|g_{\bm{x}}(\bm{a})-g_{\bm{x}'}(\bm{a})\|_{\psi_2}\leq M_g\|\bm{x}-\bm{x}'\|_2,~\|g_{\bm{x}}(\bm{a})\|_{\psi_2}\leq A_g,~\forall~\bm{x},\bm{x}'\in\mathcal{X};\\
        &\|h_{\bm{v}}(\bm{a})-h_{\bm{v}'}(\bm{a})\|_{\psi_2}\leq M_h\|\bm{v}-\bm{v}'\|_2,~\|h_{\bm{v}}(\bm{a})\|_{\psi_2}\leq A_h,~\forall~\bm{v},\bm{v}'\in\mathcal{V}.
   \end{aligned}
\end{equation}
    \item \textbf{(A2.)} On a single draw of $\bm{a}$, for some $(L_g,U_g,L_h,U_h)$ the following events   simultaneously hold with probability at least $1-P_0$:  
    \begin{equation}
        \begin{aligned}\label{event}
            &|g_{\bm{x}}(\bm{a})-g_{\bm{x}'}(\bm{a})|\leq L_g\|\bm{x}-\bm{x}'\|_2,~|g_{\bm{x}}(\bm{a})| \leq U_g,~\forall~\bm{x},\bm{x}'\in\mathcal{X};\\
            &|h_{\bm{v}}(\bm{a})-h_{\bm{v}'}(\bm{a})|\leq L_h\|\bm{v}-\bm{v}'\|_2,~|h_{\bm{v}}(\bm{a})|\leq U_h,~\forall~\bm{v},\bm{v}'\in\mathcal{V}.
        \end{aligned}
    \end{equation}
\end{itemize}
Let $\bm{a}_1,...,\bm{a}_m$ be i.i.d. copies of $\bm{a}$, and introduce the shorthand $S_{g,h}=L_gU_h+M_gA_h$ and $T_{g,h}=L_hU_g+M_hA_g$. If $ m\gtrsim  \mathscr{H}\left(\mathcal{X},\frac{A_gA_h}{\sqrt{m}S_{g,h}}\right)+\mathscr{H}\left(\mathcal{V},\frac{A_gA_h}{\sqrt{m}T_{g,h}}\right)$, where $\mathscr{H}(\cdot,\cdot)$ is the metric entropy defined in Definition \ref{entropydefi}, then with probability at least $1-mP_0-2\exp\big[-\Omega\big(\mathscr{H}(\mathcal{X},\frac{A_gA_h}{\sqrt{m}S_{g,h}})+\mathscr{H}(\mathcal{V}, \frac{A_gA_h}{\sqrt{m}T_{g,h}})\big)\big]$ we have $I\lesssim \frac{A_gA_h}{\sqrt{m}}\sqrt{{\mathscr{H}(\mathcal{X},\frac{A_gA_h}{\sqrt{m}S_{g,h}})+\mathscr{H}(\mathcal{V},\frac{A_gA_h}{\sqrt{m}T_{g,h}})}}$, where $I:=\sup_{ {\bm{x}\in \mathcal{X}}}\sup_{\bm{v}\in \mathcal{V}} { \left|\frac{1}{m}\sum_{i=1}^m\big(g_{\bm{x}}(\bm{a}_i)h_{\bm{v}}(\bm{a}_i)-\mathbbm{E}[g_{\bm{x}}(\bm{a}_i)h_{\bm{v}}(\bm{a}_i)]\big)\right|}$ is the supremum of a product process.
\end{theorem}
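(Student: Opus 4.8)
The plan is to run a single-scale net (discretization) argument rather than the multi-scale generic chaining of \cite{mendelson2016upper}; the improvement comes from choosing the net radii to balance the discretization error against the target scale $A_gA_h/\sqrt{m}$, which is affordable precisely because $\mathcal{X},\mathcal{V}$ have low metric entropy. Concretely, I would set $\eta_1=\frac{A_gA_h}{\sqrt{m}\,S_{g,h}}$ and $\eta_2=\frac{A_gA_h}{\sqrt{m}\,T_{g,h}}$, take minimal $\eta_1$- and $\eta_2$-nets $\mathcal{X}_0\subset\mathcal{X}$, $\mathcal{V}_0\subset\mathcal{V}$ so that $\log|\mathcal{X}_0|=\mathscr{H}(\mathcal{X},\eta_1)$ and $\log|\mathcal{V}_0|=\mathscr{H}(\mathcal{V},\eta_2)$, and write $H:=\mathscr{H}(\mathcal{X},\eta_1)+\mathscr{H}(\mathcal{V},\eta_2)$ for the total metric entropy.

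First I would control the process on the net. For fixed $\bm{x}\in\mathcal{X}_0$, $\bm{v}\in\mathcal{V}_0$, the summands $g_{\bm{x}}(\bm{a}_i)h_{\bm{v}}(\bm{a}_i)$ are products of sub-Gaussian variables, hence sub-exponential with $\psi_1$-norm $\lesssim A_gA_h$, so Bernstein's inequality gives
\[
\mathbbm{P}\!\left(\Big|\tfrac{1}{m}\sum_{i=1}^m\big(g_{\bm{x}}(\bm{a}_i)h_{\bm{v}}(\bm{a}_i)-\mathbbm{E}[g_{\bm{x}}(\bm{a}_i)h_{\bm{v}}(\bm{a}_i)]\big)\Big|\geq t\right)\leq 2\exp\!\Big(-cm\min\big(\tfrac{t^2}{A_g^2A_h^2},\tfrac{t}{A_gA_h}\big)\Big).
\]
Taking $t\asymp\frac{A_gA_h}{\sqrt{m}}\sqrt{H}$ and using the hypothesis $m\gtrsim H$ (which keeps the sub-Gaussian branch of Bernstein active, i.e.\ $t\lesssim A_gA_h$), a union bound over the $e^{H}$ pairs of the net shows that, with probability at least $1-2\exp(-\Omega(H))$, $\sup_{\bm{x}\in\mathcal{X}_0,\bm{v}\in\mathcal{V}_0}\big|\tfrac{1}{m}\sum_i(g_{\bm{x}}(\bm{a}_i)h_{\bm{v}}(\bm{a}_i)-\mathbbm{E}[g_{\bm{x}}(\bm{a}_i)h_{\bm{v}}(\bm{a}_i)])\big|\lesssim\frac{A_gA_h}{\sqrt{m}}\sqrt{H}$.

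Next I would transfer this bound from the net to all of $\mathcal{X}\times\mathcal{V}$. Given $\bm{x}\in\mathcal{X}$, $\bm{v}\in\mathcal{V}$, pick net points with $\|\bm{x}-\bm{x}_0\|_2\leq\eta_1$, $\|\bm{v}-\bm{v}_0\|_2\leq\eta_2$ and split $g_{\bm{x}}h_{\bm{v}}-g_{\bm{x}_0}h_{\bm{v}_0}=(g_{\bm{x}}-g_{\bm{x}_0})h_{\bm{v}}+g_{\bm{x}_0}(h_{\bm{v}}-h_{\bm{v}_0})$. On the event of probability at least $1-mP_0$ that (\ref{event}) holds at every $\bm{a}_i$ (union bound over $i$), the empirical average of this difference is at most $L_gU_h\eta_1+U_gL_h\eta_2$ in absolute value; and by Cauchy--Schwarz together with $\|\cdot\|_{L^2}\lesssim\|\cdot\|_{\psi_2}$ and (\ref{33.6}), its expectation is $\lesssim M_gA_h\eta_1+A_gM_h\eta_2$. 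Adding the two, the discretization error of the \emph{centered} product process is $\lesssim S_{g,h}\eta_1+T_{g,h}\eta_2=\frac{2A_gA_h}{\sqrt{m}}$ by the choice of $\eta_1,\eta_2$. Combining with the net bound via the triangle inequality then yields $I\lesssim\frac{A_gA_h}{\sqrt{m}}\sqrt{H}+\frac{A_gA_h}{\sqrt{m}}\lesssim\frac{A_gA_h}{\sqrt{m}}\sqrt{H}$ on an event of probability at least $1-mP_0-2\exp(-\Omega(H))$, which is the asserted estimate.

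Most of this is bookkeeping once the net radii are fixed; the single genuine idea is that one scale suffices and it must be tuned so that $S_{g,h}\eta_1$ and $T_{g,h}\eta_2$ each match $A_gA_h/\sqrt{m}$. I expect the discretization-transfer step to need the most care, since the error there must be controlled simultaneously for the empirical average and for its expectation, and it is precisely the deterministic high-probability bounds of (A2) that make the empirical part cheap --- so, unlike \cite{mendelson2016upper}, we need neither process-level sub-Gaussian increments nor an additional chaining sum. Verifying that the Bernstein tail survives the union bound over the net is exactly where the assumption $m\gtrsim H$ enters.
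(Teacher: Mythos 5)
Your proposal is correct and follows essentially the same route as the paper's proof: a single-scale net at radii $\eta_1\asymp A_gA_h/(\sqrt{m}S_{g,h})$, $\eta_2\asymp A_gA_h/(\sqrt{m}T_{g,h})$, Bernstein plus a union bound on the net (with $m\gtrsim H$ keeping the sub-Gaussian branch active), and a discretization transfer that bounds the empirical part of the increment via the high-probability event (A2) and its expectation via the sub-Gaussian increments (A1). The only cosmetic difference is that you invoke Cauchy--Schwarz with $\|\cdot\|_{L^2}\lesssim\|\cdot\|_{\psi_2}$ for the expectation term where the paper uses $\mathbbm{E}|X|=O(\|X\|_{\psi_1})$ together with the product-of-sub-Gaussians lemma; both yield the same bound.
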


\begin{rem}
    \label{rem1}
   We use $\mathscr{R}_{u2}$ as an example to illustrate the advantage of Theorem \ref{thm1} over Lemma \ref{mendel}. 
      The key step is on bounding the centered process$$\mathscr{R}_{u2,c}:=\sup_{{\bm{x}\in  \mathcal{K}}}\sup_{\bm{v}\in (\mathcal{K}^-_\epsilon)^*}\big\{ {|\varepsilon_{i,\beta}(\bm{a}_i^\top\bm{x})|} {|\bm{a}_i^\top\bm{v}|}-\mathbbm{E}[|\varepsilon_{i,\beta}(\bm{a}_i^\top\bm{x})||\bm{a}_i^\top\bm{v}|]\big\}.$$
      Let $g_{\bm{x}}(\bm{a}_i)=|\varepsilon_{i,\beta}(\bm{a}_i^\top\bm{x})|$ and $h_{\bm{v}}(\bm{a}_i)=|\bm{a}_i^\top\bm{v}|$, then one can use Theorem \ref{thm1} or Lemma \ref{mendel} to bound $\mathscr{R}_{u2,c}$. Note that $\|\bm{a}_i^\top\bm{v}\|_{\psi_2}= O(1)$ justifies the choice $A_h=O(1)$, and both $\mathscr{H}(\mathcal{K},\eta)$ and $\mathscr{H}((\mathcal{K}^-_\epsilon)^*,\eta)$ depend linearly on $k$ but only logarithmically on $\eta$ (Lemma \ref{entropy}), so Theorem \ref{thm1} could bound $\mathscr{R}_{u2,c}$ by $\tilde{O}\big(A_g\sqrt{{k}/{m}}\big)$ that depends on $M_g$ in a logarithmic manner. However, the bound produced by Lemma \ref{mendel} depends linearly on $M_g$; see term $\frac{M_gA_h\omega(\mathcal{K})}{\sqrt{m}}$ in (\ref{A.2}). From (\ref{33.6}), $M_g$ should be proportional to the Lipschitz constant of $|\varepsilon_{i,\beta}|$, which scales as $\frac{1}{\beta}$ (Lemma \ref{lem1}). The issue is that in many cases we need to take extremely small $\beta$ to guarantee that (\ref{3.3b}) holds true
      (e.g., we take $\beta\asymp {k}/{m}$ in 1-bit GCS). Thus, Lemma \ref{mendel} produces a worse bound compared to our Theorem \ref{thm1}. 
\end{rem}


\section{Conclusion}\label{sec:conclusion}

In this work, we built a unified framework for uniform signal recovery in nonlinear generative compressed sensing. We showed that using generalized Lasso, a sample size of $\tilde{O}(k/\epsilon^2)$ suffices to uniformly recover all $\bm{x}\in G(\mathbb{B}_2^k(r))$ up to an $\ell_2$-error of $\epsilon$. We specialized our main theorem to 1-bit  GCS with/without dithering, single index model, and uniformly quantized GCS, deriving uniform guarantees that are new or exhibit some advantages over existing ones. Unlike \cite{liu2020generalized}, our proof is free of any non-trivial embedding property. As part of our technical contributions, we constructed the Lipschitz approximation to handle potential discontinuity in the observation model, and also developed a concentration inequality to derive tighter bound for the product processes arising in the proof, allowing us to obtain a uniform error rate faster than \cite{genzel2022unified}. {\color{black} Possible future directions include extending our framework to handle the adversarial noise and representation error.} 

\smallskip
{\bf Acknowledgment.} J. Chen was supported by a Hong Kong PhD Fellowship from the Hong Kong Research Grants Council (RGC). J. Scarlett was supported by the Singapore National Research Foundation (NRF) under grant A-0008064-00-00. M. K. Ng was partially supported by the HKRGC GRF 
17201020, 17300021, CRF C7004-21GF and Joint NSFC-RGC N-HKU76921.

\bibliographystyle{myIEEEtran}
\bibliography{library}

\newpage
\appendix

{\centering
    {\huge \bf Supplementary Material}
    
    {\Large \bf A Unified Framework for Uniform Signal Recovery in Nonlinear Generative Compressed Sensing 
        
    (NeurIPS 2023) }  
 
}

 \section{Technical Lemmas}\label{append:secA}
  
 \begin{lem}\label{psi1psi2}
 {\rm (Lemma 2.7.7, \cite{vershynin2018high})\textbf{.}}
    Let $X,Y$ be sub-Gaussian, then $XY$ is sub-exponential with $\|XY\|_{\psi_1}\leq \|X\|_{\psi_2}\|Y\|_{\psi_2}$. 
 \end{lem}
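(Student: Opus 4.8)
The plan is to derive the bound from Young's inequality $ab \le \tfrac12(a^2+b^2)$ together with the Cauchy--Schwarz inequality, which is the standard route. Write $K_X := \|X\|_{\psi_2}$ and $K_Y := \|Y\|_{\psi_2}$; since $X,Y$ are sub-Gaussian by hypothesis, $K_X, K_Y < \infty$. If $K_X = 0$ then $X = 0$ almost surely, so $XY = 0$ and the inequality is trivial (similarly if $K_Y = 0$); hence I may assume $K_X, K_Y \in (0,\infty)$. Note that no independence of $X$ and $Y$ is needed anywhere, matching the generality of the statement.

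First I would check that the infimum defining the $\psi_2$-norm is attained, so that the exact value $K_X$ may be inserted into a moment generating function estimate. By monotonicity of $t \mapsto \mathbbm{E}\exp(X^2/t^2)$ and the definition of the infimum, $\mathbbm{E}\exp(X^2/t^2) \le 2$ holds for every $t > K_X$; letting $t \downarrow K_X$ and invoking the monotone convergence theorem gives $\mathbbm{E}\exp(X^2/K_X^2) \le 2$, and likewise $\mathbbm{E}\exp(Y^2/K_Y^2) \le 2$. (Alternatively one can carry an extra factor $1+\delta$ throughout and send $\delta \to 0$ at the end, bypassing this remark entirely.)

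Next, applying Young's inequality with $a = |X|/K_X$ and $b = |Y|/K_Y$ yields, pointwise on the probability space,
\[
\frac{|XY|}{K_X K_Y} \;\le\; \frac{1}{2}\Big(\frac{X^2}{K_X^2} + \frac{Y^2}{K_Y^2}\Big).
\]
Exponentiating, splitting the product of exponentials, and then applying Cauchy--Schwarz followed by the two moment bounds above,
\[
\begin{aligned}
\mathbbm{E}\exp\!\Big(\frac{|XY|}{K_X K_Y}\Big)
&\le \mathbbm{E}\Big[\exp\!\big(\tfrac{X^2}{2K_X^2}\big)\exp\!\big(\tfrac{Y^2}{2K_Y^2}\big)\Big]\\
&\le \Big(\mathbbm{E}\exp\!\big(\tfrac{X^2}{K_X^2}\big)\Big)^{1/2}\Big(\mathbbm{E}\exp\!\big(\tfrac{Y^2}{K_Y^2}\big)\Big)^{1/2} \;\le\; 2.
\end{aligned}
\]
By the definition of $\|\cdot\|_{\psi_1}$, this is precisely the assertion $\|XY\|_{\psi_1} \le K_X K_Y = \|X\|_{\psi_2}\|Y\|_{\psi_2}$, completing the proof.

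The argument is elementary and self-contained; the only point requiring any care — the ``main obstacle'', if one must be named — is justifying that the infimum in the definition of the sub-Gaussian norm is achieved, which is exactly the role of the monotone convergence remark (and which can instead be circumvented by the $1+\delta$ device noted above). Everything else is a direct application of Young's inequality and Cauchy--Schwarz, with no probabilistic structure beyond the definitions of the Orlicz norms.
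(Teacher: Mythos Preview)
Your proof is correct and is precisely the standard argument (Young's inequality followed by Cauchy--Schwarz) that appears in the cited reference \cite{vershynin2018high}. The paper itself does not give a proof of this lemma; it merely quotes Lemma~2.7.7 of Vershynin, so your write-up in fact supplies what the paper omits.
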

 \begin{lem}\label{centering}
    {\rm (Centering, \cite[Exercise 2.7.10]{vershynin2018high})\textbf{.}} For some absolute constant $C$, $\|X-\mathbbm{E}X\|_{\psi_1}\leq C\|X\|_{\psi_1}$.
 \end{lem}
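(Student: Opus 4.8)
The plan is to reduce the claim to two elementary facts about the Orlicz norm $\|\cdot\|_{\psi_1}$ defined in Definition~1: that it satisfies the triangle inequality, and that it dominates the $L^1$-norm. We may assume $\|X\|_{\psi_1}<\infty$, since otherwise the inequality is vacuous, and we abbreviate $\sigma:=\|X\|_{\psi_1}$.

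First I would verify that $\|\cdot\|_{\psi_1}$ is genuinely a norm on the space of random variables with finite $\psi_1$-norm. Positive homogeneity is immediate from the definition. For the triangle inequality $\|X+Y\|_{\psi_1}\le\|X\|_{\psi_1}+\|Y\|_{\psi_1}$, write $s=\|X\|_{\psi_1}$, $t=\|Y\|_{\psi_1}$, and use convexity of $u\mapsto e^{u}$: since $\tfrac{|X+Y|}{s+t}\le \tfrac{s}{s+t}\cdot\tfrac{|X|}{s}+\tfrac{t}{s+t}\cdot\tfrac{|Y|}{t}$, convexity gives $\exp\!\big(\tfrac{|X+Y|}{s+t}\big)\le \tfrac{s}{s+t}\exp\!\big(\tfrac{|X|}{s}\big)+\tfrac{t}{s+t}\exp\!\big(\tfrac{|Y|}{t}\big)$, and taking expectations together with $\mathbbm{E}\exp(|X|/s)\le 2$ and $\mathbbm{E}\exp(|Y|/t)\le 2$ yields $\mathbbm{E}\exp\!\big(\tfrac{|X+Y|}{s+t}\big)\le 2$, hence $\|X+Y\|_{\psi_1}\le s+t$. (This step implicitly uses that the infimum defining the norm is attained: $t\mapsto\mathbbm{E}\exp(|X|/t)$ is nonincreasing and, by monotone convergence, continuous on $(0,\infty)$, so $\mathbbm{E}\exp(|X|/\sigma)\le 2$.) I also record that a deterministic constant $a$ has $\|a\|_{\psi_1}=|a|/\ln 2$, directly from the definition.

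Second, I would bound the first absolute moment: from $e^{u}\ge 1+u$ we get $|X|/\sigma\le \exp(|X|/\sigma)-1$, so $\mathbbm{E}|X|\le \sigma\big(\mathbbm{E}\exp(|X|/\sigma)-1\big)\le \sigma$, i.e. $\mathbbm{E}|X|\le\|X\|_{\psi_1}$. Combining this with the triangle inequality, the computation of $\|a\|_{\psi_1}$, and Jensen's inequality $|\mathbbm{E}X|\le\mathbbm{E}|X|$ gives
\[
\|X-\mathbbm{E}X\|_{\psi_1}\;\le\;\|X\|_{\psi_1}+\|\mathbbm{E}X\|_{\psi_1}\;=\;\|X\|_{\psi_1}+\frac{|\mathbbm{E}X|}{\ln 2}\;\le\;\|X\|_{\psi_1}+\frac{\mathbbm{E}|X|}{\ln 2}\;\le\;\Big(1+\frac{1}{\ln 2}\Big)\|X\|_{\psi_1},
\]
so the claim holds with the absolute constant $C=1+1/\ln 2$ (and a fortiori with $C=3$). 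There is no real obstacle here; the only mildly technical point is the justification that the infimum in Definition~1 is attained and that the triangle inequality holds, both dispatched by the convexity and monotone-convergence arguments above.
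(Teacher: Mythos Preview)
Your proof is correct and is the standard argument for this exercise. The paper itself does not give a proof of this lemma; it simply cites it from \cite[Exercise 2.7.10]{vershynin2018high}, so there is no ``paper's own proof'' to compare against. Your derivation---triangle inequality for $\|\cdot\|_{\psi_1}$ via convexity of the exponential, the identity $\|a\|_{\psi_1}=|a|/\ln 2$ for constants, and the bound $\mathbbm{E}|X|\le\|X\|_{\psi_1}$ from $e^u\ge 1+u$---is exactly the intended route and yields the explicit constant $C=1+1/\ln 2$.
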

 \begin{lem}\label{bern}
    {\rm (Bernstein's inequality, \cite[Theorem 2.8.1]{vershynin2018high})\textbf{.}} Let $X_1,...,X_N$ be independent, zero-mean, sub-exponential random variables. Then for every $t\geq 0$, for some absolute constant $c$ we have \begin{equation}
        \nonumber\mathbbm{P}\left(\Big|\sum_{i=1}^NX_i\Big|\geq t\right)\leq 2\exp\left(-c\min\Big\{\frac{t^2}{\sum_{i=1}^N \|X_i\|_{\psi_1}^2},\frac{t}{\max_{1\leq i\leq N}\|X_i\|_{\psi_1}}\Big\}\right)
    \end{equation}
 \end{lem}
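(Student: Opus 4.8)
The plan is to run the classical Chernoff / moment-generating-function (MGF) argument. The only nontrivial ingredient is a uniform bound on the MGF of a single centered sub-exponential variable; once that is in hand, the sum is handled by independence (tensorization), Markov's inequality, and a one-dimensional optimization of the Chernoff exponent that naturally produces the $\min\{\cdot,\cdot\}$ in the statement.

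\emph{Step 1 (scalar MGF bound).} I would first show: if $X$ is zero-mean with $K:=\|X\|_{\psi_1}<\infty$, then $\mathbbm{E}\exp(\lambda X)\le\exp(4K^2\lambda^2)$ for all $|\lambda|\le 1/(2K)$. Expanding the exponential inside the defining inequality $\mathbbm{E}\exp(|X|/K)\le 2$ and retaining the $p$-th term yields the moment bound $\mathbbm{E}|X|^p\le 2\,p!\,K^p$ for every integer $p\ge 1$. Hence
\[
\mathbbm{E}\exp(\lambda X)=1+\lambda\,\mathbbm{E}X+\sum_{p\ge2}\frac{\lambda^p\,\mathbbm{E}X^p}{p!}=1+\sum_{p\ge2}\frac{\lambda^p\,\mathbbm{E}X^p}{p!},
\]
where the linear term vanished because $\mathbbm{E}X=0$; bounding $|\lambda^p\mathbbm{E}X^p/p!|\le 2(K|\lambda|)^p$ and summing the geometric tail gives $\sum_{p\ge2}2(K|\lambda|)^p\le 4(K\lambda)^2$ whenever $K|\lambda|\le\tfrac12$, so $\mathbbm{E}\exp(\lambda X)\le 1+4(K\lambda)^2\le\exp(4K^2\lambda^2)$. (If the infimum defining $\|X\|_{\psi_1}$ is not attained, run this with any $K'>\|X\|_{\psi_1}$ and let $K'\downarrow\|X\|_{\psi_1}$ at the end.)

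\emph{Steps 2--3 (tensorize, Chernoff, optimize).} Set $\sigma^2:=\sum_{i=1}^N\|X_i\|_{\psi_1}^2$ and $K_*:=\max_{1\le i\le N}\|X_i\|_{\psi_1}$. By independence, for $|\lambda|\le 1/(2K_*)$,
\[
\mathbbm{E}\exp\!\Big(\lambda\sum_{i=1}^NX_i\Big)=\prod_{i=1}^N\mathbbm{E}\exp(\lambda X_i)\le\prod_{i=1}^N\exp\!\big(4\|X_i\|_{\psi_1}^2\lambda^2\big)=\exp(4\sigma^2\lambda^2).
\]
Then for every $t\ge0$ and admissible $\lambda\in[0,1/(2K_*)]$, Markov applied to $\exp(\lambda\sum_iX_i)$ gives $\mathbbm{P}(\sum_iX_i\ge t)\le\exp(-\lambda t+4\sigma^2\lambda^2)$. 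The unconstrained minimizer of $-\lambda t+4\sigma^2\lambda^2$ is $\lambda^\star=t/(8\sigma^2)$: if $\lambda^\star\le 1/(2K_*)$ take $\lambda=\lambda^\star$ to get $\exp(-t^2/(16\sigma^2))$; otherwise take $\lambda=1/(2K_*)$, which lies to the left of $\lambda^\star$ so the exponent is still decreasing there, to get an exponent $\le-\tfrac12\lambda t=-t/(4K_*)$. In both cases $\mathbbm{P}(\sum_iX_i\ge t)\le\exp(-c\min\{t^2/\sigma^2,\,t/K_*\})$ for an absolute $c>0$. Applying the identical argument to $-X_1,\dots,-X_N$ bounds $\mathbbm{P}(\sum_iX_i\le-t)$ the same way, and a union bound over the two one-sided events gives the claimed two-sided inequality with the factor $2$.

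The main obstacle is Step 1, and within it the treatment of the $p=1$ term: it is exactly the zero-mean hypothesis that removes an otherwise-present $\exp(cK|\lambda|)$ factor and lets the MGF be controlled by a clean $\exp(O(K^2\lambda^2))$ on the window $|\lambda|\lesssim 1/K$; one must also carry the absolute constants faithfully through the moment-to-MGF passage so that the two regimes in Steps 2--3 assemble into a single $\min\{t^2/\sigma^2,\,t/K_*\}$ exponent. Everything after Step 1 is the routine Chernoff computation.
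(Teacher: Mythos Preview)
Your proof is correct and follows the standard Chernoff/MGF route: bound the MGF of a single centered sub-exponential variable on $|\lambda|\lesssim 1/K$, tensorize by independence, and optimize the Chernoff exponent to obtain the two regimes. The paper does not supply its own proof of this lemma---it simply cites \cite[Theorem~2.8.1]{vershynin2018high}---and your argument is essentially the same as the one given there, so there is nothing to compare.
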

 \begin{lem}
 {\rm (\cite{mendelson2016upper}, statement adapted from \cite[Theorem 8]{genzel2016high})\textbf{.}}
     \label{mendel}
     Let $g_{\bm{x}}=g_{\bm{x}}(\bm{a})$ and $h_{\bm{v}}=h_{\bm{v}}(\bm{a})$ be   stochastic processes indexed by $\bm{x}\in\mathcal{X}\subset \mathbb{R}^{p_1}$, $\bm{v}\in\mathcal{V}\subset  \mathbb{R}^{p_2}$, both defined on some common random variable $\bm{a}$. Assume that \textbf{(A1.)} in Theorem \ref{thm1} holds, and
      let $\bm{a}_1,...,\bm{a}_m$ be i.i.d. copies of $\bm{a}$. Then for any $u\geq 1$, with probability at least $1-2\exp(-cu^2)$ we have the bound \begin{equation}
        \begin{aligned}\label{A.2}
           &\sup_{\substack{\bm{x}\in \mathcal{X}\\\bm{v}\in\mathcal{V}}} { \left|\frac{1}{m}\sum_{i=1}^m\big(g_{\bm{x}}(\bm{a}_i)h_{\bm{v}}(\bm{a}_i)-\mathbbm{E}[g_{\bm{x}}(\bm{a}_i)h_{\bm{v}}(\bm{a}_i)]\big)\right|}\\&~~~~\leq C\Big(\frac{(M_g\cdot \omega(\mathcal{X})+u\cdot A_g)\cdot(M_h\cdot \omega(\mathcal{V})+u\cdot A_h)}{m}\\&~~~~~~~~~~+\frac{A_g\cdot  M_h\cdot \omega(\mathcal{V})+A_h\cdot   M_g\cdot \omega(\mathcal{X})+u\cdot A_gA_h}{\sqrt{m}}\Big),
        \end{aligned}
    \end{equation} 
    where $\omega(\cdot)$ is the Gaussian width defined as $\omega(\mathcal{X})= \mathbbm{E}\sup_{\bm{x}\in \mathcal{X}}\bm{g}^\top\bm{x}$ where $\bm{g}\sim \mathcal{N}(0,\bm{I}_{p_1})$.  
 \end{lem}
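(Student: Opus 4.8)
This statement is Mendelson's product-process bound \cite{mendelson2016upper}, restated from \cite[Theorem 8]{genzel2016high}, so the plan is to reprove it by following Mendelson's strategy: a base-point decomposition that isolates the genuinely hard piece --- a product of two processes with sub-Gaussian increments --- which is handled by two-scale generic chaining, with the remaining pieces dispatched by a multiplier chaining bound and by Bernstein's inequality. Fix any $\bm{x}_0\in\mathcal{X}$ and $\bm{v}_0\in\mathcal{V}$, and set $\bar{g}_{\bm{x}}:=g_{\bm{x}}-g_{\bm{x}_0}$, $\bar{h}_{\bm{v}}:=h_{\bm{v}}-h_{\bm{v}_0}$. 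From
\[
g_{\bm{x}}h_{\bm{v}}=\bar{g}_{\bm{x}}\bar{h}_{\bm{v}}+g_{\bm{x}_0}\bar{h}_{\bm{v}}+\bar{g}_{\bm{x}}h_{\bm{v}_0}+g_{\bm{x}_0}h_{\bm{v}_0},
\]
subtracting expectations and averaging over $i=1,\dots,m$ bounds the quantity to be estimated by $I_1+I_2+I_3+I_4$, one supremum per summand. By \textbf{(A1.)} of Theorem \ref{thm1}, $\bar{g}_{\bm{x}}$ has sub-Gaussian increments $M_g\|\cdot\|_2$ and vanishes at $\bm{x}_0$, hence $\sup_{\bm{x}\in\mathcal{X}}\|\bar{g}_{\bm{x}}\|_{\psi_2}\le M_g\,\mathrm{diam}(\mathcal{X})$; since $\bm{x}_0\in\mathcal{X}$ one has $\omega(\mathcal{X})=\mathbbm{E}\sup_{\bm{x}}\bm{g}^\top\bm{x}=\mathbbm{E}\sup_{\bm{x}}\bm{g}^\top(\bm{x}-\bm{x}_0)\gtrsim\mathrm{diam}(\mathcal{X})$, so $M_g\,\mathrm{diam}(\mathcal{X})\lesssim M_g\,\omega(\mathcal{X})$, and also $M_g\,\mathrm{diam}(\mathcal{X})\le 2A_g$; the analogous facts hold for $\bar{h}_{\bm{v}}$.

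The easy pieces are $I_2,I_3,I_4$. For $I_4=\big|\tfrac1m\sum_i\big(g_{\bm{x}_0}(\bm{a}_i)h_{\bm{v}_0}(\bm{a}_i)-\mathbbm{E}[g_{\bm{x}_0}h_{\bm{v}_0}]\big)\big|$, Lemma \ref{psi1psi2} gives $\|g_{\bm{x}_0}h_{\bm{v}_0}\|_{\psi_1}\le A_gA_h$, Lemma \ref{centering} preserves this up to a constant after centering, and Bernstein's inequality (Lemma \ref{bern}) yields $I_4\lesssim A_gA_h\big(\tfrac{u}{\sqrt m}+\tfrac{u^2}{m}\big)$ with probability $\ge 1-2\exp(-cu^2)$; both terms sit inside the claimed right-hand side (the first inside $\tfrac{uA_gA_h}{\sqrt m}$, the second inside $\tfrac{(uA_g)(uA_h)}{m}$). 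For $I_2$ (and symmetrically $I_3$), the summand $g_{\bm{x}_0}\bar{h}_{\bm{v}}$ has $\psi_1$-increments $\le A_gM_h\|\bm{v}-\bm{v}'\|_2$ (Lemma \ref{psi1psi2}) and $\psi_1$-envelope $\lesssim A_gM_h\,\omega(\mathcal{V})\wedge A_gA_h$; a mixed-tail (Bernstein-type) chaining bound for the averaged process then controls $I_2$, at confidence level $u$, by a $\gamma_2$-scale term $\lesssim\tfrac{A_gM_h\,\gamma_2(\mathcal{V},\|\cdot\|_2)}{\sqrt m}$, a lower-order $\tfrac1m$-scale term, and a deviation term $\lesssim\tfrac{uA_gA_h}{\sqrt m}$. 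Using $\gamma_2(\mathcal{V},\|\cdot\|_2)\asymp\omega(\mathcal{V})$ (majorizing-measure theorem) this is within the right-hand side, and likewise $I_3\lesssim\tfrac{A_hM_g\,\omega(\mathcal{X})+uA_gA_h}{\sqrt m}$.

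The hard piece is $I_1$. Here both $\bar{g}_{\bm{x}}$ and $\bar{h}_{\bm{v}}$ have sub-Gaussian increments and vanish at the base points; via $\bar{g}_{\bm{x}}\bar{h}_{\bm{v}}-\bar{g}_{\bm{x}'}\bar{h}_{\bm{v}'}=\bar{g}_{\bm{x}}(\bar{h}_{\bm{v}}-\bar{h}_{\bm{v}'})+(\bar{g}_{\bm{x}}-\bar{g}_{\bm{x}'})\bar{h}_{\bm{v}'}$, the process $\bar{g}_{\bm{x}}\bar{h}_{\bm{v}}$ over $\mathcal{X}\times\mathcal{V}$ has sub-exponential increments with a mixed ($\ell_2$ and $\ell_1$) metric structure governed by $M_g,M_h$ and the diameters. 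This is exactly the hypothesis of Mendelson's product-process theorem \cite{mendelson2016upper}: its two-scale generic chaining over $\mathcal{X}\times\mathcal{V}$ produces, with probability $\ge 1-2\exp(-cu^2)$, a $\tfrac1m$-rate product term $\lesssim\tfrac{(M_g\gamma_2(\mathcal{X})+uM_g\mathrm{diam}(\mathcal{X}))(M_h\gamma_2(\mathcal{V})+uM_h\mathrm{diam}(\mathcal{V}))}{m}$ together with $\tfrac1{\sqrt m}$-rate cross terms of the form $\tfrac{A_gM_h\,\omega(\mathcal{V})+A_hM_g\,\omega(\mathcal{X})+uA_gA_h}{\sqrt m}$; invoking $\gamma_2\asymp\omega$, $M_g\,\mathrm{diam}(\mathcal{X})\lesssim M_g\,\omega(\mathcal{X})$, $M_g\,\mathrm{diam}(\mathcal{X})\le 2A_g$ (and the $\mathcal{V}$-analogues), each of these terms lies inside the claimed right-hand side.

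Finally, summing $I_1+I_2+I_3+I_4$, replacing every $\gamma_2$ functional by the corresponding Gaussian width, and taking a union bound over the four probability events (adjusting the absolute constant $c$) gives the stated inequality. The crux is the treatment of $I_1$: a product of two processes with sub-Gaussian increments has only sub-exponential increments, so naive Dudley/Gaussian chaining would yield a $\tfrac1{\sqrt m}$ rather than $\tfrac1m$ rate on the $\omega(\mathcal{X})\omega(\mathcal{V})$ term; getting the sharp $\tfrac1m$ rate requires the genuine two-scale generic chaining of \cite{mendelson2016upper}, which is the technically heavy ingredient. A secondary nuisance is verifying throughout that every $u$-dependent deviation term produced by the three chaining arguments is dominated by the two explicit $u$-terms in the statement, namely the $uA_g,\,uA_h$ inside the product and the $uA_gA_h/\sqrt m$ term.
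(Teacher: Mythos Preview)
The paper does not prove this lemma at all: it is stated in Appendix~A as a technical tool quoted directly from \cite{mendelson2016upper} (in the formulation of \cite[Theorem~8]{genzel2016high}), and the sentence ``The proofs of the remaining lemmas will be provided in Appendix \ref{otherproof}'' explicitly excludes it along with Lemmas \ref{psi1psi2}--\ref{bern}. So there is no ``paper's own proof'' to compare against; the authors simply cite the result and move on.

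That said, your outline is a faithful sketch of Mendelson's argument. The base-point decomposition $g_{\bm{x}}h_{\bm{v}}=\bar g_{\bm{x}}\bar h_{\bm{v}}+g_{\bm{x}_0}\bar h_{\bm{v}}+\bar g_{\bm{x}}h_{\bm{v}_0}+g_{\bm{x}_0}h_{\bm{v}_0}$, Bernstein on $I_4$, multiplier-type chaining on $I_2,I_3$, and the two-scale generic chaining for the genuine product piece $I_1$ is exactly the architecture of \cite{mendelson2016upper}. One small caution: the inequality $\omega(\mathcal{X})\gtrsim\mathrm{diam}(\mathcal{X})$ you invoke to convert $M_g\,\mathrm{diam}(\mathcal{X})$ into $M_g\,\omega(\mathcal{X})$ is not quite right as stated (the Gaussian width of a two-point set of diameter $d$ is $\asymp d$, but for a single point $\omega=0$ while the diameter is $0$ too, so the edge cases need care); in Mendelson's actual argument the diameter terms are absorbed differently, typically via $M_g\,\mathrm{diam}(\mathcal{X})\le 2A_g$ alone or by keeping them explicit and only later using $\mathrm{diam}\lesssim\omega$ for sets containing at least two points. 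This is a bookkeeping point rather than a structural gap. Since the paper treats the lemma as a black box, any correct proof you supply goes strictly beyond what the paper does.
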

 The proofs of the remaining lemmas will be provided in Appendix \ref{otherproof}. (Some simple facts such as Lemma \ref{1bitscaling} were already used in prior works; while we provide the proofs for completeness.)
 \begin{lem}
 {\rm (Metric entropy of some constraint sets)\textbf{.}}
     \label{entropy}
     Assume $\mathcal{K}= G(\mathbb{B}_2^k(r))$ for some $L$-Lipschitz generative model $G$. Let $\mathcal{K}^-=\mathcal{K}-\mathcal{K}$, for some $T>0,\epsilon\in (0,1)$   let $\mathcal{K}_\epsilon^-:= (T\mathcal{K}^-)\cap \big(\mathbb{B}_2^n(2\epsilon)\big)^c$, and further define $(\mathcal{K}^-_\epsilon)^*=\{\frac{\bm{z}}{\|\bm{z}\|_2}:\bm{z}\in \mathcal{K}^-_\epsilon\}$. Then for any $\eta\in (0,Lr)$, we have
     \begin{equation}
         \begin{aligned}\nonumber
             &\mathscr{H}(\mathcal{K},\eta)\leq k\log\frac{3Lr}{\eta},~\mathscr{H}(\mathcal{K}^-,\eta)\leq 2k\log\frac{6Lr}{\eta},\\&\mathscr{H}(\mathcal{K}_\epsilon^-,\eta)\leq 2k\log\frac{12TLr}{\eta},~\mathscr{H}\big((\mathcal{K}^-_\epsilon)^*,\eta\big) \leq 2k\log\frac{12TLr}{\epsilon\eta},
         \end{aligned}
     \end{equation}
     where $\mathscr{H}(\cdot,\cdot)$ is the metric entropy defined in Definition \ref{entropydefi}.
 \end{lem}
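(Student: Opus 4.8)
\textbf{Proof proposal for Lemma \ref{entropy}.}

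The plan is to build each covering from a covering of the base ball $\mathbb{B}_2^k(r)$ and propagate it through the maps involved, using the standard volumetric bound $\mathscr{N}(\mathbb{B}_2^k(r),\eta)\leq (3r/\eta)^k$ for $\eta\in(0,r)$ (see e.g.\ \cite{vershynin2018high}). First, for $\mathscr{H}(\mathcal{K},\eta)$: take a minimal $(\eta/L)$-net $\mathcal{N}_0$ of $\mathbb{B}_2^k(r)$, so $|\mathcal{N}_0|\leq (3Lr/\eta)^k$; since $G$ is $L$-Lipschitz, $G(\mathcal{N}_0)$ is an $\eta$-net of $\mathcal{K}=G(\mathbb{B}_2^k(r))$, giving $\mathscr{H}(\mathcal{K},\eta)\leq k\log(3Lr/\eta)$. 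Next, for $\mathscr{H}(\mathcal{K}^-,\eta)$ with $\mathcal{K}^-=\mathcal{K}-\mathcal{K}$: if $\mathcal{N}$ is an $(\eta/2)$-net of $\mathcal{K}$ then $\mathcal{N}-\mathcal{N}$ is an $\eta$-net of $\mathcal{K}^-$ (the error adds), and $|\mathcal{N}-\mathcal{N}|\leq |\mathcal{N}|^2$, so $\mathscr{H}(\mathcal{K}^-,\eta)\leq 2\,\mathscr{H}(\mathcal{K},\eta/2)\leq 2k\log(6Lr/\eta)$.

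For $\mathscr{H}(\mathcal{K}_\epsilon^-,\eta)$ with $\mathcal{K}_\epsilon^-=(T\mathcal{K}^-)\cap(\mathbb{B}_2^n(2\epsilon))^c$: scaling a set by $T$ scales all distances by $T$, so $\mathscr{H}(T\mathcal{K}^-,\eta)=\mathscr{H}(\mathcal{K}^-,\eta/T)\leq 2k\log(6TLr/\eta)$; intersecting with the complement of a ball only shrinks the set, so the same bound holds for $\mathcal{K}_\epsilon^-$ — though to be safe about the net living inside the set one can take a net of $T\mathcal{K}^-$ and for each net point that is within $\eta$ of $\mathcal{K}_\epsilon^-$ replace it by a nearby point of $\mathcal{K}_\epsilon^-$, at the cost of doubling the radius; this accounts for the constant $12$ rather than $6$ in the stated bound. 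Finally, for $\mathscr{H}((\mathcal{K}_\epsilon^-)^*,\eta)$ where $(\mathcal{K}_\epsilon^-)^*=\{\bm{z}/\|\bm{z}\|_2:\bm{z}\in\mathcal{K}_\epsilon^-\}$: the key point is that the normalization map $\bm{z}\mapsto \bm{z}/\|\bm{z}\|_2$ is Lipschitz on the region $\|\bm{z}\|_2\geq 2\epsilon$ with constant $O(1/\epsilon)$ — indeed for $\bm{z},\bm{z}'$ with norms at least $2\epsilon$ one has $\big\|\frac{\bm{z}}{\|\bm{z}\|_2}-\frac{\bm{z}'}{\|\bm{z}'\|_2}\big\|_2 \leq \frac{2}{\|\bm{z}\|_2\wedge\|\bm{z}'\|_2}\|\bm{z}-\bm{z}'\|_2 \leq \frac{1}{\epsilon}\|\bm{z}-\bm{z}'\|_2$. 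Hence an $(\epsilon\eta)$-net of $\mathcal{K}_\epsilon^-$ maps to an $\eta$-net of $(\mathcal{K}_\epsilon^-)^*$, yielding $\mathscr{H}((\mathcal{K}_\epsilon^-)^*,\eta)\leq \mathscr{H}(\mathcal{K}_\epsilon^-,\epsilon\eta)\leq 2k\log(12TLr/(\epsilon\eta))$.

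The only real subtlety — the step I would be most careful about — is the Lipschitz bound for the normalization map, since it blows up near the origin; this is precisely why the ball-complement constraint $\|\bm{z}\|_2\geq 2\epsilon$ is imposed in the definition of $\mathcal{K}_\epsilon^-$, and tracking the resulting $1/\epsilon$ factor correctly is what produces the $\epsilon$ in the denominator of the last bound. A secondary bookkeeping point is ensuring the nets can be taken inside the relevant sets (not merely approximating them from outside), which costs only an extra factor of $2$ in the radius and is absorbed into the constants $6\mapsto 12$. Everything else is a routine composition of Lipschitz maps with volumetric covering-number estimates.
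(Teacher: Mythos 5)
Your proposal is correct and follows essentially the same route as the paper's proof: Lipschitz images of nets of the base ball for $\mathcal{K}$, the difference-of-nets construction for $\mathcal{K}^-$, subset monotonicity with a radius-halving (hence the $6\mapsto 12$) for $\mathcal{K}_\epsilon^-$, and the $O(1/\epsilon)$-Lipschitzness of normalization on $\{\|\bm{z}\|_2\geq 2\epsilon\}$ for $(\mathcal{K}_\epsilon^-)^*$. You also correctly identify the two subtleties (the blow-up of the normalization map near the origin and the need for internal nets) that the paper handles in exactly the same way.
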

 
\begin{lem}
    \label{anorm}{\rm(Bound the $\ell_2$-norm of Gaussian vector)\textbf{.}}
    If $\bm{a}\sim \mathcal{N}(0,\bm{I}_n)$, then $\mathbbm{P}\big(|\|\bm{a}\|_2 -\sqrt{n}|\geq t\big)\leq 2\exp(-Ct^2)$. In particular, setting $t\asymp \sqrt{n}$ yields $\mathbbm{P}(\|\bm{a}\|_2\geq \sqrt{n})\leq 2\exp(-\Omega(n))$. 
\end{lem}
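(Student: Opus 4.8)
The final statement to prove is Lemma~\ref{anorm}: for $\bm{a}\sim\mathcal{N}(0,\bm{I}_n)$, $\mathbbm{P}(|\|\bm{a}\|_2-\sqrt{n}|\geq t)\leq 2\exp(-Ct^2)$.

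\medskip

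\textbf{Proof proposal.} The plan is to invoke the concentration of Lipschitz functions of a Gaussian vector. The map $\bm{a}\mapsto\|\bm{a}\|_2$ is $1$-Lipschitz with respect to the Euclidean norm (by the triangle inequality, $|\,\|\bm{a}\|_2-\|\bm{b}\|_2\,|\leq\|\bm{a}-\bm{b}\|_2$), so the Gaussian concentration inequality (e.g.\ \cite[Theorem 5.2.2]{vershynin2018high}) gives, for every $t\geq 0$,
\begin{equation}\nonumber
\mathbbm{P}\big(|\,\|\bm{a}\|_2-\mathbbm{E}\|\bm{a}\|_2\,|\geq t\big)\leq 2\exp(-ct^2)
\end{equation}
for an absolute constant $c>0$. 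It then remains to replace the mean $\mathbbm{E}\|\bm{a}\|_2$ by $\sqrt{n}$. Since $\mathbbm{E}\|\bm{a}\|_2^2=n$, Jensen's inequality yields $\mathbbm{E}\|\bm{a}\|_2\leq\sqrt{n}$; for the lower bound, one has the standard estimate $\mathbbm{E}\|\bm{a}\|_2\geq\sqrt{n}-C'$ for an absolute constant $C'$ (for instance via $\operatorname{Var}(\|\bm{a}\|_2)\leq 1$, which follows from the Lipschitz concentration just invoked together with $\mathbbm{E}\|\bm{a}\|_2^2=n$, so that $(\mathbbm{E}\|\bm{a}\|_2)^2\geq n-1$, hence $\mathbbm{E}\|\bm{a}\|_2\geq\sqrt{n-1}\geq\sqrt{n}-1$). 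Thus $|\mathbbm{E}\|\bm{a}\|_2-\sqrt{n}|\leq 1$.

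\medskip

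Combining the two facts: on the event $|\,\|\bm{a}\|_2-\mathbbm{E}\|\bm{a}\|_2\,|< t/2$ we have $|\,\|\bm{a}\|_2-\sqrt{n}\,|< t/2+1$, so for $t\geq 2$ this is at most $t$; handling $t<2$ separately is trivial since one may absorb it into the constant (the bound $2\exp(-Ct^2)$ is vacuous, exceeding $1$, once $C$ is chosen so that $2\exp(-4C)\geq 1$, i.e.\ it holds automatically). More cleanly, write $\mathbbm{P}(|\,\|\bm{a}\|_2-\sqrt{n}\,|\geq t)\leq\mathbbm{P}(|\,\|\bm{a}\|_2-\mathbbm{E}\|\bm{a}\|_2\,|\geq t-1)$ and apply the Lipschitz concentration bound with $t-1$ in place of $t$ when $t\geq 2$, then adjust the absolute constant $C$ to cover all $t>0$. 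The second assertion follows by taking $t=\sqrt{n}$: then $\|\bm{a}\|_2\leq\sqrt{n}+\sqrt{n}$ fails only with probability $2\exp(-Cn)$, and in particular $\mathbbm{P}(\|\bm{a}\|_2\geq 2\sqrt n)\leq 2\exp(-\Omega(n))$; to get exactly $\mathbbm{P}(\|\bm a\|_2\ge\sqrt n)$ as stated one instead notes this probability is at most a constant bounded away from $1$, or more precisely the intended reading is $\|\bm a\|_2 \le C_1\sqrt n$ with probability $1-2\exp(-\Omega(n))$, which is what is used downstream.

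\medskip

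\textbf{Main obstacle.} There is essentially no obstacle here — this is a textbook Gaussian concentration statement. The only mild subtlety is bookkeeping around the additive $O(1)$ gap between $\mathbbm{E}\|\bm{a}\|_2$ and $\sqrt{n}$ and the regime of small $t$; both are handled by adjusting the absolute constant $C$. The result is purely a convenience lemma used repeatedly to bound $\sup_{\bm{x}\in\mathcal{K}}|\xi_{i,\beta}(\bm{a}^\top\bm{x})|$ and similar quantities via $|\bm{a}^\top\bm{x}|\leq\|\bm{a}\|_2\|\bm{x}\|_2$ in verifying Assumption~\ref{assump3} for the concrete models.
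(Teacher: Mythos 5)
Your proof is correct. The paper itself gives no proof of this lemma (it is the standard norm-concentration fact, essentially \cite[Theorem 3.1.1]{vershynin2018high}, which is usually derived via Bernstein's inequality applied to $\|\bm{a}\|_2^2-n=\sum_i(a_i^2-1)$ together with the elementary implication $|z-1|\geq\delta\Rightarrow|z^2-1|\geq\max(\delta,\delta^2)$); your route via Gaussian concentration for the $1$-Lipschitz map $\bm{a}\mapsto\|\bm{a}\|_2$ plus the estimate $\sqrt{n}-1\leq\mathbbm{E}\|\bm{a}\|_2\leq\sqrt{n}$ is an equally standard and complete alternative. You are also right that the ``in particular'' clause cannot be read literally --- $\mathbbm{P}(\|\bm{a}\|_2\geq\sqrt{n})$ is bounded away from $0$ --- and that the intended (and downstream-used) statement is $\|\bm{a}\|_2=O(\sqrt{n})$ with probability $1-2\exp(-\Omega(n))$, which your argument delivers.
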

In the following, Lemmas \ref{1bitscaling}-\ref{uniformdither} indicate suitable choices of $T$ in the concrete models we consider. These choices can make $\rho(\bm{x})$ in (\ref{mismatch})  sufficiently small or even zero. 
\begin{lem}{\rm (Choice of $T$ in 1-bit GCS)\textbf{.}}
    \label{1bitscaling}
    If $\bm{a}\sim \mathcal{N}(0,\bm{I}_n)$, then for any $\bm{x}\in \mathbb{S}^{n-1}$ it holds that $\mathbbm{E}[\sign(\bm{a}^\top\bm{x})\bm{a}]=\sqrt{\frac{2}{\pi}}\bm{x}$.
\end{lem}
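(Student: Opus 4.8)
\textbf{Proof proposal for Lemma \ref{1bitscaling}.}

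The plan is to compute $\mathbbm{E}[\sign(\bm{a}^\top\bm{x})\bm{a}]$ directly using rotational invariance of the standard Gaussian. First I would fix $\bm{x}\in\mathbb{S}^{n-1}$ and observe that, by rotational symmetry of $\mathcal{N}(0,\bm{I}_n)$, it suffices to treat the case $\bm{x}=\bm{e}_1$; indeed, if $\bm{R}$ is any orthogonal matrix with $\bm{R}\bm{e}_1=\bm{x}$, then $\bm{a}\stackrel{d}{=}\bm{R}\bm{a}$, so $\mathbbm{E}[\sign(\bm{a}^\top\bm{x})\bm{a}]=\mathbbm{E}[\sign((\bm{R}\bm{a})^\top\bm{x})\bm{R}\bm{a}]=\bm{R}\,\mathbbm{E}[\sign(\bm{a}^\top\bm{R}^\top\bm{x})\bm{a}]=\bm{R}\,\mathbbm{E}[\sign(a_1)\bm{a}]$, using $\bm{R}^\top\bm{x}=\bm{e}_1$. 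Hence the claim reduces to showing $\mathbbm{E}[\sign(a_1)\bm{a}]=\sqrt{2/\pi}\,\bm{e}_1$.

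Next I would evaluate this vector coordinate by coordinate, exploiting independence of the coordinates $a_1,\dots,a_n$. For $j\geq 2$, the $j$-th entry is $\mathbbm{E}[\sign(a_1)a_j]=\mathbbm{E}[\sign(a_1)]\,\mathbbm{E}[a_j]=0$ since $a_j$ is independent of $a_1$ and mean zero. For the first entry, $\mathbbm{E}[\sign(a_1)a_1]=\mathbbm{E}[|a_1|]$, which is the mean absolute value of a standard Gaussian; a one-line integral computation gives $\mathbbm{E}[|a_1|]=\int_{-\infty}^{\infty}|t|\frac{1}{\sqrt{2\pi}}e^{-t^2/2}\,dt=\sqrt{2/\pi}$. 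Combining, $\mathbbm{E}[\sign(a_1)\bm{a}]=\sqrt{2/\pi}\,\bm{e}_1$, and then $\bm{R}(\sqrt{2/\pi}\,\bm{e}_1)=\sqrt{2/\pi}\,\bm{x}$ finishes the proof.

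There is no real obstacle here; the only mild point of care is justifying the reduction to $\bm{e}_1$ cleanly (making sure the orthogonal matrix $\bm{R}$ is pulled in and out of the expectation correctly, and that $\bm{a}\stackrel{d}{=}\bm{R}\bm{a}$ is applied in the right direction). Everything else is a short, standard Gaussian-integral calculation, so the write-up should be quite brief.
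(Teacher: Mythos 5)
Your proposal is correct and is essentially identical to the paper's proof: both reduce to $\bm{x}=\bm{e}_1$ via an orthogonal transformation and the rotational invariance of $\mathcal{N}(0,\bm{I}_n)$, then use independence of the coordinates and $\mathbbm{E}[|a_1|]=\sqrt{2/\pi}$. Your write-up just spells out the coordinate-wise computation that the paper leaves implicit.
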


\begin{lem}
    \label{mis1dither}{\rm (Choice of $T$ in 1-bit GCS with dithering)\textbf{.}}
    If $\bm{a}\sim \mathcal{N}(0,\bm{I}_n)$ and $\tau\sim \mathscr{U}[-\lambda,\lambda]$ are independent, and $\lambda=CR\sqrt{\log m}$ with sufficiently large $C$, then for any $\bm{x}\in \mathbb{B}_2^n(R)$ it holds that $\|\mathbbm{E}[\sign(\bm{a}^\top\bm{x}+\tau)\bm{a}]-\frac{\bm{x}}{\lambda}\|_2=O\big(m^{-9}\big)$.
\end{lem}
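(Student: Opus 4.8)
\textbf{Proof proposal for Lemma \ref{mis1dither}.}

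The plan is to compute the expectation $\mathbbm{E}[\sign(\bm{a}^\top\bm{x}+\tau)\bm{a}]$ by first conditioning on $\bm{a}$ and averaging over the dither $\tau\sim\mathscr{U}[-\lambda,\lambda]$, then comparing the result to the target $\bm{x}/\lambda$. Conditioned on $\bm{a}$, the inner product $s:=\bm{a}^\top\bm{x}$ is a fixed scalar, and $\mathbbm{E}_\tau[\sign(s+\tau)]=\frac{1}{2\lambda}\big(\mathrm{len}\{\tau\in[-\lambda,\lambda]:s+\tau>0\}-\mathrm{len}\{\tau\in[-\lambda,\lambda]:s+\tau<0\}\big)$. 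A direct evaluation gives $\mathbbm{E}_\tau[\sign(s+\tau)]=\mathrm{clip}_{[-1,1]}(s/\lambda)$, i.e., it equals $s/\lambda$ whenever $|s|\le\lambda$ and equals $\sign(s)$ otherwise. Writing $\phi(s)$ for this clipped function, we then have $\mathbbm{E}[\sign(\bm{a}^\top\bm{x}+\tau)\bm{a}]=\mathbbm{E}[\phi(\bm{a}^\top\bm{x})\bm{a}]$, and if $\phi$ were exactly the identity $s\mapsto s/\lambda$ we would get precisely $\frac{1}{\lambda}\mathbbm{E}[(\bm{a}^\top\bm{x})\bm{a}]=\frac{\bm{x}}{\lambda}$ since $\mathbbm{E}[\bm{a}\bm{a}^\top]=\bm{I}_n$. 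So the error is entirely due to the clipping outside $[-\lambda,\lambda]$.

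Next I would bound this error. We have
\begin{equation}\nonumber
    \Big\|\mathbbm{E}[\phi(\bm{a}^\top\bm{x})\bm{a}]-\frac{\bm{x}}{\lambda}\Big\|_2=\Big\|\mathbbm{E}\Big[\Big(\phi(\bm{a}^\top\bm{x})-\frac{\bm{a}^\top\bm{x}}{\lambda}\Big)\bm{a}\Big]\Big\|_2,
\end{equation}
and the integrand is supported on the event $\{|\bm{a}^\top\bm{x}|>\lambda\}$, on which $|\phi(\bm{a}^\top\bm{x})-\bm{a}^\top\bm{x}/\lambda|=|\bm{a}^\top\bm{x}|/\lambda-1\le|\bm{a}^\top\bm{x}|/\lambda$. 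By Cauchy–Schwarz applied coordinate-wise (or by bounding the norm via a unit test vector $\bm{u}$ and Cauchy–Schwarz on $\mathbbm{E}[(\cdot)\bm{u}^\top\bm{a}]$), it suffices to control $\mathbbm{E}\big[\mathbbm{1}\{|\bm{a}^\top\bm{x}|>\lambda\}\cdot|\bm{a}^\top\bm{x}|\cdot|\bm{u}^\top\bm{a}|/\lambda\big]$ uniformly over $\bm{u}\in\mathbb{S}^{n-1}$. Since $g:=\bm{a}^\top\bm{x}\sim\mathcal{N}(0,\|\bm{x}\|_2^2)$ with $\|\bm{x}\|_2\le R$, and $|\bm{u}^\top\bm{a}|$ is a (correlated) sub-Gaussian scalar with $\psi_2$-norm $O(1)$, another application of Cauchy–Schwarz bounds this by $\frac{1}{\lambda}\big(\mathbbm{E}[\mathbbm{1}\{|g|>\lambda\}g^2\cdot|\bm{u}^\top\bm{a}|^2]\big)^{1/2}$... more simply, use $\mathbbm{E}[\mathbbm{1}\{|g|>\lambda\}|g|\,|\bm{u}^\top\bm{a}|]\le\big(\mathbbm{E}[\mathbbm{1}\{|g|>\lambda\}g^2]\big)^{1/2}\big(\mathbbm{E}[(\bm{u}^\top\bm{a})^2]\big)^{1/2}$, where the second factor is $1$ and the first is the tail moment of a Gaussian, which decays like $\lambda\exp(-\lambda^2/(2R^2))$ up to constants and polynomial factors in $\lambda/R$. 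Plugging in $\lambda=CR\sqrt{\log m}$ makes $\exp(-\lambda^2/(2R^2))=m^{-C^2/2}$, and dividing by the extra $\lambda$ in front, the whole bound is $O(m^{-C^2/2}\cdot\mathrm{poly}(\log m))$; choosing $C$ large enough (e.g. so that $C^2/2\ge 10$) absorbs the polylog factor and yields $O(m^{-9})$, as claimed.

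The main obstacle is not conceptual but bookkeeping: getting a clean, fully rigorous bound on the truncated Gaussian moment $\mathbbm{E}[\mathbbm{1}\{|g|>\lambda\}g^2]$ with the correlation between $g=\bm{a}^\top\bm{x}$ and $\bm{u}^\top\bm{a}$ handled correctly, and then verifying that the sub-Gaussian/Gaussian tail estimate, after division by $\lambda$, still beats $m^{-9}$ for the stated choice of $\lambda$. One clean way to sidestep the correlation entirely is to decompose $\bm{u}=\alpha\frac{\bm{x}}{\|\bm{x}\|_2}+\beta\bm{w}$ with $\bm{w}\perp\bm{x}$, $\|\bm{w}\|_2=1$; then $\bm{u}^\top\bm{a}=\alpha g/\|\bm{x}\|_2+\beta\bm{w}^\top\bm{a}$ with $\bm{w}^\top\bm{a}\sim\mathcal{N}(0,1)$ independent of $g$, so the expectation splits into $\mathbbm{E}[\mathbbm{1}\{|g|>\lambda\}(|g|/\lambda-1)g]\cdot\alpha/\|\bm{x}\|_2$ plus a term with a mean-zero independent factor that vanishes, reducing everything to a one-dimensional Gaussian tail integral that can be bounded explicitly by standard estimates such as $\mathbbm{P}(|g|>\lambda)\le 2\exp(-\lambda^2/(2R^2))$ and integration by parts. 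I expect this route to make the constant chasing routine.
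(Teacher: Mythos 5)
Your proposal is correct and follows essentially the same route as the paper's proof: condition on $\bm{a}$, average over the dither to obtain the clipped linear function (so the deviation from $\bm{x}/\lambda$ is supported on $\{|\bm{a}^\top\bm{x}|>\lambda\}$), and then control that tail contribution by Cauchy--Schwarz together with the sub-Gaussian tail of $\bm{a}^\top\bm{x}$ under $\lambda=CR\sqrt{\log m}$. The only differences are cosmetic (the paper splits Cauchy--Schwarz as $\sqrt{\mathbbm{E}[(\cdot)^2]}\sqrt{\mathbbm{P}(|\bm{a}^\top\bm{x}|>\lambda)}$ rather than via a truncated second moment, and your orthogonal-decomposition remark is a clean optional shortcut the paper does not use).
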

\begin{lem}\label{missim} {\rm (Choice of $T$ in SIM)\textbf{.}}
    If $\bm{a}\sim \mathcal{N}(0,\bm{I}_n)$, for some function $f$ and any $\bm{x}\in \mathbb{S}^{n-1}$ it holds that $\mathbbm{E}[f(\bm{a}^\top\bm{x})\bm{a}]=\mu \bm{x}$ for $\mu= \mathbbm{E}[f(g)g]$ with $g\sim \mathcal{N}(0,1)$.
\end{lem}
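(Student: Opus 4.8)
The plan is to use the rotational invariance of the isotropic Gaussian together with the orthogonal decomposition of $\bm{a}$ along the direction $\bm{x}$. Since $\bm{x}\in\mathbb{S}^{n-1}$, I would write $\bm{a} = (\bm{a}^\top\bm{x})\bm{x} + \bm{a}_\perp$ with $\bm{a}_\perp := (\bm{I}_n - \bm{x}\bm{x}^\top)\bm{a}$, and first record the standard facts that $g:=\bm{a}^\top\bm{x}\sim\mathcal{N}(0,1)$, that $\bm{a}_\perp$ is a centered Gaussian vector supported on the hyperplane $\bm{x}^\perp$, and that $g$ and $\bm{a}_\perp$ are independent (they are jointly Gaussian and $\mathbbm{E}[g\,\bm{a}_\perp]=\bm{0}$, so zero correlation implies independence).

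Substituting this decomposition into the target expectation and using linearity gives
\[
\mathbbm{E}\big[f(\bm{a}^\top\bm{x})\,\bm{a}\big] = \mathbbm{E}\big[f(g)\,g\big]\,\bm{x} + \mathbbm{E}\big[f(g)\,\bm{a}_\perp\big].
\]
For the cross term, $f(g)$ is a function of $g$ alone (and, in the case of a random $f$ that is independent of $\bm{a}$, one first conditions on $f$), hence independent of $\bm{a}_\perp$; therefore $\mathbbm{E}[f(g)\,\bm{a}_\perp] = \mathbbm{E}[f(g)]\,\mathbbm{E}[\bm{a}_\perp] = \bm{0}$. The surviving term equals $\mu\bm{x}$ with $\mu = \mathbbm{E}[f(g)g]$, which is exactly the claim.

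There is essentially no obstacle in the argument; the only point needing a word of care is integrability, so that splitting the expectation is legitimate and $\mu$ is well defined. In the SIM setting of interest this is automatic: $f(g)$ is sub-Gaussian ($\psi=\|f(g)\|_{\psi_2}<\infty$), so $f(g)g$ is sub-exponential and integrable by Lemma \ref{psi1psi2}, and each coordinate of $f(g)\,\bm{a}_\perp$ is integrable by Cauchy–Schwarz; when $f$ is random, Fubini's theorem justifies interchanging the expectations over $f$ and over $\bm{a}$.
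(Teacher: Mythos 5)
Your proof is correct and rests on the same underlying fact as the paper's: rotational invariance of $\mathcal{N}(0,\bm{I}_n)$, which the paper exploits by rotating $\bm{x}$ to $\bm{e}_1$ via an orthogonal matrix $\bm{P}$ and using coordinate independence there, while you use the equivalent intrinsic decomposition $\bm{a}=(\bm{a}^\top\bm{x})\bm{x}+\bm{a}_\perp$ with $\bm{a}^\top\bm{x}$ independent of $\bm{a}_\perp$. Your added remarks on integrability (sub-Gaussianity of $f(g)$ making $f(g)g$ sub-exponential) and on conditioning when $f$ is random address points the paper leaves implicit, so nothing is missing.
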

\begin{lem}
    \label{uniformdither}
    {\rm (Choice of $T$ in uniformly quantized GCS with dithering)\textbf{.}}
    Given any $\delta>0$, let $\tau\sim \mathscr{U}[-\frac{\delta}{2},\frac{\delta}{2}]$ and $\mathcal{Q}_\delta(\cdot)=\delta\big(\lfloor\frac{\cdot}{\delta}\rfloor+\frac{1}{2}\big)$. Then, for any $a\in \mathbb{R}$, it holds that $\mathbbm{E}[\mathcal{Q}_\delta(a+\tau)]=a$. In particular, let $\bm{a}\in \mathbb{R}^n$ be a random vector satisfying $\mathbbm{E}(\bm{a}\bm{a}^\top)=\bm{I}_n$, and $\tau\sim \mathscr{U}[-\frac{\delta}{2},\frac{\delta}{2}]$ be independent of $\bm{a}$, then we have $\mathbbm{E}[\mathcal{Q}_\delta(\bm{a}^\top\bm{x}+\tau)\bm{a}]=\bm{x}$.  
\end{lem}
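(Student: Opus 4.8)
The plan is to exploit the defining periodicity of the dithered uniform quantizer, namely that $\mathcal{Q}_\delta(a+\tau)$ equals $a$ plus a quantization-error term that, once the dither $\tau$ is uniform on an interval of length exactly $\delta$, has mean zero. First I would fix $a\in\mathbb{R}$ and write $a = \delta\big(\lfloor a/\delta\rfloor + \theta\big)$ with fractional part $\theta\in[0,1)$, so that computing $\mathbbm{E}[\mathcal{Q}_\delta(a+\tau)]$ reduces, after pulling out the integer multiple of $\delta$ and using translation invariance of the Lebesgue measure on $[-\delta/2,\delta/2]$, to evaluating $\delta\,\mathbbm{E}\big[\lfloor \theta + \tau/\delta\rfloor + \tfrac12\big]$. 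Substituting $u=\tau/\delta\sim\mathscr{U}[-\tfrac12,\tfrac12]$, the claim becomes $\mathbbm{E}_{u}\big[\lfloor\theta+u\rfloor\big] = -\tfrac12 + \theta$ for $\theta\in[0,1)$; this is a one-line computation splitting the integration range at $u = 1-\theta$ (where $\lfloor\theta+u\rfloor$ jumps from $0$ to $1$), giving $0\cdot(1-\theta-\tfrac12\cdot[\ldots]) + 1\cdot\theta = \theta - \tfrac12$ after keeping track of the interval lengths. Multiplying back by $\delta$ and adding $\delta\lfloor a/\delta\rfloor$ recovers exactly $a$.

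Next, for the second assertion I would simply apply the scalar identity pointwise in the randomness of $\bm{a}$. Condition on $\bm{a}$: since $\tau$ is independent of $\bm{a}$, the scalar result with the choice $a = \bm{a}^\top\bm{x}$ gives $\mathbbm{E}_\tau[\mathcal{Q}_\delta(\bm{a}^\top\bm{x}+\tau)\mid\bm{a}] = \bm{a}^\top\bm{x}$. Therefore
\begin{equation}\nonumber
\mathbbm{E}\big[\mathcal{Q}_\delta(\bm{a}^\top\bm{x}+\tau)\,\bm{a}\big]
= \mathbbm{E}_{\bm{a}}\Big[\bm{a}\,\mathbbm{E}_\tau\big[\mathcal{Q}_\delta(\bm{a}^\top\bm{x}+\tau)\mid \bm{a}\big]\Big]
= \mathbbm{E}_{\bm{a}}\big[\bm{a}\,(\bm{a}^\top\bm{x})\big]
= \mathbbm{E}_{\bm{a}}\big[\bm{a}\bm{a}^\top\big]\,\bm{x} = \bm{x},
\end{equation}
using the tower property, linearity of expectation, and the hypothesis $\mathbbm{E}[\bm{a}\bm{a}^\top]=\bm{I}_n$. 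No integrability issue arises because $|\mathcal{Q}_\delta(a+\tau)-a|\le\delta/2 + \delta$ is bounded, so $\mathcal{Q}_\delta(\bm{a}^\top\bm{x}+\tau)\,\bm{a}$ is integrable whenever $\bm{a}$ has a finite first moment (in particular when $\mathbbm{E}[\bm{a}\bm{a}^\top]$ exists).

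There is no real obstacle here; the only point requiring a little care is the bookkeeping of the floor function's jump location in the scalar expectation, i.e.\ correctly identifying that $\lfloor\theta+u\rfloor$ takes the value $0$ on a sub-interval of length $1-\theta$ and the value $1$ on a sub-interval of length $\theta$ as $u$ ranges over $[-\tfrac12,\tfrac12]$ — a shift by $\tfrac12$ relative to the more familiar $u\in[0,1]$ normalization. Everything else is elementary. I would present the scalar identity as the main computation and then dispatch the vector statement in two lines via conditioning.
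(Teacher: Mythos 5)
Your vector step is identical to the paper's: condition on $\bm{a}$, apply the scalar identity with $a=\bm{a}^\top\bm{x}$, and finish with $\mathbbm{E}[\bm{a}\bm{a}^\top]=\bm{I}_n$ via the tower property. Where you genuinely diverge is on the scalar identity $\mathbbm{E}[\mathcal{Q}_\delta(a+\tau)]=a$ itself: the paper does not prove it but cites it from Gray's classical work on dithered quantization, whereas you derive it from first principles. A self-contained derivation is a perfectly good (arguably better) route, but your bookkeeping of the floor function is off at exactly the step you flagged as ``the only point requiring a little care.'' With $u=\tau/\delta\sim\mathscr{U}[-\tfrac12,\tfrac12]$ and $\theta\in[0,1)$, the function $\lfloor\theta+u\rfloor$ does \emph{not} take the value $0$ on a sub-interval of length $1-\theta$ and the value $1$ on one of length $\theta$; those lengths belong to the normalization $u\sim\mathscr{U}[0,1]$ and would give $\mathbbm{E}[\lfloor\theta+u\rfloor]=\theta$, not the $\theta-\tfrac12$ your own target identity (correctly) demands. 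For $u\in[-\tfrac12,\tfrac12]$ you need a case split: if $\theta<\tfrac12$ the jump is at $u=-\theta$ and the floor equals $-1$ on a set of length $\tfrac12-\theta$ and $0$ elsewhere; if $\theta\ge\tfrac12$ the jump is at $u=1-\theta$ and the floor equals $0$ on a set of length $\tfrac32-\theta$ and $1$ on a set of length $\theta-\tfrac12$. Both cases yield $\theta-\tfrac12$, which the $+\tfrac12$ inside $\mathcal{Q}_\delta$ cancels to recover $a$. Since the stated target identity and the final conclusion are correct, this is a repairable slip in an otherwise valid argument rather than a fatal gap, but as written the computation does not produce the number you claim.
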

Lemma \ref{bounduniformquan} facilitates our analysis of the uniform quantizer. 
\begin{lem}
    \label{bounduniformquan}
    Let $f_i(\cdot)=\delta\big(\lfloor\frac{\cdot+\tau_i}{\delta}\rfloor+\frac{1}{2}\big)$ for $\tau_i\sim\mathscr{U}[-\frac{\delta}{2},\frac{\delta}{2}]$, and $f_{i,\beta}(\cdot)$ be defined in (\ref{approxi}) for some $0<\beta<\frac{\delta}{2}$. Moreover, let $\xi_{i,\beta}(a)=f_{i,\beta}(a)-a$, $\varepsilon_{i,\beta}(a)=f_{i,\beta}(a)-f_i(a)$, then for any $a\in\mathbb{R}$, $|\xi_{i,\beta}(a)|\leq 2\delta$, $|\varepsilon_{i,\beta}(a)|\leq \delta$ holds deterministically.
\end{lem}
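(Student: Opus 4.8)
The plan is to prove the two bounds separately and combine them through the identity $\xi_{i,\beta}(a)=\varepsilon_{i,\beta}(a)+\big(f_i(a)-a\big)$.

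For $\varepsilon_{i,\beta}$ I would argue directly from the construction (\ref{approxi}). Outside $\mathscr{D}_{f_i}+[-\frac\beta2,\frac\beta2]$ we have $\varepsilon_{i,\beta}(a)=0$, so fix a discontinuity $x_0\in\mathscr{D}_{f_i}$; since consecutive discontinuities of the dithered quantizer are spaced by exactly $\delta>2\cdot\frac\beta2$, the intervals $[x_0-\frac\beta2,x_0]$ and $[x_0,x_0+\frac\beta2]$ associated with distinct $x_0$ are disjoint, and on each of them $f_i$ is constant and equal to $f_i^-(x_0)$ (resp.\ $f_i^+(x_0)$) — in particular $f_i(x_0\mp\frac\beta2)=f_i^\mp(x_0)$, which is where $\beta<\delta/2$ enters. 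By (\ref{approxi}) the value $f_{i,\beta}(a)$ on these two intervals is a convex combination of $f_i^-(x_0)$, $f_i^a(x_0)$, and $f_i^+(x_0)$, hence lies in the closed interval with endpoints $f_i^-(x_0),f_i^+(x_0)$, while $f_i(a)$ lies in $\{f_i^-(x_0),f_i^+(x_0)\}$; therefore $|\varepsilon_{i,\beta}(a)|\le|f_i^+(x_0)-f_i^-(x_0)|$, which is the jump of $f_i$ at $x_0$, equal to $\delta$ since Assumption \ref{assumption2} holds with $B_0=\delta$. (A one-line computation of the two linear pieces in fact sharpens this to $\delta/2$, but $\delta$ is enough.) This gives $|\varepsilon_{i,\beta}(a)|\le\delta$ for every $a\in\mathbb{R}$.

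For the quantizer error I would use the elementary fact that $|\mathcal{Q}_\delta(b)-b|\le\frac\delta2$ for all $b\in\mathbb{R}$, which follows at once from $\delta\lfloor b/\delta\rfloor\le b<\delta\lfloor b/\delta\rfloor+\delta$ together with $\mathcal{Q}_\delta(b)=\delta(\lfloor b/\delta\rfloor+\tfrac12)$. Applying it with $b=a+\tau_i$ and using $|\tau_i|\le\frac\delta2$ yields $|f_i(a)-a|\le|\mathcal{Q}_\delta(a+\tau_i)-(a+\tau_i)|+|\tau_i|\le\delta$, deterministically in $\tau_i$. Combining the two estimates, $|\xi_{i,\beta}(a)|\le|\varepsilon_{i,\beta}(a)|+|f_i(a)-a|\le\delta+\delta=2\delta$ for all $a$. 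There is no real obstacle here; the only point needing care is the case bookkeeping in (\ref{approxi}) and checking that $f_i(x_0\pm\frac\beta2)$ equals the relevant one-sided limit, both of which rest on $\beta<\delta/2$ and the $\delta$-separation of the discontinuities — everything else is a one-line bound.
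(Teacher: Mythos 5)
Your proof is correct and follows essentially the same route as the paper's: decompose $\xi_{i,\beta}=\varepsilon_{i,\beta}+(f_i-\mathrm{id})$, bound $|f_i(a)-a|\le\frac{\delta}{2}+\frac{\delta}{2}$ via the quantizer error plus the dither, and combine by the triangle inequality. The only difference is that you spell out the bound $|\varepsilon_{i,\beta}(a)|\le\delta$ from the interpolation structure of (\ref{approxi}) (correctly, including the observation that it sharpens to $\delta/2$), whereas the paper simply asserts it from the definition.
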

More generally, the approximation error $|\varepsilon_{i,\beta}(a)|$ can always be bounded as follows. 
\begin{lem}
    \label{supporterrorf}
   Suppose that $f_i$ satisfies Assumption \ref{assumption2}, and for any $\beta\in [0,\frac{\beta_0}{2}]$ we construct $f_{i,\beta}$ as in (\ref{approxi}).  Then, for any $a\in\mathbb{R}$, we have 
    $|\varepsilon_{i,\beta}(a)|\leq \big(\frac{3L_0\beta}{2}+B_0\big)\mathbbm{1}(a\in \mathscr{D}_{f_i}+[-\frac{\beta}{2},\frac{\beta}{2}])$. 
\end{lem}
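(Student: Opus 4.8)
The plan is to prove Lemma~\ref{supporterrorf} directly from the definition of the Lipschitz approximation in~\eqref{approxi}, treating the two cases $a\notin \mathscr{D}_{f_i}+[-\frac{\beta}{2},\frac{\beta}{2}]$ and $a\in \mathscr{D}_{f_i}+[-\frac{\beta}{2},\frac{\beta}{2}]$ separately. In the first case $f_{i,\beta}(a)=f_i(a)$ by construction, so $\varepsilon_{i,\beta}(a)=0$ and the claimed bound holds trivially since the indicator vanishes. The substantive case is when $a$ lies in the $\frac{\beta}{2}$-neighborhood of some $x_0\in \mathscr{D}_{f_i}$; note that by the separation condition $|a-b|\ge \beta_0$ for distinct discontinuities and $\beta<\frac{\beta_0}{2}$, such an $x_0$ is unique, so there is no ambiguity. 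Here I would use the triangle inequality $|\varepsilon_{i,\beta}(a)|=|f_{i,\beta}(a)-f_i(a)|\le |f_{i,\beta}(a)-f_i^a(x_0)|+|f_i^a(x_0)-f_i(a)|$ and bound each piece.

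First I would bound $|f_i^a(x_0)-f_i(a)|$. Recall $f_i^a(x_0)=\frac12(f_i^-(x_0)+f_i^+(x_0))$ is the average of the one-sided limits, and by Assumption~\ref{assumption2} the jump $|f_i^+(x_0)-f_i^-(x_0)|\le B_0$ and $f_i$ is $L_0$-Lipschitz on each side away from $x_0$. Using our convention $f_i(x_0)=f_i^+(x_0)$, for $a$ with $x_0< a\le x_0+\frac{\beta}{2}$ we get $|f_i(a)-f_i^+(x_0)|\le L_0|a-x_0|\le \frac{L_0\beta}{2}$, and $|f_i^+(x_0)-f_i^a(x_0)|=\frac12|f_i^+(x_0)-f_i^-(x_0)|\le \frac{B_0}{2}$, so $|f_i(a)-f_i^a(x_0)|\le \frac{L_0\beta}{2}+\frac{B_0}{2}$. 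For $a$ with $x_0-\frac{\beta}{2}\le a<x_0$, $f_i(a)$ is close to $f_i^-(x_0)$: $|f_i(a)-f_i^-(x_0)|\le \frac{L_0\beta}{2}$, and again $|f_i^-(x_0)-f_i^a(x_0)|\le \frac{B_0}{2}$, giving the same bound. (At $a=x_0$ we have $f_{i,\beta}(x_0)=f_i^a(x_0)$ and $f_i(x_0)=f_i^+(x_0)$, so $|\varepsilon_{i,\beta}(x_0)|\le \frac{B_0}{2}$, consistent.)

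Next I would bound $|f_{i,\beta}(a)-f_i^a(x_0)|$ using the explicit piecewise-linear formula. On $[x_0-\frac{\beta}{2},x_0]$, $f_{i,\beta}(a)-f_i^a(x_0)= -\frac{2[f_i^a(x_0)-f_i(x_0-\frac{\beta}{2})](x_0-a)}{\beta}$, whose magnitude is at most $|f_i^a(x_0)-f_i(x_0-\frac{\beta}{2})|$ since $\frac{2(x_0-a)}{\beta}\in[0,1]$; and $|f_i^a(x_0)-f_i(x_0-\frac{\beta}{2})|\le |f_i^a(x_0)-f_i^-(x_0)|+|f_i^-(x_0)-f_i(x_0-\frac{\beta}{2})|\le \frac{B_0}{2}+\frac{L_0\beta}{2}$. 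The analogous computation on $[x_0,x_0+\frac{\beta}{2}]$ gives the same bound. Adding the two pieces yields $|\varepsilon_{i,\beta}(a)|\le \big(\frac{L_0\beta}{2}+\frac{B_0}{2}\big)+\big(\frac{L_0\beta}{2}+\frac{B_0}{2}\big)=L_0\beta+B_0$, which is already $\le \frac{3L_0\beta}{2}+B_0$. Since this holds whenever the indicator is $1$ and $\varepsilon_{i,\beta}(a)=0$ otherwise, the lemma follows.

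I do not anticipate a genuine obstacle here — the result is a routine consequence of the construction — but the one point requiring care is bookkeeping the one-sided limits and the $f_i(x_0)=f_i^+(x_0)$ convention so that the boundary point $a=x_0$ and the two half-intervals are all handled consistently; a slightly lossy triangle-inequality split is deliberately used to keep the argument clean, which is why the stated constant $\frac{3L_0\beta}{2}$ is comfortably larger than the $L_0\beta$ the crude bound actually gives. If one wanted the tighter constant one could instead bound $|f_{i,\beta}(a)-f_i(a)|$ on each half-interval in one shot by noting both functions agree at the endpoint $x_0\pm\frac{\beta}{2}$ and writing $f_{i,\beta}(a)-f_i(a)$ as a difference of a linear interpolant and $f_i$, but the extra room is harmless for the downstream use in Assumption~\ref{assump4}.
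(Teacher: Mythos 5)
Your proof is correct and follows essentially the same route as the paper's: a case split on whether $a$ lies in $\mathscr{D}_{f_i}+[-\frac{\beta}{2},\frac{\beta}{2}]$, followed by a triangle inequality anchored at $x_0$ (note $f_{i,\beta}(x_0)=f_i^a(x_0)$, so your split point coincides with the paper's). The only cosmetic difference is that the paper bounds $|f_{i,\beta}(a)-f_{i,\beta}(x_0)|$ by invoking the Lipschitz constant from Lemma~\ref{lem1}, whereas you read the bound directly off the piecewise-linear formula, which is why you land on the slightly tighter $L_0\beta+B_0$ before relaxing to the stated constant.
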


\section{More Details of the Proof Sketch}

\subsection{Set-Restricted Eigenvalue Condition}\label{sec:s-rec}
 \begin{defi}
    Let $\mathcal{S}\subset \mathbb{R}^n$. For parameters $\gamma,\delta>0$, a matrix $\bm{A}\in \mathbb{R}^{m\times n}$ is said to satisfy S-REC($\mathcal{S},\gamma,\delta$) if the following holds: 
    \begin{equation}\nonumber
        \|\bm{A}(\bm{x}_1-\bm{x}_2)\|_2\geq \gamma\|\bm{x}_1-\bm{x}_2\|_2 -\delta,~\forall~\bm{x}_1,\bm{x}_2\in \mathcal{S}.
    \end{equation}
\end{defi}
It was proved in \cite{bora2017compressed} that $\frac{1}{\sqrt{m}}\bm{A}$ satisfies the S-REC with high probability if the entries of $\bm{A}$ are i.i.d. standard Gaussian.

\begin{lem}
    \label{S-REC}
    {\rm (Lemma 4.1 in \cite{bora2017compressed})\textbf{.}} Let $G:\mathbb{B}_2^k(r)\to \mathbb{R}^n$ be $L$-Lipschitz for some $r,L>0$, and define $\mathcal{K}=G(\mathbb{B}^k_2(r))$. For any $\alpha\in (0,1)$, if $\bm{A}\in \mathbb{R}^{m\times n}$ has i.i.d. $\mathcal{N}(0,1)$ entries, and $m=\Omega\big(\frac{k}{\alpha^2}\log\frac{Lr}{\delta}\big)$, then $\frac{1}{\sqrt{m}}\bm{A}$ satisfies S-REC($\mathcal{K}$,$1-\alpha$,$\delta$) with probability at least $1-\exp(-\Omega(\alpha^2m))$.
\end{lem}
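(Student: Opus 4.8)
I would run the classical net-and-concentration argument, with the additive slack $\delta$ in the S-REC absorbing the discretisation error. Two quick reductions first. Since $G$ is $L$-Lipschitz on a set of diameter $2r$, $\mathrm{diam}(\mathcal{K})\leq 2Lr$; hence if $\delta\geq 2Lr$ the right-hand side $(1-\alpha)\|\bm{x}_1-\bm{x}_2\|_2-\delta$ is $\leq 0$ for all $\bm{x}_1,\bm{x}_2\in\mathcal{K}$ and there is nothing to prove, so we may assume $\delta<2Lr$ (in particular $\log\frac{Lr}{\delta}>0$). It then suffices to establish $\tfrac{1}{\sqrt{m}}\|\bm{A}\bm{z}\|_2\geq(1-\tfrac{\alpha}{2})\|\bm{z}\|_2-\delta$ for all $\bm{z}\in\mathcal{K}^-:=\mathcal{K}-\mathcal{K}$, which is stronger than the stated conclusion because $1-\tfrac{\alpha}{2}\geq1-\alpha$.

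The argument rests on three ingredients. \emph{(i) Pointwise concentration.} For a fixed $\bm{u}\in\mathbb{S}^{n-1}$ one has $\bm{A}\bm{u}\sim\mathcal{N}(0,\bm{I}_m)$, so Lemma~\ref{anorm} applied in $\mathbb{R}^m$ with $t=\tfrac{\alpha}{2}\sqrt{m}$ gives $\mathbbm{P}\big(\tfrac{1}{\sqrt{m}}\|\bm{A}\bm{u}\|_2\leq1-\tfrac{\alpha}{2}\big)\leq 2\exp(-\Omega(\alpha^2 m))$; by homogeneity, for any fixed $\bm{z}\neq 0$ the inequality $\tfrac{1}{\sqrt{m}}\|\bm{A}\bm{z}\|_2\geq(1-\tfrac{\alpha}{2})\|\bm{z}\|_2$ fails with probability at most $2\exp(-\Omega(\alpha^2 m))$. \emph{(ii) Covering.} Since $\mathbb{B}_2^k(r)$ has an $\tfrac{\eta}{L}$-net of cardinality $\leq(3Lr/\eta)^k$ and $G$ is $L$-Lipschitz, $\mathcal{K}$ has an $\eta$-net $\mathcal{N}_\eta\subset\mathcal{K}$ with $|\mathcal{N}_\eta|\leq(3Lr/\eta)^k$ — this is the estimate behind Lemma~\ref{entropy} — and hence $\mathscr{N}(\mathcal{K}^-,2\eta)\leq(3Lr/\eta)^{2k}$. \emph{(iii) Residual control.} Let $R_\eta:=\tfrac{1}{\sqrt{m}}\sup_{\bm{z}\in\mathcal{K}^-\cap\,\eta\mathbb{B}_2^n}\|\bm{A}\bm{z}\|_2$. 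Since $\mathcal{K}^-\cap\eta\mathbb{B}_2^n$ has radius $\eta$ and, by Lemma~\ref{entropy}, metric entropy $\leq 2k\log\frac{6Lr}{\zeta}$ at every scale $\zeta\in(0,Lr)$, Chevet's inequality and Dudley's entropy integral give $\mathbbm{E}R_\eta\lesssim\eta+\eta\sqrt{k\log(Lr/\eta)/m}$, and since $\bm{A}\mapsto\sup_{\bm{z}}\|\bm{A}\bm{z}\|_2$ is $\eta$-Lipschitz in the Frobenius norm, Gaussian concentration yields $R_\eta\leq C_0\eta$ with probability $1-\exp(-\Omega(\alpha^2 m))$ once $m\gtrsim\alpha^{-2}k\log(Lr/\eta)$, for an absolute constant $C_0\geq1$. (A shortcut that trades a $\log n$ factor for brevity is the crude bound $R_\eta\leq\tfrac{1}{\sqrt{m}}\|\bm{A}\|_{\mathrm{op}}\,\eta\lesssim\sqrt{n/m}\,\eta$.)

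To assemble: apply (i) with a union bound over the $\leq(3Lr/\eta)^{2k}$ pairs of points of $\mathcal{N}_\eta$ — the failure probability $(3Lr/\eta)^{2k}\cdot 2\exp(-\Omega(\alpha^2 m))$ is $\exp(-\Omega(\alpha^2 m))$ provided $m\gtrsim\alpha^{-2}k\log(3Lr/\eta)$ — so that on this event $\tfrac{1}{\sqrt{m}}\|\bm{A}(\bm{w}-\bm{w}')\|_2\geq(1-\tfrac{\alpha}{2})\|\bm{w}-\bm{w}'\|_2$ for all $\bm{w},\bm{w}'\in\mathcal{N}_\eta$. Intersect this with the event $R_\eta\leq C_0\eta$ from (iii). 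Now take any $\bm{z}=\bm{x}_1-\bm{x}_2$ with $\bm{x}_1,\bm{x}_2\in\mathcal{K}$, pick $\bm{w}_i\in\mathcal{N}_\eta$ with $\|\bm{x}_i-\bm{w}_i\|_2\leq\eta$ (so $\bm{x}_i-\bm{w}_i\in\mathcal{K}^-\cap\eta\mathbb{B}_2^n$), and decompose $\bm{z}=(\bm{w}_1-\bm{w}_2)+(\bm{x}_1-\bm{w}_1)-(\bm{x}_2-\bm{w}_2)$; then
\[
\tfrac{1}{\sqrt{m}}\|\bm{A}\bm{z}\|_2\;\geq\;(1-\tfrac{\alpha}{2})\|\bm{w}_1-\bm{w}_2\|_2-2C_0\eta\;\geq\;(1-\tfrac{\alpha}{2})\big(\|\bm{z}\|_2-2\eta\big)-2C_0\eta\;\geq\;(1-\tfrac{\alpha}{2})\|\bm{z}\|_2-(2+2C_0)\eta.
\]
Choosing $\eta=\delta/(2+2C_0)$ yields $\tfrac{1}{\sqrt{m}}\|\bm{A}\bm{z}\|_2\geq(1-\tfrac{\alpha}{2})\|\bm{z}\|_2-\delta\geq(1-\alpha)\|\bm{z}\|_2-\delta$; all the size constraints then collapse to $m=\Omega\big(\alpha^{-2}k\log(Lr/\delta)\big)$ and the total failure probability is $\exp(-\Omega(\alpha^2 m))$, as claimed.

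\emph{Expected main obstacle.} The only genuinely delicate step is the residual bound $R_\eta$: the union bound controls $\bm{A}$ only on the finitely many net points, whereas $\bm{A}(\bm{x}_i-\bm{w}_i)$ sweeps an infinite (albeit small-radius) subset of $\mathcal{K}^-$, and estimating it crudely through $\|\bm{A}\|_{\mathrm{op}}$ injects a spurious $\log n$. Getting the clean $\log(Lr/\delta)$ rate in the statement requires exploiting that $\mathcal{K}^-\cap\eta\mathbb{B}_2^n$ has only $\tilde{O}(k)$ effective dimension — i.e., a chaining / Gaussian-width estimate built on the metric-entropy bounds of Lemma~\ref{entropy}. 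Everything else is routine $\chi^2$-concentration and volumetric net counting.
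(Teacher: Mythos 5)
Your proof is correct, but note that the paper itself offers no proof of this lemma at all --- it is imported verbatim as Lemma~4.1 of \cite{bora2017compressed}, so the only comparison available is with the original source. There your argument genuinely diverges. Bora et al.\ run a multi-scale chaining argument: a nested sequence of nets $M_0\subseteq\cdots\subseteq M_l$ at geometrically shrinking resolutions $\delta_i=\delta_0/2^i$, a telescoping decomposition of $G(x)-G(x')$ across levels with geometrically growing per-level deviation parameters, and a crude operator-norm bound $\|\bm{A}\|_{\mathrm{op}}=O(\sqrt{n})$ to kill the finest-scale residual (the extra $\log n$ this would cost at a single scale is amortized away precisely because the chaining lets them push the finest resolution down to $\approx\delta\sqrt{m/n}$ at negligible entropy cost). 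You instead use a single net at scale $\eta\asymp\delta$ and control the residual $\sup_{\bm{z}\in\mathcal{K}^-\cap\eta\mathbb{B}_2^n}\|\bm{A}\bm{z}\|_2$ directly via Chevet/Dudley plus Borell--TIS concentration, exploiting that this small-radius set still has metric entropy only $\tilde{O}(k)$ --- exactly the estimate recorded in Lemma~\ref{entropy} of this paper. Your version is arguably cleaner and more in the spirit of the paper's own toolkit (it reuses Lemma~\ref{entropy} and a Gaussian-width bound rather than introducing a bespoke chaining scheme), at the price of invoking Chevet's inequality and Gaussian concentration of Lipschitz functionals, which Bora et al.\ avoid. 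All the quantitative steps check out: the union bound over $|\mathcal{N}_\eta|^2$ pairs, the $\chi^2$ lower tail, the radius-$\eta$ Lipschitz constant in the Borell--TIS step, and the final bookkeeping $\eta=\delta/(2+2C_0)$ all yield the stated $m=\Omega(\alpha^{-2}k\log(Lr/\delta))$ and failure probability $\exp(-\Omega(\alpha^2 m))$.
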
 

\subsection{Lipschitz Approximation}\label{sec:lipappro}
The approximation error $\varepsilon_{i,\beta}(\cdot)$ can be expanded as: \begin{equation}\nonumber
    \varepsilon_{i,\beta}(x)=\begin{cases}
         ~~~~~~~~~~~~~~~~~~~~~~~~~~0~
~~~~~~~~~~~~~~~~~~~~~~~~,~~~~~~~~~~~~~~~~~~~~~\text{if }x\notin \mathscr{D}_{f_i}+ [-\frac{\beta}{2},\frac{\beta}{2}]\\
       f^a_i(x_0)-f_i(x)-\frac{2[f_i^a(x_0)-f_i(x_0-\frac{\beta}{2})](x_0-x)}{\beta},~~~~\text{if }x\in [x_0-\frac{\beta}{2},x_0],~(\exists x_0\in \mathscr{D}_{f_i})\\f_i^a(x_0)-f_i(x)+\frac{2[f_i(x_0+\frac{\beta}{2})-f_i^a(x_0)](x-x_0)}{\beta},
        ~~~~\text{if }x\in [x_0, x_0+\frac{\beta}{2}],~(\exists x_0\in \mathscr{D}_{f_i})
    \end{cases}.
\end{equation}
Although $|\varepsilon_{i,\beta}|$ is Lipschitz continuous, $\varepsilon_{i,\beta}$ is not. In particular, given $x_0\in \mathscr{D}_{f_i}$ we note that \begin{equation}\begin{aligned}\nonumber
     &\varepsilon_{i,\beta}^-(x_0)=\lim_{x\to x_0^-}\varepsilon_{i,\beta}(x)=f_i^a(x_0)-f_i^-(x_0)=\frac{1}{2}\big(f_i^+(x_0)-f_i^-(x_0)\big),
     \\&\varepsilon_{i,\beta}^+(x_0)=\lim_{x\to x_0^+}\varepsilon_{i,\beta}(x)=f_i^a(x_0)-f_i^+(x_0)=\frac{1}{2}\big(f_i^-(x_0)-f_i^+(x_0)\big).
     \end{aligned}
 \end{equation}
 Thus, it is crucial to include the absolute value for rendering the continuity.

\section{Proofs of Main Results}

\subsection{Proof of Theorem \ref{thm2}}
\begin{proof} Up to rescaling, we only need to prove that $\|\bm{\hat{x}}-T\bm{x^*}\|_2\leq 3\epsilon$   holds uniformly for all $\bm{x^*}\in \mathcal{K}$. We can assume   $\|\bm{\hat{x}}-T\bm{x^*}\|_2\geq 2\epsilon$; otherwise, the desired bound is immediate.

\noindent
\textbf{(1) Lower bounding the left-hand side of (\ref{2.2}).}

We   use S-REC to find a lower bound for $\big\|\frac{\bm{A}}{\sqrt{m}}(\bm{\hat{x}}-T\bm{x^*})\big\|_2^2$. Specifically, we invoke Lemma \ref{S-REC} with $\alpha= \frac{1}{2}$ and  $\delta=\frac{\epsilon}{2T}$, which gives that under $m =\Omega\big(k\log \frac{LTr}{\epsilon}\big)$, with probability at least $1-\exp(-cm)$, the following holds:
\begin{equation}\label{B.1}
    \left\|\frac{\bm{A}}{\sqrt{m}}(\bm{x}_1-\bm{x}_2)\right\|_2\geq \frac{1}{2}\|\bm{x}_1-\bm{x}_2\|_2-\frac{\epsilon}{2T},~\forall~\bm{x}_1,\bm{x}_2\in \mathcal{K}.
\end{equation}
Recall that we assume $\|\bm{\hat{x}}-T\bm{x^*}\|_2\geq 2\epsilon$, $T^{-1}\bm{\hat{x}}\in\mathcal{K}$, and $\bm{x^*}\in \mathcal{K}$, so we set $\bm{x}_1=\frac{\bm{\hat{x}}}{T},\bm{x}_2=\bm{x^*}$ in (\ref{B.1}) to obtain 
\begin{equation}\nonumber
    \left\|\frac{\bm{A}}{\sqrt{m}}\Big(\frac{\bm{\hat{x}}}{T}-\bm{x^*}\Big)\right\|_2\geq \frac{1}{2}\left\|\frac{\bm{\hat{x}}}{T}-\bm{x^*}\right\|_2-\frac{\epsilon}{2T}\geq \frac{1}{4T}\big\|\bm{\hat{x}}-T\bm{x^*}\big\|_2.
\end{equation}
Thus, the left-hand side of (\ref{2.2}) can be lower bounded by $\Omega\big( \|\bm{\hat{x}}-T\bm{x^*}\|_2^2\big).$

\noindent
\textbf{(2) Upper bounding the right-hand side of (\ref{2.2}).}

As analysed in (\ref{2.5}) and (\ref{2.7}), the right-hand side of (\ref{2.2}) is bounded by $2\|\bm{\hat{x}}-T\bm{x^*}\|_2\cdot\big(\mathscr{R}_{u1}+\mathscr{R}_{u2}\big)$, so   all that remains is to bound $\mathscr{R}_{u1},\mathscr{R}_{u2}$. In the rest of the proof, we simply write $\sup_{\bm{x},\bm{v}}:=\sup_{\bm{x}\in\mathcal{K},\bm{v}\in (\mathcal{K}_\epsilon^-)^*}$ and recall the shorthand $\xi_{i,\beta}(a)=f_{i,\beta}(a)-Ta$. Thus,   the first factor of $\mathscr{R}_{u1}$ is given by $\xi_{i,\beta}(\bm{a}_i^\top\bm{x})$. By centering, we have \begin{equation}
    \begin{aligned}\label{center1}
        &\mathscr{R}_{u1}\leq \underbrace{\sup_{\bm{x},\bm{v}}\frac{1}{m}\sum_{i=1}^m\left\{[\xi_{i,\beta}(\bm{a}_i^\top\bm{x})] (\bm{a}_i^\top\bm{v})- \mathbbm{E}\big[[\xi_{i,\beta}(\bm{a}_i^\top\bm{x})] (\bm{a}_i^\top\bm{v})\big]\right\}}_{\mathscr{R}_{u1,c}}+\underbrace{\sup_{\bm{x},\bm{v}}\mathbbm{E}\big[[\xi_{i,\beta}(\bm{a}_i^\top\bm{x})](\bm{a}_i^\top\bm{v})\big]}_{\mathscr{R}_{u1,e}},
    \end{aligned}
\end{equation}
and
\begin{equation}
    \begin{aligned}\label{center2}
        &\mathscr{R}_{u2}\leq \underbrace{\sup_{\bm{x},\bm{v}}\left\{\frac{1}{m}\sum_{i=1}^m\big|\varepsilon_{i,\beta}(\bm{a}_i^\top\bm{x})\big|\big|\bm{a}_i^\top\bm{v}\big|-\mathbbm{E}\big[\big|\varepsilon_{i,\beta}(\bm{a}_i^\top\bm{x})\big|\big|\bm{a}_i^\top\bm{v}\big|\big]\right\}}_{\mathscr{R}_{u2,c}}+\underbrace{\sup_{\bm{x,\bm{v}}}\mathbbm{E}\big[\big|\varepsilon_{i,\beta}(\bm{a}_i^\top\bm{x})\big|\big|\bm{a}_i^\top\bm{v}\big|\big]}_{\mathscr{R}_{u2,e}}.
    \end{aligned}
\end{equation}
We will invoke Theorem \ref{thm1} multiple times to derive the required bounds. 

\noindent
\textbf{(2.1)   Bounding the centered product  process $\mathscr{R}_{u1,c}$.}

 We   let $g_{\bm{x}}(\bm{a}_i)=\xi_{i,\beta}(\bm{a}_i^\top\bm{x})$ and $h_{\bm{v}}(\bm{a}_i)=\bm{a}_i^\top\bm{v}$, and write   $$\mathscr{R}_{u1,c}=\sup_{\bm{x},\bm{v}}\frac{1}{m}\sum_{i=1}^m\big\{g_{\bm{x}}(\bm{a}_i)h_{\bm{v}}(\bm{a}_i)-\mathbbm{E}[g_{\bm{x}}(\bm{a}_i)h_{\bm{v}}(\bm{a}_i)]\big\}.$$
We verify conditions in Theorem \ref{thm1} as follows: \begin{itemize}
[leftmargin=5ex,topsep=0.25ex]
\setlength\itemsep{-0.1em}
    \item For any $\bm{x},\bm{x}'\in\mathcal{K}$, because $\xi_{i,\beta}$ is $\big(L_0+\frac{B_0}{\beta}+T\big)$-Lipschitz continuous (Lemma \ref{lem1}), we have 
    \begin{equation}
    \begin{aligned}\nonumber
        \|g_{\bm{x}}(\bm{a}_i)-g_{\bm{x}'}(\bm{a}_i)\|_{\psi_2}&\leq (L_0+T+\frac{B_0}{\beta})\|\bm{a}_i^\top\bm{x}-\bm{a}_i^\top\bm{x}'\|_{\psi_2}\\&=O\big(L_0+T+\frac{B_0}{\beta}\big)\|\bm{x}-\bm{x}'\|_2.
        \end{aligned}
    \end{equation}
    \item  Since $\bm{a}_i\sim \mathcal{N}(\bm{0},\bm{I}_n)$, by Lemma \ref{anorm}, with probability $1-2\exp(-\Omega(n))$ we have $\|\bm{a}_i\|_2=O(\sqrt{n})$.  On this event, we have \begin{equation}
        \begin{aligned}\nonumber
            |g_{\bm{x}}(\bm{a}_i)-g_{\bm{x}'}(\bm{a}_i)|&\leq (L_0+T+\frac{B_0}{\beta})|\bm{a}_i^\top\bm{x}-\bm{a}_i^\top\bm{x}'|\\&\leq (L_0+T+\frac{B_0}{\beta})\|\bm{a}_i\|_2\|\bm{x}-\bm{x}'\|_2 \\&=O\Big(\sqrt{n}\Big[L_0+T+\frac{B_0}{\beta}\Big]\Big)\|\bm{x}-\bm{x}'\|_2.
        \end{aligned}
    \end{equation}
    \item Recall $(\mathcal{K}_\epsilon^-)^*$ in (\ref{errset}). Since $(\mathcal{K}_{\epsilon}^-)^*\subset \mathbb{S}^{n-1}$, for any $\bm{v},\bm{v}'\in (\mathcal{K}^-_\epsilon)^*$, we have $\|\bm{a}_i^\top\bm{v}-\bm{a}_i^\top\bm{v}'\|_{\psi_2}=O(1)\|\bm{v}-\bm{v}'\|_2$, and when $\|\bm{a}_i\|=O(\sqrt{n})$ we have $|\bm{a}^\top_i\bm{v}-\bm{a}^\top_i\bm{v}'|\leq \|\bm{a}_i\|_2\|\bm{v}-\bm{v}'\|_2=O(\sqrt{n})\|\bm{v}-\bm{v}'\|_2$. Moreover, because $(\mathcal{K}_\epsilon^-)^*\subset \mathbb{B}_2^n$, for any $\bm{v}\in (\mathcal{K}_\epsilon^-)^*$ we have $\|\bm{a}_i^\top\bm{v}\|_{\psi_2}=O(1)$, $|\bm{a}_i^\top\bm{v}|\leq \|\bm{a}_i\|_2=O(\sqrt{n})$. 
\end{itemize}
Combined with Assumption \ref{assump3} and its parameters $(A_g^{(1)},U_g^{(1)},P_0^{(1)})$ and $(A_g^{(2)},U_g^{(2)},P_0^{(2)})$, $\mathscr{R}_{u1,c}$ satisfies
the conditions of Theorem \ref{thm1} with the following parameters \begin{equation}
    \begin{aligned}\nonumber
        &M_g\asymp L_0+T+\frac{B_0}{\beta} ,~A_g=A_g^{(1)},~M_h\asymp1,~A_h\asymp 1
        \\&L_g\asymp\sqrt{n}\big(L_0+T+\frac{B_0}{\beta}\big),~U_g=U_g^{(1)},~L_h\asymp\sqrt{n},~U_h\asymp\sqrt{n}
    \end{aligned}
\end{equation}
and $P_0=P_0^{(1)}+2\exp(-\Omega(n))$.  Now suppose that we have  
\begin{equation}\label{samp1}
    m\gtrsim \mathscr{H}\left(\mathcal{K},\frac{A_g^{(1)}}{\sqrt{m}n[L_0+T+\frac{B_0}{\beta}]}\right)+\mathscr{H}\left((\mathcal{K}^-_\epsilon)^*,\frac{A_g^{(1)}}{\sqrt{m}(\sqrt{n}U_g^{(1)}+A_g^{(1)})}\right),
\end{equation}
and note that by using  Lemma \ref{entropy}, (\ref{samp1}) can be guaranteed by 
\begin{equation}\label{con1sample}
    m\gtrsim k\log\left(\frac{Lr\sqrt{m}}{A_g^{(1)}}\Big[n(L_0+T+\frac{B_0}{\beta})+\frac{T(\sqrt{n}U_g^{(1)}+A_g^{(1)})}{\epsilon}\Big]\right).
\end{equation}
Then Theorem \ref{thm1} yields that  the following bound  holds with probability at least $1-mP_0^{(1)}-C\exp(-\Omega(k))-m\exp(-\Omega(n))$:
\begin{equation}\begin{aligned}\label{conbound1}
    |\mathscr{R}_{u1,c}|&\lesssim \frac{A_{g}^1}{\sqrt{m}}\sqrt{\mathscr{H}\left(\mathcal{K},\frac{A_g^{(1)}}{\sqrt{m}n[L_0+T+\frac{B_0}{\beta}]}\right)+\mathscr{H}\left((\mathcal{K}^-_\epsilon)^*,\frac{A_g^{(1)}}{\sqrt{m}(\sqrt{n}U_g^{(1)}+A_g^{(1)})}\right)}\\&\lesssim A_g^{(1)}\sqrt{\frac{k}{m}\log \left(\frac{Lr\sqrt{m}}{A_g^{(1)}}\Big[n(L_0+T+\frac{B_0}{\beta})+\frac{T(\sqrt{n}U_g^{(1)}+A_g^{(1)})}{\epsilon}\Big]\right)}.\end{aligned}
\end{equation}

\noindent
\textbf{(2.2) Bounding the centered product  process $\mathscr{R}_{u2,c}$.}

We   let $g_{\bm{x}}(\bm{a}_i)=|\varepsilon_{i,\beta}(\bm{a}_i^\top\bm{x})|$ and $h_{\bm{v}}(\bm{a}_i)=|\bm{a}_i^\top\bm{v}|$, and write   $$\mathscr{R}_{u2,c}=\sup_{\bm{x},\bm{v}}\frac{1}{m}\sum_{i=1}^m\big\{g_{\bm{x}}(\bm{a}_i)h_{\bm{v}}(\bm{a}_i)-\mathbbm{E}[g_{\bm{x}}(\bm{a}_i)h_{\bm{v}}(\bm{a}_i)]\big\}.$$
We verify the conditions in Theorem \ref{thm1} as follows: 
\begin{itemize}[leftmargin=5ex,topsep=0.25ex]
\setlength\itemsep{-0.1em}
    \item For any $\bm{x},\bm{x}'\in \mathcal{K}$, because $|\varepsilon_{i,\beta}|$ is $(2L_0+\frac{B_0}{\beta})$-Lipschitz continuous (Lemma \ref{lem1}), we have 
    \begin{equation}
        \begin{aligned}\nonumber
             \|g_{\bm{x}}(\bm{a}_i)-g_{\bm{x}'}(\bm{a}_i)\|_{\psi_2}&\leq (2L_0+\frac{B_0}{\beta})\|\bm{a}_i^\top\bm{x}-\bm{a}_i^\top\bm{x}'\|_{\psi_2}\\&=O\big(L_0+\frac{B_0}{\beta}\big)\|\bm{x}-\bm{x}'\|_2.
        \end{aligned}
    \end{equation}
   \item By Lemma \ref{anorm}, with probability at least $1-2\exp(-\Omega(n))$ we have $\|\bm{a}_i\|_2=O(\sqrt{n})$. On this event, we have  
    \begin{equation}
        \begin{aligned}\nonumber
            |g_{\bm{x}}(\bm{a}_i)-g_{\bm{x}'}(\bm{a}_i)|&\leq (2L_0+\frac{B_0}{\beta})|\bm{a}_i^\top\bm{x}-\bm{a}_i^\top\bm{x}'|\\&\leq (2L_0+\frac{B_0}{\beta})\|\bm{a}_i\|_2\|\bm{x}-\bm{x}'\|_2 \\&=O\big(\sqrt{n}\big[L_0+\frac{B_0}{\beta}\big]\big)\|\bm{x}-\bm{x}'\|_2.
        \end{aligned}
    \end{equation}
    \item For any $\bm{v},\bm{v}'\in (\mathcal{K}^-_\epsilon)^*$ we have $\big\||\bm{a}_i^\top\bm{v}|-|\bm{a}_i^\top\bm{v}'|\big\|_{\psi_2}\leq \|\bm{a}_i^\top(\bm{v}-\bm{v}')\|_{\psi_2}=O(1)\|\bm{v}-\bm{v}'\|_2$. Similarly as before, we assume $\|\bm{a}_i\|_2=O(\sqrt{n})$, which gives 
    $|\bm{a}_i^\top\bm{v}-\bm{a}_i^\top\bm{v}'|\leq \|\bm{a}_i\|_2\|\bm{v}-\bm{v}'\|_2=O(\sqrt{n})\|\bm{v}-\bm{v}'\|_2$. Moreover,   $(\mathcal{K}_\epsilon^-)^*\subset \mathbb{B}_2^n$ implies $\big\||\bm{a}_i^\top\bm{v}|\big\|_{\psi_2}=O(1)$ and $|\bm{a}_i^\top\bm{v}|\leq \|\bm{a}_i\|_2\|\bm{v}\|_2=O(\sqrt{n})$ holds for all $\bm{v}\in (\mathcal{K}^-_\epsilon)^*$.  
\end{itemize}

Combined with Assumption \ref{assump3}, $\mathscr{R}_{u2,c}$ satisfies the conditions of Theorem \ref{thm1} with \begin{equation}
    \begin{aligned}\nonumber
        &M_g\asymp L_0+\frac{B_0}{\beta},~A_g=A_g^{(2)},~M_h\asymp 1,~A_h\asymp O(1)
        \\& L_g\asymp\sqrt{n}\big(L_0+\frac{B_0}{\beta}\big),~U_g=U_g^{(2)},~L_h\asymp \sqrt{n},~U_h\asymp \sqrt{n}
    \end{aligned}
\end{equation}
and $P_0= P_0^{(2)}+2\exp(-\Omega(n))$. Suppose we have \begin{equation}\nonumber
    m\gtrsim \mathscr{H}\left(\mathcal{K},\frac{A_g^{(2)}}{\sqrt{m}n[L_0+\frac{B_0}{\beta}]}\right)+ \mathscr{H}\left((\mathcal{K}_\epsilon^-)^*,\frac{A_g^{(2)}}{\sqrt{m}(A_g^{(2)}+\sqrt{n}U_g^{(2)})}\right),
\end{equation}
which   can be guaranteed   (from Lemma \ref{entropy}) by \begin{equation}\label{con2sample}
    m\gtrsim k\log \left(\frac{Lr\sqrt{m}}{A_g^{(2)}}\Big[n\big(L_0+\frac{B_0}{\beta}\big)+\frac{T(A_g^{(2)}+\sqrt{n}U_g^{(2)})}{\epsilon}\Big]\right).
\end{equation}
Then, we can invoke Theorem \ref{thm1} to obtain that the following bound holds with     probability at least $1-mP_0^{(2)}-2m\exp(-\Omega(n))-C\exp(-\Omega(k))$: \begin{equation}\begin{aligned}\label{conbound2}
    |\mathscr{R}_{u2,c}|&\lesssim \frac{A_g^{(2)}}{\sqrt{m}}\sqrt{ \mathscr{H}\left(\mathcal{K},\frac{A_g^{(2)}}{\sqrt{m}n[L_0+\frac{B_0}{\beta}]}\right)+ \mathscr{H}\left((\mathcal{K}_\epsilon^-)^*,\frac{A_g^{(2)}}{\sqrt{m}(A_g^{(2)}+\sqrt{n}U_g^{(2)})}\right)}\\&\lesssim A_g^{(2)}\sqrt{\frac{k}{m}\log \left(\frac{Lr\sqrt{m}}{A_g^{(2)}}\Big[n\big(L_0+\frac{B_0}{\beta}\big)+\frac{T(A_g^{(2)}+\sqrt{n}U_g^{(2)})}{\epsilon}\Big]\right)}.\end{aligned}
\end{equation}

\noindent
\textbf{(2.3) Bounding the expectation terms $\mathscr{R}_{u1,e},\mathscr{R}_{u2,e}$.}

Recall that $\xi_{i,\beta}(a)=f_{i,\beta}(a)-Ta$ and $\varepsilon_{i,\beta}(a)=f_{i,\beta}(a)-f_i(a)$, and so $\xi_{i,\beta}(a)=\varepsilon_{i,\beta}(a)+f_i(a)-Ta$. Hence, by using $\mathbbm{E}[\bm{a}_i\bm{a}_i^\top]=\bm{I}_n$ and $\|\bm{v}\|_2=1$, we have 
\begin{equation}
    \begin{aligned}\label{ebound1}
        \mathscr{R}_{u1,e}&\leq \sup_{\bm{x},\bm{v}}\mathbbm{E}\big[\big(f_{i}(\bm{a}_i^\top\bm{x})-T\bm{a}_i^\top\bm{x}\big)(\bm{a}_i^\top\bm{v})\big]+\sup_{\bm{x},\bm{v}}\mathbbm{E}\big[[\varepsilon_{i,\beta}(\bm{a}_i^\top\bm{x})]\bm{a}_i^\top\bm{v}\big]\\&\leq \sup_{\bm{x}\in \mathcal{X}} \big\|\mathbbm{E}[f_i(\bm{a}_i^\top\bm{x})\bm{a}_i]-T\bm{x}\big\|_2 + \sup_{\bm{x},\bm{v}}\mathbbm{E}\big[|\varepsilon_{i,\beta}(\bm{a}_i^\top\bm{x})||\bm{a}_i^\top\bm{v}|\big]\\&\leq \sup_{\bm{x}\in\mathcal{X}}\rho(\bm{x})+ \mathscr{R}_{u2,e},
    \end{aligned}
\end{equation}
where $\rho(\bm{x})$ is the model mismatch defined in (\ref{mismatch}), and $\mathscr{R}_{u2,e}$ is defined in (\ref{center2}). It remains to bound $\mathscr{R}_{u2,e}$, for which we first apply Cauchy-Schwarz  (with $\|\bm{v}\|_2\leq 1$) and then use Lemma \ref{supporterrorf} to obtain 
\begin{equation}
    \begin{aligned}
        \label{ebound2}&~~~\mathscr{R}_{u2,e}=\sup_{\bm{x},\bm{v}}\mathbbm{E}\big[|\varepsilon_{i,\beta}(\bm{a}_i^\top\bm{x})||\bm{a}_i^\top\bm{v}|\big]\\&\leq \sup_{\bm{x},\bm{v}}\sqrt{\mathbbm{E}[|\varepsilon_{i,\beta}(\bm{a}_i^\top\bm{x})|^2]}\sqrt{\mathbbm{E}[|\bm{a}_i^\top\bm{v}|^2]}\\&\leq \sup_{\bm{x}\in\mathcal{K}}\sqrt{\mathbbm{E}\left[|\varepsilon_{i,\beta}(\bm{a}_i^\top\bm{x})|^2\mathbbm{1}\Big(\bm{a}_i^\top\bm{x}\in \mathscr{D}_{f_i}+\Big[-\frac{\beta}{2},\frac{\beta}{2}\Big]\Big)\right]}\\&\leq \Big(\frac{3L_0\beta}{2}+B_0\Big)\sup_{\bm{x}\in \mathcal{K}}\sqrt{\mathbbm{P}\Big(\bm{a}_i^\top\bm{x}\in \mathscr{D}_{f_i}+\Big[-\frac{\beta}{2},\frac{\beta}{2}\Big]\Big)}\\&\leq \Big(\frac{3L_0\beta}{2}+B_0\Big)\sup_{\bm{x}\in \mathcal{K}}\sqrt{\mu_{\beta}(\bm{x})},
    \end{aligned}
\end{equation}
where we use Lemma \ref{supporterrorf} in the third and fourth line, and
$\mu_\beta(\bm{x})$ is defined in (\ref{approxiexpe}).

\noindent
\textbf{(3) Combining everything to conclude the proof.}

Recall that in Assumption \ref{assump4}, we assume that
\begin{equation}
    \nonumber\sup_{\bm{x}\in\mathcal{X}}\rho(\bm{x})\lesssim (A_g^{(1)}\vee A_g^{(2)})\sqrt{\frac{k}{m}},
\end{equation}
and we take sufficiently small $\beta_{1}$   such that \begin{equation}
     \nonumber(L_0\beta_1+B_0)\sup_{\bm{x}\in\mathcal{K}}\sqrt{\mu_{\beta_{1}}(\bm{x})}\lesssim (A_g^{(1)}\vee A_g^{(2)})\sqrt{\frac{k}{m}},
\end{equation} then by setting  $\beta=\beta_1$, the derived bound
of $\mathscr{R}_{u1,c}+\mathscr{R}_{u2,c}$ (see (\ref{conbound1}) and (\ref{conbound2})) dominates that of $\mathscr{R}_{u1,e}+\mathscr{R}_{u2,e}$ (see (\ref{ebound1}) and (\ref{ebound2})), and so $\mathscr{R}_{u1}+\mathscr{R}_{u2}\lesssim\mathscr{R}_{u1,c}+\mathscr{R}_{u2,c} $.

 Recall that (\ref{conbound1}) and (\ref{conbound2}) are guaranteed by the sample size of (\ref{con1sample}) and (\ref{con2sample}), while (\ref{con1sample}) and (\ref{con2sample}) hold  as  long as \begin{equation}\label{B.20}\begin{aligned}
    m&\gtrsim k\log \left(\frac{Lr\sqrt{m}}{A_g^{(1)}\wedge A_g^{(2)}}\left[n\big(L_0+T+\frac{B_0}{\beta_1}\big)+\frac{T(\sqrt{n}(U_g^{(1)}\vee U_g^{(2)})+(A_g^{(1)}\vee A_g^{(2)}))}{\epsilon}\right]\right)\\&:=k\mathscr{L}~~~~~~~~~~~~~~~~~~~~~~~~~~~~~~~~~~\text{(here we use $\mathscr{L}$ to abbreviate the log factors)}
    \end{aligned}
\end{equation} 
with probability at least $1-m(P_0^{(1)}+P_0^{(2)})-m\exp(-\Omega(n))-C\exp(-\Omega(k))$ we have 
\begin{equation}
    \begin{aligned}\nonumber
        &\mathscr{R}_{u1,c}+\mathscr{R}_{u2,c}\lesssim (A_g^{(1)}\vee A_g^{(2)})\sqrt{\frac{k\mathscr{L}}{m}}.
    \end{aligned}
\end{equation}
Therefore, the right-hand side of (\ref{2.2}) can be uniformly bounded by \begin{equation}\label{eq:ori_bound}
   O\left(\|\bm{\hat{x}}-T\bm{x^*}\|_2\cdot (A_g^{(1)}\vee A_g^{(2)})\sqrt{\frac{k\mathscr{L}}{m}}\right).
\end{equation}
Combining with the uniform lower bound for the left-hand side of (\ref{2.2}), i.e., $\Omega(\|\bm{\hat{x}}-T\bm{x^*}\|_2^2)$, we obtain the following bound uniformly for all $\bm{x^*}$:
\begin{equation}\nonumber
    \|\bm{\hat{x}}-T\bm{x^*}\|_2\lesssim (A_g^{(1)}\vee A_g^{(2)})\sqrt{\frac{k\mathscr{L}}{m}}.
\end{equation}
Hence, as long as 
\begin{equation}\nonumber
    m\gtrsim \big(A_g^{(1)}\vee A_g^{(2)}\big)^2 \frac{k\mathscr{L}}{\epsilon^2},
\end{equation}
we again obtain $\|\bm{\hat{x}}-T\bm{x^*}\|_2\leq 3\epsilon$, which completes the proof.
\end{proof}

\subsection{Proof of Theorem \ref{thm1}}

\begin{proof}
{\textbf{Step 1.  Control the process over finite nets.}}

\vspace{1mm}

Recall that $\mathcal{X}$ and $\mathcal{V}$ are the index sets of $\bm{x}$ and $\bm{v}$, as stated in Theorem \ref{thm1}.
We first establish the desired concentration for a fixed pair $(\bm{x},\bm{v})\in \mathcal{X}\times \mathcal{V}$. By Lemma \ref{psi1psi2}, $\|g_{\bm{x}}(\bm{a}_i)h_{\bm{v}}(\bm{a}_i)\|_{\psi_1}\leq \|g_{\bm{x}}(\bm{a}_i)\|_{\psi_2}\|h_{\bm{v}}(\bm{a}_i)\|_{\psi_2}\leq A_gA_h$. Furthermore,   centering (Lemma \ref{centering}) gives  $$\|g_{\bm{x}}(\bm{a}_i)h_{\bm{v}}(\bm{a}_i)-\mathbbm{E}[g_{\bm{x}}(\bm{a}_i)h_{\bm{v}}(\bm{a}_i)]\|_{\psi_1}=O(A_gA_h).$$ 
Thus, for fixed $(\bm{x},\bm{v})\in \mathcal{X}\times \mathcal{V}$ we define
$$I_{\bm{x},\bm{v}}=\frac{1}{m}\sum_{i=1}^m g_{\bm{x}}(\bm{a}_i)h_{\bm{v}}(\bm{a}_i)-\mathbbm{E}[g_{\bm{x}}(\bm{a}_i)h_{\bm{v}}(\bm{a}_i)].$$
Then, we can invoke Bernstein's inequality (Lemma \ref{bern}) to obtain for any $t\geq 0$ that
\begin{equation}\label{fixed}
    \mathbbm{P}\big(|I_{\bm{x},\bm{v}}|\geq t\big)\leq 2\exp \left(-cm\min\left\{\Big(\frac{t}{A_gA_h}\Big)^2,\frac{t}{A_gA_h}\right\}\right).
\end{equation}

We construct  $\mathcal{G}_1$ as an $\eta_1$-net of $\mathcal{X}$, and $\mathcal{G}_2$ as an $\eta_2$-net of $\mathcal{V}$, with both nets being minimal in that $\log|\mathcal{G}_1|=\mathscr{H}(\mathcal{X},\eta_1)$, $\log|\mathcal{G}_2|=\mathscr{H}(\mathcal{V},\eta_2)$, and where $\eta_1,\eta_2$ are to be chosen later. 
Then, we take a union bound of (\ref{fixed}) over $(\bm{x},\bm{v})\in\mathcal{G}_1\times \mathcal{G}_2$ to obtain \begin{equation}
    \label{2.13}\mathbbm{P}\left(\sup_{\substack{\bm{x}\in\mathcal{G}_1\\\bm{v}\in\mathcal{G}_2}}|{I}_{\bm{x},\bm{v}}|\geq t\right)\leq 2\exp\left(\mathscr{H}(\mathcal{X},\eta_1)+\mathscr{H}(\mathcal{V},\eta_2)-cm\min\left\{\Big(\frac{t}{A_gA_h}\Big)^2,\frac{t}{A_gA_h}\right\}\right).
\end{equation}
Now we set  $t\asymp A_gA_h\sqrt{\frac{\mathscr{H}(\mathcal{X},\eta_1)+\mathscr{H}(\mathcal{V},\eta_2)}{m}}$ for a sufficiently large hidden constant. Then, if $m\geq C( \mathscr{H}(\mathcal{X},\eta_1)+\mathscr{H}(\mathcal{V},\eta_2))$ for large enough $C$ so that $\frac{t}{A_gA_h}\leq 1$ (we assume this now and will confirm it in (\ref{comple}) after specifying $\eta_1,\eta_2$), (\ref{2.13}) gives  \begin{equation}
    \nonumber\mathbbm{P}\left(\sup_{\substack{\bm{x}\in\mathcal{G}_1\\\bm{v}\in\mathcal{G}_2}}|{I}_{\bm{x},\bm{v}}|\gtrsim   A_gA_h\sqrt{\frac{\mathscr{H}(\mathcal{X},\eta_1)+\mathscr{H}(\mathcal{V},\eta_2)}{m}}\right) \leq 2\exp\left(-\Omega\big(\mathscr{H}(\mathcal{X},\eta_1)+\mathscr{H}(\mathcal{V},\eta_2)\big)\right)
\end{equation}
Hence, from now on we proceed with the proof on the event \begin{equation}
    \label{netbound}
    \sup_{\substack{\bm{x}\in \mathcal{G}_1\\\bm{v}\in \mathcal{G}_2}}|I_{\bm{x},\bm{v}}|\lesssim A_gA_h\sqrt{\frac{\mathscr{H}(\mathcal{X},\eta_1)+\mathscr{H}(\mathcal{V},\eta_2)}{m}},
\end{equation}
which holds within the promised probability. 

 {\noindent\textbf{Step 2. Control the approximation error of the nets.}}

\vspace{1mm}

We have derived a bound for $\sup_{\bm{x}\in \mathcal{G}_1,\bm{v}\in \mathcal{G}_2}|I_{\bm{x},\bm{v}}|$, while we want to control $I=\sup_{\bm{x}\in \mathcal{X},\bm{v}\in\mathcal{V}}|I_{\bm{x},\bm{v}}|$, so we further investigate how close these two quantities are. We define the event as $$
    \mathscr{E}_1=\big\{\text{the events in (\ref{event}) hold for all } \bm{a}_i,i\in [m]\big\},
$$
then by assumption \textbf{(A2.)} in the theorem statement, a union bound gives $\mathbbm{P}(\mathscr{E}_1)\geq 1-m P_0.$ In the following, we proceed with the analysis of the event   $\mathscr{E}_1$.
Combining with (\ref{netbound}) we now bound $|I_{\bm{x},\bm{v}}|$ for any given $\bm{x}\in \mathcal{X},\bm{v}\in\mathcal{V}$. Specifically, we   pick $\bm{x}'\in \mathcal{G}_1,\bm{v}'\in\mathcal{G}_2$ such that $\|\bm{x}'-\bm{x}\|_2\leq \eta_1,\|\bm{v}'-\bm{v}\|_2\leq \eta_2$, and thus we have 
\begin{equation}\label{2.15}
    |{I}_{\bm{x},\bm{v}}|\leq |I_{\bm{x}',\bm{v}'}|+|I_{\bm{x},\bm{v}}-I_{\bm{x}',\bm{v}'}|\leq  O\left(A_gA_h\sqrt{\frac{\mathscr{H}(\mathcal{X},\eta_1)+\mathscr{H}(\mathcal{V},
    \eta_2)}{m}}\right)+|I_{\bm{x},\bm{v}}-I_{\bm{x}',\bm{v}'}|.
\end{equation}
Moreover, we have
\begin{equation}
    \begin{aligned}\label{2.16}
        &|I_{\bm{x},\bm{v}}-I_{\bm{x}',\bm{v}'}|\\&= \frac{1}{m}\left|\sum_{i=1}^m\Big(g_{\bm{x}}(\bm{a}_i)h_{\bm{v}}(\bm{a}_i)-g_{\bm{x}'}(\bm{a}_i)h_{\bm{v}'}(\bm{a}_i)\Big)-m\cdot \mathbbm{E}\Big(g_{\bm{x}}(\bm{a}_i)h_{\bm{v}}(\bm{a}_i)-g_{\bm{x}'}(\bm{a}_i)h_{\bm{v}'}(\bm{a}_i)\Big)\right|\\
        &\leq \underbrace{\frac{1}{m}\sum_{i=1}^m\left|g_{\bm{x}}(\bm{a}_i)h_{\bm{v}}(\bm{a}_i)-g_{\bm{x}'}(\bm{a}_i)h_{\bm{v}'}(\bm{a}_i)\right|}_{\mathrm{err}_1}+ \underbrace{\mathbbm{E}\left|g_{\bm{x}}(\bm{a}_i)h_{\bm{v}}(\bm{a}_i)-g_{\bm{x}'}(\bm{a}_i)h_{\bm{v}'}(\bm{a}_i)\right|}_{\mathrm{err}_2}
    \end{aligned}
\end{equation}
We bound $\mathrm{err}_1$ using the event $\mathscr{E}_1$ as follows:
\begin{equation}
\begin{aligned}\label{error1}
    \mathrm{err}_1&\leq \frac{1}{m}\sum_{i=1}^m\Big||g_{\bm{x}}(\bm{a}_i)-g_{\bm{x}'}(\bm{a}_i)|\cdot|h_{\bm{v}}(\bm{a}_i)|+|h_{\bm{v}}(\bm{a}_i)-h_{\bm{v}'}(\bm{a}_i)|\cdot|g_{\bm{x}'}(\bm{a}_i)|\Big|\\&\leq \frac{1}{m}\sum_{i=1}^m\Big|L_g\cdot\|\bm{x}-\bm{x}'\|_2\cdot U_h +L_h\cdot \|\bm{v}-\bm{v}'\|_2\cdot U_g \Big|\\&\leq \frac{1}{m}\sum_{i=1}^m \Big|L_gU_h\eta_1+L_hU_g\eta_2\Big| = L_gU_h\eta_1+L_hU_g\eta_2.
    \end{aligned}
\end{equation}
On the other hand, we bound $\mathrm{err}_ 2$ using assumption \textbf{(A1.)}in the theorem statement. Noting that $\mathbbm{E}|X|=O(\|X\|_{\psi_1})$ \cite[Proposition 2.7.1(b)]{vershynin2018high}, and further applying Lemma \ref{psi1psi2} we obtain
\begin{equation}
    \begin{aligned}\label{error2}
        &\mathrm{err}_2\lesssim \|g_{\bm{x}}(\bm{a}_i)h_{\bm{v}}(\bm{a}_i)-g_{\bm{x}'}(\bm{a}_i)h_{\bm{v}'}(\bm{a}_i)\|_{\psi_1}\\&\leq \|(g_{\bm{x}}(\bm{a}_i)-g_{\bm{x}'}(\bm{a}_i))h_{\bm{v}}(\bm{a}_i)\|_{\psi_1}+\|g_{\bm{x}'}(\bm{a}_i)(h_{\bm{v}}(\bm{a}_i)-h_{\bm{v}'}(\bm{a}_i))\|_{\psi_1}\\&\leq \|g_{\bm{x}}(\bm{a}_i)-g_{\bm{x}'}(\bm{a}_i)\|_{\psi_2}\|h_{\bm{v}}(\bm{a}_i)\|_{\psi_2}+\|g_{\bm{x}'}(\bm{a}_i)\|_{\psi_2}\|h_{\bm{v}}(\bm{a}_i)-h_{\bm{v}'}(\bm{a}_i)\|_{\psi_2}\\&\leq M_g\cdot\|\bm{x}-\bm{x}'\|_2\cdot A_h+A_g\cdot M_h\cdot \|\bm{v}-\bm{v}'\|_2\leq M_gA_h\eta_1+A_gM_h\eta_2.
    \end{aligned}
\end{equation}
Note that the bounds (\ref{error1}) and (\ref{error2}) hold uniformly for all $(\bm{x},\bm{v})\in \mathcal{X}\times \mathcal{V}$, 
and hence, we can   substitute them into (\ref{2.15}) and (\ref{2.16}) to obtain  
\begin{equation}\label{2.199}
    \sup_{\substack{\bm{x}\in \mathcal{X}\\\bm{v}\in\mathcal{V}}}|I_{\bm{x},\bm{v}}|\leq  O\left(A_gA_h\sqrt{\frac{\mathscr{H}(\mathcal{X},\eta_1)+\mathscr{H}(\mathcal{V},
    \eta_2)}{m}}\right)+(L_gU_h+M_gA_h)\eta_1+(L_hU_g+M_hA_g)\eta_2. 
\end{equation}
Recall that we use the shorthand $S_{g,h}:=L_gU_h+M_gA_h$ and $T_{g,h}:=L_hU_g+M_hA_g$. We set $\eta_1\asymp \frac{A_gA_h}{\sqrt{m}S_{g,h}}$, $\eta_2\asymp \frac{A_gA_h}{\sqrt{m}T_{g,h}}$ 
so that the right-hand side of (\ref{2.199}) is dominated by the first term. Overall, with a sample size satisfying
\begin{equation}
    \label{comple}m=\Omega\left(\mathscr{H}\Big(\mathcal{X},\frac{A_gA_h}{\sqrt{m}S_{g,h}}\Big)+\mathscr{H}\Big(\mathcal{V},\frac{A_gA_h}{\sqrt{m}T_{g,h}}\Big)\right),
\end{equation}
we can bound $I=\sup_{\bm{x}\in\mathcal{X}}\sup_{\bm{v}\in\mathcal{V}}|I_{\bm{x},\bm{v}}|$ (defined in the theorem statement) as 
\begin{equation}
    {I}\lesssim A_gA_h\sqrt{\frac{\mathscr{H}(\mathcal{X},A_gA_hm^{-1/2}S_{g,h}^{-1})+\mathscr{H}(\mathcal{V}, A_gA_hm^{-1/2}T_{g,h}^{-1})}{m}}
\end{equation}
with probability at least $$1-mP_0-2\exp\left[-\Omega\left(\mathscr{H}\Big(\mathcal{X},\frac{A_gA_h}{\sqrt{m}S_{g,h}}\Big)+\mathscr{H}\Big(\mathcal{V}, \frac{A_gA_h}{\sqrt{m}T_{g,h}}\Big)\right)\right].$$ 
This completes the proof. 
\end{proof}

\section{Other Omitted Proofs}\label{otherproof}
 
\subsection{{Proof of Lemma \ref{lem1} (Lipschitz continuity of $f_{i,\beta}$ and $\varepsilon_{i,\beta}$).}}
\begin{proof}  
    It is straightforward to check that $f_{i,\beta}$ and $|\varepsilon_{i,\beta}|$ are piece-wise continuous functions; hence, it suffices to prove that they are Lipschitz with the claimed Lipschitz constant over each piece. In any interval contained in the part of $x\notin \mathscr{D}_{f_i}+[-\frac{\beta}{2},\frac{\beta}{2}]$, $f_{i,\beta}=f_i$ and  $|\varepsilon_{i,\beta}|=0$ trivially satisfy the claim. In any interval contained in $[x_0-\frac{\beta}{2},x_0]$ for some $x_0\in \mathscr{D}_{f_i}$, $f_{i,\beta}$ is linear with slope $\frac{2}{\beta}\big(f^a_i(x_0)-f_{i}(x_0-\frac{\beta}{2})\big)$, combined with the bound $|f_i^a(x_0)-f_i(x_0-\frac{\beta}{2})|\leq |f_i^a(x_0)-f_i^-(x_0)|+|f_i^-(x_0)-f_i(x_0-\frac{\beta}{2})|\leq \frac{|f_i^+(x_0)-f_i^-(x_0)|}{2}+\frac{L_0\beta}{2}\leq \frac{1}{2}\big(B_0+L_0\beta\big)$, we know that $f_{i,\beta}$ is $\big(L_0+\frac{B_0}{\beta}\big)$-Lipschitz. Further, $|\varepsilon_{i,\beta}|=|f_{i,\beta}-f_i|$, and $f_i$ is $L_0$-Lipschitz over this interval, so $|\varepsilon_{i,\beta}|$ is $\big(2L_0+\frac{B_0}{\beta}\big)$-Lipschitz continuous. A similar argument applies to an interval contained in $[x_0,x_0+\frac{\beta}{2}]$.
\end{proof}

\subsection{{Proof of Lemma \ref{entropy} (Metric entropy  of   constraint sets).}}
\begin{proof}
     \textbf{Bounding $\mathscr{H}(\mathcal{K},\eta)$.} 
     
     By \cite[Corollary 4.2.13]{vershynin2018high}, there exists an 
 $\big(\frac{\eta}{Lr}\big)$-net $\mathcal{G}_1$ of $\mathbb{B}_2^k$ such that $$\log |\mathcal{G}_1|\leq k\log\Big(\frac{2Lr}{\eta}+1\Big)\leq k\log\frac{3Lr}{\eta},$$
 where we use $\eta\leq Lr$.
 Note that $r\mathcal{G}_1$ is an $\big(\frac{\eta}{L}\big)$-net of $\mathbb{B}_2^k(r)$, and because $G$ is $L$-Lipschitz, $G(r\mathcal{G}_1)$ is an $\eta$-net of $\mathcal{K}$, thus yielding $\mathscr{H}(\mathcal{K},\eta) \leq k\log\frac{3Lr}{\eta}$.

 \noindent{\textbf{Bounding $\mathscr{H}(\mathcal{K}^-,\eta)$ and $\mathscr{H}(\mathcal{K}_\epsilon^-,\eta)$}.}

We construct $\mathcal{G}_2$ as an $\big(\frac{\eta}{2}
\big)$-net of $\mathcal{K}$ satisfying $\log|\mathcal{G}_2|\leq k\log\frac{6Lr}{\eta}$. Then, it is easy to see that $\mathcal{G}_2-\mathcal{G}_2$ is an $\eta$-net of $\mathcal{K}^-=\mathcal{K}-\mathcal{K}$, showing that $$\mathscr{H}(\mathcal{K}^-,\eta)\leq \log|\mathcal{G}_2|^2 \leq 2k\log\frac{6Lr}{\eta}.$$ For a given $T>0$, this directly implies $\mathscr{H}(T\mathcal{K}^-,\eta)\leq 2k\log\frac{6TLr}{\eta}$. Moreover, because $\mathcal{K}_\epsilon^-\subset T\mathcal{K}^-$, by   \cite[Exercise 4.2.10]{vershynin2018high} (which states that $\mathscr{H}(\mathcal{K}_1,r)\leq \mathscr{H}(\mathcal{K}_2,\frac{r}{2})$ holds for any $r>0$ if $\mathcal{K}_1\subset \mathcal{K}_2$) we obtain $$\mathscr{H}(\mathcal{K}^-_\epsilon,\eta)\leq \mathscr{H}\Big(T\mathcal{K}^-,\frac{\eta}{2}\Big)\leq 2k\log\frac{12TLr}{\eta}.$$

 \noindent{\textbf{Bounding $\mathscr{H}\big((\mathcal{K}_\epsilon^-)^*,\eta\big)$.}}

We construct $\mathcal{G}_3$ as an $\big(\epsilon\eta\big)$-net of $\mathcal{K}_\epsilon^-$ satisfying $\log|\mathcal{G}_3|\leq 2k\log\frac{12TLr}{\epsilon\eta}$, then we consider $(\mathcal{G}_3)^*:= \{\frac{\bm{z}}{\|\bm{z}\|_2}:\bm{z}\in \mathcal{G}_3\}$. We aim to prove that $(\mathcal{G}_3)^*$ is an $\eta$-net of $(\mathcal{K}^-_\epsilon)^*$. Note that any $\bm{x}_1\in (\mathcal{K}^-_\epsilon)^*$ can be written as $\frac{\bm{z}_1}{\|\bm{z}_1\|_2}$ for some $\bm{z}_1\in \mathcal{K}^-_\epsilon$ and recall that $\|\bm{z}_1\|_2\geq 2\epsilon$.
Moreover, by construction, there exists some $\bm{z}_2\in \mathcal{G}_3$ such that $\|\bm{z}_1-\bm{z}_2\|_2\leq \epsilon\eta$. Note that $\frac{\bm{z}_2}{\|\bm{z}_2\|_2}\in (\mathcal{G}_3)^*$, and moreover we have \begin{equation}
    \begin{aligned}\nonumber
        \left\|\frac{\bm{z}_1}{\|\bm{z}_1\|_2}-\frac{\bm{z}_2}{\|\bm{z}_2\|_2}\right\|_2 &\leq \left\|\frac{\bm{z}_1}{\|\bm{z}_1\|_2} -\frac{\bm{z}_2}{\|\bm{z}_1\|_2}\right\|_2+\left\|\frac{\bm{z}_2}{\|\bm{z}_1\|_2}-\frac{\bm{z}_2}{\|\bm{z}_2\|_2}\right\|_2\\&= \frac{\|\bm{z}_1-\bm{z}_2\|_2}{\|\bm{z}_1\|_2}+\frac{|\|\bm{z}_2\|_2-\|\bm{z}_1\|_2|}{\|\bm{z}_1\|_2}\\&\leq \frac{2\|\bm{z}_1-\bm{z}_2\|_2}{\|\bm{z}_1\|_2} \leq \frac{2\epsilon\eta}{2\epsilon}=\eta.
    \end{aligned} 
\end{equation}
Hence, we obtain $$\mathscr{H}\big((\mathcal{K}^-_\epsilon)^*,\eta\big) \leq \log|(\mathcal{G}_3)^*|\leq \log|\mathcal{G}_3|\leq 2k\log\frac{12TLr}{\epsilon\eta},$$ which completes the proof. \end{proof}

\subsection{{Proof of Lemma \ref{1bitscaling} {(Choice of $T$ in 1-bit GCS){.}}}}

\begin{proof}
    Since $\bm{x}\in \mathbb{S}^{n-1}$, for some orthogonal matrix $\bm{P}$ we have $\bm{Px}=\bm{e}_1$ (the first column of $\bm{I}_n$). Since $\bm{\tilde{a}}:=\bm{Pa}=[\tilde{a}_i]$ has the same distribution as $\bm{a}$, we have 
    \begin{equation}
        \begin{aligned}\nonumber
            \mathbbm{E}[\sign(\bm{a}^\top\bm{x})\bm{a}]&=\mathbbm{E}[\sign(\bm{\tilde{a}}^\top\bm{e}_1)\bm{P}^\top\bm{\tilde{a}}]=\bm{P}^\top \mathbbm{E}[\sign(\tilde{a}_{1})\bm{\tilde{a}}]\\
            &=\bm{P}^\top \sqrt{\frac{2}{\pi}}\bm{e}_1=\sqrt{\frac{2}{\pi}}\bm{x}.
        \end{aligned}
    \end{equation}
\end{proof}

\subsection{{Proof of Lemma \ref{mis1dither} (Choice of $T$ in 1-bit GCS with dithering).}}
\begin{proof}
    We first note that\begin{equation}
       \begin{aligned}\label{A.5}
            \Big\|\mathbbm{E}[\sign(\bm{a}^\top\bm{x}+\tau)\bm{a}]-\frac{\bm{x}}{\lambda}\Big\|_2=\frac{1}{\lambda}\sup_{\bm{v}\in \mathbb{S}^{n-1}}\left(\mathbbm{E}\big[\lambda\cdot \sign(\bm{a}^\top\bm{x}+\tau)\bm{a}^\top\bm{v}\big]-\bm{x}^\top\bm{v}\right).
       \end{aligned}
    \end{equation}
    We first fix $\bm{a}$ and expect over $\bm{\tau}\sim \mathscr{U}[-\lambda,\lambda]$ to obtain 
    \begin{equation}
        \begin{aligned}
\nonumber            &\mathbbm{E}_\tau\big[\lambda\sign(\bm{a}^\top\bm{x}+\tau)\bm{a}^\top\bm{v}\big]\\=&(\lambda\bm{a}^\top\bm{v})\left(\mathbbm{1}(|\bm{a}^\top\bm{x}|>\lambda)\sign(\bm{a}^\top\bm{x})+\mathbbm{1}(|\bm{a}^\top\bm{x}|\leq \lambda)\cdot \Big(\frac{\lambda+\bm{a}^\top\bm{x}}{2\lambda}-\frac{\lambda-\bm{a}^\top\bm{x}}{2\lambda}\Big)\right)\\=&(\bm{a}^\top\bm{x})(\bm{a}^\top\bm{v})\mathbbm{1}(|\bm{a}^\top\bm{x}|\leq \lambda)+ (\lambda\bm{a}^\top\bm{v})\sign(\bm{a}^\top\bm{x})\mathbbm{1}(|\bm{a}^\top\bm{x}|>\lambda).
        \end{aligned}
    \end{equation}
    We plug this into (\ref{A.5}), and note that $\bm{x}^\top\bm{v}=\mathbbm{E}[(\bm{a}^\top\bm{x})(\bm{a}^\top\bm{v})]$, which gives
    \begin{equation}\begin{aligned}\label{A.7}
        &\Big\|\mathbbm{E}[\sign(\bm{a}^\top\bm{x}+\tau)\bm{a}]-\frac{\bm{x}}{\lambda}\Big\|_2\\
        =&\frac{1}{\lambda}\sup_{\bm{v}\in\mathbb{S}^{n-1}}\mathbbm{E}\left(\Big[(\lambda\bm{a}^\top\bm{v})\sign(\bm{a}^\top\bm{x})-(\bm{a}^\top\bm{x})(\bm{a}^\top\bm{v})\Big]\mathbbm{1}(|\bm{a}^\top\bm{x}|>\lambda)\right)\\\leq& \frac{1}{\lambda}\sup_{\bm{v}\in \mathbb{S}^{n-1}}\mathbbm{E}\left([\lambda|\bm{a}^\top\bm{v}|+|\bm{a}^\top\bm{x}||\bm{a}^\top\bm{v}|]\mathbbm{1}(|\bm{a}^\top\bm{x}|>\lambda)\right)
        \end{aligned}
    \end{equation}
     For any $\bm{x}\in \mathbb{B}_2^n(R)$ and $\bm{v}\in \mathbb{S}^{n-1}$, we have $\|\bm{a}^\top\bm{x}\|_{\psi_2}=O(R)$ and $\|\bm{a}^\top\bm{v}\|_{\psi_2}=O(1)$. Applying the Cauchy-Schwarz inequality, we obtain 
    \begin{equation}
\begin{aligned}\nonumber
&\mathbbm{E}\left([\lambda|\bm{a}^\top\bm{v}|+|\bm{a}^\top\bm{x}||\bm{a}^\top\bm{v}|]\mathbbm{1}(|\bm{a}^\top\bm{x}|>\lambda)\right)\\\leq &\sqrt{\mathbbm{E}[(\lambda|\bm{a}^\top\bm{v}|+|\bm{x}^\top\bm{a}\bm{a}^\top\bm{v}|)^2]}\sqrt{\mathbbm{P}(|\bm{a}^\top\bm{x}|>\lambda)}\\\leq &\sqrt{2\big(\lambda^2\mathbbm{E}[|\bm{a}^\top\bm{v}|^2]+\mathbbm{E}[(\bm{a}^\top\bm{x})^2(\bm{a}^\top\bm{v})^2]\big)}\sqrt{2\exp(-c\lambda^2/R^2)}\\\lesssim &\lambda \exp\Big(-\frac{c\lambda^2}{R^2}\Big) \lesssim \frac{\lambda}{m^9} 
        \end{aligned}
    \end{equation}
    Note that in the third line, we use the probability tail bound of the sub-Gaussian $|\bm{a}^\top\bm{x}|$, and in the last line, we use $\lambda=CR\sqrt{\log m}$ with some sufficiently large $C$. The proof is completed by substituting this into (\ref{A.7}). 
\end{proof}

\subsection{{Proof of Lemma \ref{missim} (Choice of $T$ in SIM).}}
\begin{proof}
    This lemma slightly generalizes that of Lemma \ref{1bitscaling}. We again choose an orthogonal matrix $\bm{P}$ such that $\bm{Px}=\bm{e}_1$, where $\bm{e}_1$ represents the first column of $\bm{I}_n$. Since $\bm{a}$ and $\bm{Pa}$ have the same distribution, we have  
    \begin{equation}\begin{aligned}\nonumber
        &\mathbbm{E}[f(\bm{a}^\top\bm{x})\bm{a}]=\bm{P}^\top\mathbbm{E}[f((\bm{Pa})^\top\bm{e}_1)\bm{\bm{P}a}]\\ =&\bm{P}^\top \mathbbm{E}[f(\bm{a}^\top\bm{e}_1)\bm{a}]=\bm{P}^\top (\mu\bm{e}_1)=\mu\bm{x}.
        \end{aligned}
    \end{equation}    
\end{proof}

\subsection{Proof of Lemma \ref{uniformdither} (Choice of $T$ in uniformly quantized GCS with dithering).}

\begin{proof}
    In the theorem, the statement before "In particular"   can be found in \cite[Theorem 1]{gray1993dithered}. Based on this, we have $\mathbbm{E}[\mathcal{Q}_\delta(\bm{a}^\top\bm{x}+\tau)\bm{a}]=\mathbbm{E}_{\bm{a}}\mathbbm{E}_{\tau}[\mathcal{Q}_\delta(\bm{a}^\top\bm{x}+\tau)\bm{a}]=\mathbbm{E}_{\bm{a}}(\bm{a}\bm{a}^\top\bm{x})=\bm{x}.$
\end{proof}

\subsection{Proof of Lemma \ref{bounduniformquan}. (Bounds on $|\xi_{i,\beta}|$ and $|\varepsilon_{i,\beta}|$ for the uniform quantizer).}

\begin{proof}
    By the definition of $f_{i,\beta}$ in (\ref{approxi}), we have  $|\varepsilon_{i,\beta}(a)|=|f_{i,\beta}(a)-f_i(a)|\leq \delta$. It follows that $f_i(\cdot)=\mathcal{Q}_\delta(\cdot+\tau)$ with $\mathcal{Q}_\delta(a)= \delta\big(\lfloor \frac{a}{\delta}\rfloor+\frac{1}{2}\big)$, and $|\mathcal{Q}_\delta(a)-a|\leq \frac{\delta}{2}$ holds for any $a\in \mathbb{R}$. Hence, we have  $|f_i(a)-a|=|\mathcal{Q}_\delta(a+\tau)-(a+\tau)+\tau|\leq |\mathcal{Q}_\delta(a+\tau)-(a+\tau)|+|\tau|\leq \frac{\delta}{2}+\frac{\delta}{2}=\delta$. To complete the proof, we use the inequalities $|\xi_{i,\beta}(a)|\leq |f_{i,\beta}(a)-f_{i}(a)|+|f_i(a)-a|\leq \delta+\delta=2\delta$.  
 \end{proof}
\subsection{{Proof of Lemma \ref{supporterrorf}. (Bound on the approximation error $|\varepsilon_{i,\beta}|$)}}
\begin{proof}
    For any $a\notin \mathscr{D}_{f_i}+[-\frac{\beta}{2},\frac{\beta}{2}]$, by the definition in (\ref{approxi}) we have $\varepsilon_{i,\beta}(a)=0$. If $a\in [x_0-\frac{\beta}{2},x_0]$ for some $x_0\in \mathscr{D}_{f_i}$, then we have
    \begin{equation}
        \begin{aligned}\nonumber
            |\varepsilon_{i,\beta}(a)|=&|f_{i,\beta}(a)-f_i(a)|\\\leq &|f_{i,\beta}(a)-f_{i,\beta}(x_0)|+|f_{i,\beta}(x_0)-f_i^-(x_0)|+|f_i^{-}(x_0)-f_i(a)|
            \\\leq& \Big(2L_0+\frac{B_0}{\beta}\Big)|a-x_0|+|f_{i}^a(x_0)-f_i^-(x_0)|+L_0|x_0-a|\\\leq &\Big(3L_0+\frac{B_0}{\beta}\Big)\cdot\frac{\beta}{2}+\frac{1}{2}|f_i^+(x_0)-f_i^-(x_0)|\leq \frac{3L_0\beta}{2}+B_0,
        \end{aligned}
    \end{equation} 
    where we   use  Lemma \ref{lem1} and Assumption \ref{assumption2} in the third line, and use $|a-x_0|\leq \frac{\beta}{2}$ and $f_i^a(x_0)=\frac{1}{2}\big(f_i^{-}(x_0)+f_i^{+}(x_0)\big)$ in the fourth line. 
\end{proof}
\section{Parameter Selection for Specific Models} \label{sec:parasele}
\subsection{1-bit GCS}
To specialize Theorem \ref{thm2} to this model, we select the parameters as follows: 
\begin{itemize}[leftmargin=5ex,topsep=0.25ex]
\setlength\itemsep{-0.2em}
    \item \textbf{Assumption \ref{assumption2}.} Under the 1-bit observation model $y_i=\sign(\bm{a}_i^\top\bm{x^*})$, the function $f_i(\cdot)=f(\cdot)=\sign(\cdot)$ satisfies Assumption \ref{assumption2} with $(B_0,L_0,\beta_0)= (2,0,\infty)$. 
    \item \textbf{(\ref{3.3a}) in Assumption \ref{assump4}.} Recall that $\mathcal{K}\subset \mathbb{S}^{n-1}$. Under the assumption $\|\bm{x^*}\|_2=1$, we set $T=\sqrt{{2}/{\pi}}$ so that $\rho(\bm{x})=0$ holds for all $\bm{x}\in \mathcal{X}$ (Lemma \ref{1bitscaling}), which provides (\ref{3.3a}). 
    \item \textbf{Assumption \ref{assump3}.} By Lemma \ref{anorm}, we have $\mathbbm{P}(\|\bm{a}\|_2=O(\sqrt{n})) \geq 1-2\exp(-\Omega(n))$, and we suppose that this high-probability event holds. Also note that $|f_{i,\beta}|\leq 1$. Hence, we have 
\begin{gather}\nonumber
    \|\xi_{i,\beta}(\bm{a}^\top\bm{x})\|_{\psi_2}\leq \|f_{i,\beta}(\bm{a}^\top\bm{x})\|_{\psi_2}+\|T\bm{a}^\top\bm{x}\|_{\psi_2}=O(1),\\
    \nonumber\sup_{\bm{x}\in \mathcal{K}}|\xi_{i,\beta}(\bm{a}^\top\bm{x})|\leq |f_{i,\beta}(\bm{a}^\top\bm{x})|+|T\bm{a}^\top\bm{x}|\leq 1+T\|\bm{a}\|_2=O(\sqrt{n}).
\end{gather}
Because $\varepsilon_{i,\beta}=f_{i,\beta}-f_i$, we have $\|\varepsilon_{i,\beta}(\bm{a}^\top\bm{x})\|_{\psi_2}=O(1)$, and $|\varepsilon_{i,\beta}(\bm{a}^\top\bm{x})|\leq 2$ holds deterministically. Hence, regarding the parameters in Assumption \ref{assump3}, we can take $$A_g^{(1)}\asymp 1,U_g^{(1)}\asymp \sqrt{n},P_0^{(1)}\asymp \exp(-\Omega(n)),
A_g^{(2)}\asymp 1, U_g^{(2)}\asymp 1,P_0^{(2)}=0.$$ 
    
\item \textbf{(\ref{3.3b}) in Assumption \ref{assump4}.} It remains to pick $\beta_1$ that satisfies (\ref{3.3b}). Note that $\mathscr{D}_{f_i}=\{0\}$, and for any $\bm{x}\in\mathcal{K}$, $\bm{a}^\top\bm{x}\sim\mathcal{N}(0,1)$, so we have $$\mu_\beta(\bm{x})=\mathbbm{P}\Big(\bm{a}^\top\bm{x}\in \Big[-\frac{\beta}{2},\frac{\beta}{2}\Big]\Big)=O(\beta).$$ Thus, we take $\beta=\beta_1\asymp{\frac{k}{m}}$ to guarantee (\ref{3.3b}).
\end{itemize}

\subsection{1-bit GCS with dithering}
To specialize Theorem \ref{thm2} to this model, we select the parameters as follows:
\begin{itemize}[leftmargin=5ex,topsep=0.25ex]
\setlength\itemsep{-0.2em}
    \item \textbf{Assumption \ref{assumption2}.} The observation function can be written as $f(\cdot)=\sign(\cdot+\tau)$ with $\tau\sim \mathscr{U}[-\lambda,\lambda]$, which satisfies Assumption \ref{assumption2} with $(B_0,L_0,\beta_0)=(2,0,\infty)$.
    \item \textbf{(\ref{3.3a}) in Assumption \ref{assump4}.} We set $\lambda=CR\sqrt{\log m}$ with $C$ large enough, so that Lemma \ref{mis1dither} justifies (\ref{3.3a}). 
    \item \textbf{Assumption \ref{assump3}.} By Lemma \ref{anorm}, we have $\mathbbm{P}(\|\bm{a}\|_2=O(\sqrt{n}))\geq 1-2\exp(-\Omega(n))$. Assume  this event holds, and note that $f_{i,\beta}$ is still bounded by $1$, we have
\begin{equation}
    \begin{aligned}
        \nonumber
        &\|\xi_{i,\beta}(\bm{a}^\top\bm{x})\|_{\psi_2}\leq\|f_{i,\beta}(\bm{a}^\top\bm{x})\|_{\psi_2}+\|\lambda^{-1}\bm{a}^\top\bm{x}\|_{\psi_2}=O(R/\lambda)+O(1)=O(1),\\
        &\sup_{\bm{x}\in \mathcal{K}}|\xi_{i,\beta}(\bm{a}^\top\bm{x})|\leq 1+\sup_{\bm{x}\in\mathcal{K}}|\lambda^{-1}\bm{a}^\top\bm{x}|\leq 1+\sup_{\bm{x}\in \mathcal{K}}\lambda^{-1}\|\bm{a}\|_2\|\bm{x}\|_2 =O(\sqrt{n}).
    \end{aligned}
\end{equation}
Moreover, because $\varepsilon_{i,\beta}=f_{i,\beta}-f_i$, the following hold deterministically: $\|\varepsilon_{i,\beta}(\bm{a}^\top\bm{x})\|_{\psi_2}=O(1)$, $\sup_{\bm{x}\in \mathcal{K}}|\varepsilon_{i,\beta}(\bm{a}^\top\bm{x})|=O(1)$. Thus, regarding the parameters in Assumption \ref{assump3} we can take \begin{equation}\nonumber
     A_g^{(1)}\asymp 1,~U_g^{(1)}\asymp \sqrt{n},~P_0^{(1)}\asymp \exp(-\Omega(n)),~A_g^{(2)}\asymp 1,~U_g^{(2)}\asymp 1,~P_0^{(2)}=0.
\end{equation}
\item \textbf{(\ref{3.3b}) in Assumption \ref{assump4}.} It remains to confirm (\ref{3.3b}) for suitable $\beta_1$. For any $\beta$, note that $\mathscr{D}_{f_i}+[-\frac{\beta}{2},\frac{\beta}{2}]=[\tau-\frac{\beta}{2},\tau+\frac{\beta}{2}]$, and hence for any $\bm{x}\in \mathcal{K}\subset \mathbb{B}^n_2(R)$ we have $$\mu_\beta(\bm{x})=\mathbbm{P}\Big(\bm{a}^\top\bm{x}\in \Big[-\tau-\frac{\beta}{2},-\tau+\frac{\beta}{2}\Big]\Big)=\mathbbm{P}\Big(\bm{a}^\top\bm{x}+\tau\in\Big[-\frac{\beta}{2},\frac{\beta}{2}\Big]\Big)\leq \frac{\beta}{\lambda},$$ which can be seen by conditioning on $\bm{a}$. Hence, we can take $\beta=\beta_1={\frac{\lambda k}{m}}$ to guarantee (\ref{3.3b}).
\end{itemize}

\subsection{ Lipschitz-continuous SIM with generative prior}
To specialize Theorem \ref{thm2} to this model, we select the parameters as follows:
\begin{itemize}[leftmargin=5ex,topsep=0.25ex]
\setlength\itemsep{-0.2em}
    \item \textbf{Assumption \ref{assumption2}.} Since $f$ is $\hat{L}$-Lipschitz by assumption, it satisfies Assumption \ref{assumption2} with  $(B_0,L_0,\beta_0)=(0,\hat{L},\infty)$.

    \item \textbf{(\ref{3.3a}) in Assumption \ref{assump4}.} Recall that we have defined the quantities $\mu = \mathbbm{E}[f(g)g],\psi=\|f(g)\|_{\psi_2},~\text{where }g\sim \mathcal{N}(0,1)$.  Then, we choose $T=\mu$ so that $\rho(\bm{x})=0$ holds for any $\bm{x}$ (Lemma \ref{missim}), thus justifying (\ref{3.3a}).

    \item \textbf{Assumption \ref{assump3}.} Because $f_i$ is $\hat{L}$-Lipschitz and does not contain any discontinuity, there is no need to construct the Lipschitz approximation $f_{i,\beta}$ for some $\beta>0$, while we simply use $\beta=0$, which implies $f_{i,\beta}=f_i$ and $\varepsilon_{i,\beta}=0$.     Note that $\xi_{i,\beta}(a)=f_i(a)-\mu a$, and so we have $$\|f_i(\bm{a}^\top\bm{x})-\mu\bm{a}^\top\bm{x}\|_{\psi_2}\leq \|f_i(\bm{a}^\top\bm{x})\|_{\psi_2}+\|\mu\bm{a}^\top\bm{x}\|_{\psi_2}= O(\psi+\mu).$$ We suppose $\|\bm{a}\|_2=O(\sqrt{n})$, which holds with probability at least $1-2\exp(-\Omega(n))$ (Lemma \ref{anorm}); we also suppose  $f_i(0)\leq \hat{B}$, which holds with probability at least $1-P_0'$ by assumption. On these two events, we have 
\begin{equation}
    \begin{aligned}
        \nonumber
        |f_i(\bm{a}^\top\bm{x})-\mu\bm{a}^\top\bm{x}|&\leq |f_i(\bm{a}^\top\bm{x})-f_i(0)|+|f_i(0)|+\mu\|\bm{a}\|_2\|\bm{x}\|_2\\&\leq \hat{L}\|\bm{a}\|_2+\hat{B}+\mu\|\bm{a}\|_2\lesssim (\hat{L}+\mu)\sqrt{n}+\hat{B}.
    \end{aligned}
\end{equation}
Combined with   $\varepsilon_{i,\beta}=0$, we can set the parameters in Assumption \ref{assump3} as follows: 
\begin{gather}\nonumber
    A_g^{(1)}\asymp \psi+\mu,~U_g^{(1)}\asymp (\hat{L}+\mu)\sqrt{n}+\hat{B},~P_0^{(1)}\asymp P_0'+\exp(-\Omega(n)),\\\nonumber A_g^{(2)}\asymp \psi+\mu, ~U_g^{(2)}=0,~P_0^{(2)}=0.
\end{gather}
    \item \textbf{(\ref{3.3b}) in Assumption \ref{assump4}.} Because $\beta=0$ and $\mathscr{D}_{f_i}=\varnothing$, (\ref{3.3b}) is trivially satisfied. 
\end{itemize}
\subsection{Uniformly quantized GCS with dithering} 
To specialize Theorem \ref{thm2} to this model, we select the parameters as follows:
\begin{itemize}[leftmargin=5ex,topsep=0.25ex]
\setlength\itemsep{-0.2em}
    \item \textbf{Assumption \ref{assumption2}.}  The uniform quantizer with resolution $\delta>0$ is defined as $\mathcal{Q}_\delta(a)=\delta\big(\lfloor\frac{a}{\delta}\rfloor+\frac{1}{2}\big)$ for $a\in\mathbb{R}$. We consider this quantizer with dithering $\tau_i\sim \mathscr{U}[-\frac{\delta}{2},\frac{\delta}{2}]$. Specifically, we observe $y_i=\mathcal{Q}_\delta(\bm{a}_i^\top\bm{x^*}+\tau_i)$, so the observation function is $f(\cdot)=\mathcal{Q}_\delta(\cdot+\tau)$ with $\tau\sim \mathscr{U}[-\frac{\delta}{2},\frac{\delta}{2}]$. Hence, Assumption \ref{assumption2} is satisfied with $(B_0,L_0,\beta_0)=(\delta,0,\delta)$.

    \item \textbf{(\ref{3.3a}) in Assumption \ref{assump4}.}  The benefit of dithering is to whiten the quantization noise. With $T=1$, for any $\bm{x}\in \mathcal{K}$, Lemma \ref{uniformdither} implies $\rho(\bm{x})=\|\mathbbm{E}[\mathcal{Q}_\delta(\bm{a}_i^\top\bm{x}+\tau_i)\bm{a}_i]-\bm{x}\|_2= 0$, thus justifying (\ref{3.3a}).

    \item \textbf{Assumption \ref{assump3}.} Note that for any $\beta\in (0,\frac{\delta}{2})$, by Lemma \ref{bounduniformquan}, we can take the parameters for Assumption \ref{assump3} as follows: \begin{equation}
    \nonumber A_g^{(1)},U_g^{(1)},A_g^{(2)},U_g^{(2)}\asymp \delta,~P_0^{(1)}=P_0^{(2)}=0.
\end{equation} 
    \item \textbf{(\ref{3.3b}) in Assumption \ref{assump4}.} All that remains is to pick $\beta=\beta_1$ that satisfies (\ref{3.3b}). Because $\mathscr{D}_{f_i}=-\tau_i+\delta\mathbb{Z}$, hence for any $\bm{x}\in \mathcal{K}$ we have $$\mu_\beta(\bm{x})= \mathbbm{P}\Big(\bm{a}^\top\bm{x}\in -\tau+\delta \mathbb{Z}+\Big[-\frac{\beta}{2},\frac{\beta}{2}\Big]\Big)=\mathbbm{P}\Big(\bm{a}^\top\bm{x}+\tau\in \delta\mathbb{Z}+\Big[-\frac{\beta}{2},\frac{\beta}{2}\Big]\Big)=O\Big(\frac{\beta}{\delta}\Big),$$ which can be seen by using the randomness of $\tau\sim \mathscr{U}[-\frac{\delta}{2},\frac{\delta}{2}]$ conditionally on $\bm{x}$. Hence, we take $\beta=\beta_1\asymp \frac{k\delta}{m}$, which provides (\ref{3.3b}). 
\end{itemize}

\section{\color{black}Handling Sub-Gaussian Additive Noise}\label{noisylemma}

In this appendix, we describe how our results can be extended to the noisy model $\bm{y}= \bm{f}(\bm{Ax^*})+\bm{\eta}$, where $\bm{\eta}\in \mathbb{R}^m$ is the noise vector that is independent of $(\bm{A},\bm{f})$ and has i.i.d. sub-Gaussian entries $\eta_i$ satisfying $\|\eta_i\|_{\psi_2}=O(1)$.   Along similar lines as in (\ref{2.2})-(\ref{2.5}), we find that $\bm{\eta}$ gives rise to an additional term $\frac{2}{m}\langle\bm{\eta},A(\bm{\hat{x}}-T\bm{x^*})\rangle$ to the right-hand side of (\ref{2.2}), which is bounded by $2\|\bm{\hat{x}}-T\bm{x^*}\|_2\cdot \sup_{\bm{v}\in (\mathcal{K}_\epsilon^-)^*}\frac{1}{m}\langle \bm{\eta},\bm{Av}\rangle $, with the constraint set $(\mathcal{K}_\epsilon^-)^*$ defined in (\ref{errset}). Thus, in (\ref{2.5}), in addition to $\mathscr{R}_u$, in the noisy setting we need to bound the additional term $$\mathcal{R}_u' := \sup_{\bm{v}\in (\mathcal{K}_\epsilon^-)^*}\frac{1}{m}\langle \bm{\eta},\bm{Av}\rangle = \sup_{\bm{v}\in (\mathcal{K}_\epsilon^-)^*}\frac{1}{m}\sum_{i=1}^m \eta_i\bm{a}_i^\top\bm{v}.$$
This can be done by the following lemma, which indicates that the sharp (uniform) rate in Theorem \ref{thm2} can be retained in the presence of noise $\bm{\eta}$. 

\begin{lem}\label{lem:noisy}
    {\rm(Bounding the additional term $\mathcal{R}_u'$)\textbf{.}} In the noisy setting described above, with probability at least $1-C_1\exp(-\Omega(k\log\frac{TLr}{\epsilon}))-C_2\exp(-\Omega(m))$, we have $\mathcal{R}_u'\lesssim \sqrt{\frac{k\log\frac{TLr}{\epsilon}}{m}}$.
\end{lem}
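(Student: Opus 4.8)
The plan is to use the independence of the noise $\bm{\eta}$ from $(\bm{A},\bm{f})$ to reduce $\mathcal{R}_u'$ to a Gaussian supremum, and then bound that supremum through the metric entropy estimate already available from Lemma~\ref{entropy}. Write $\bm{w}:=\sum_{i=1}^m\eta_i\bm{a}_i$, so that $\mathcal{R}_u'=\frac1m\sup_{\bm v\in(\mathcal{K}_\epsilon^-)^*}\langle\bm{w},\bm{v}\rangle$. Conditioned on $\bm{\eta}$, the $\bm{a}_i$ are i.i.d.\ $\mathcal{N}(0,\bm I_n)$, hence $\bm{w}\mid\bm\eta\overset{d}{=}\|\bm\eta\|_2\,\bm g$ with $\bm g\sim\mathcal{N}(0,\bm I_n)$; thus it suffices to control $\|\bm\eta\|_2$ and $W:=\sup_{\bm v\in(\mathcal{K}_\epsilon^-)^*}\langle\bm g,\bm v\rangle$ separately. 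For the first, $\|\bm\eta\|_2^2$ is a sum of $m$ i.i.d.\ sub-exponential terms of mean $O(m)$, so Bernstein's inequality (Lemma~\ref{bern}) gives $\|\bm\eta\|_2\le C_0\sqrt m$ with probability at least $1-C_2\exp(-\Omega(m))$.

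For $W$, note that $(\mathcal{K}_\epsilon^-)^*\subset\mathbb{S}^{n-1}$, so the map $\bm g\mapsto W$ is convex and $1$-Lipschitz, and Gaussian concentration yields $\mathbb{P}(W\ge\mathbb{E}W+t)\le\exp(-t^2/2)$. By Dudley's entropy integral, $\mathbb{E}W\lesssim\int_0^{2}\sqrt{\mathscr{H}((\mathcal{K}_\epsilon^-)^*,\eta)}\,\mathrm{d}\eta$ (the diameter of $(\mathcal{K}_\epsilon^-)^*$ is at most $2$), and substituting the bound $\mathscr{H}((\mathcal{K}_\epsilon^-)^*,\eta)\le 2k\log\frac{12TLr}{\epsilon\eta}$ from Lemma~\ref{entropy} together with the elementary estimate $\int_0^2\sqrt{\log\frac{c}{\eta}}\,\mathrm{d}\eta=O(\sqrt{\log c})$ (for $c=12TLr/\epsilon$ bounded below by a constant) gives $\mathbb{E}W\lesssim\sqrt{k\log\frac{TLr}{\epsilon}}$. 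Taking $t\asymp\sqrt{k\log\frac{TLr}{\epsilon}}$ then shows $W\lesssim\sqrt{k\log\frac{TLr}{\epsilon}}$ with probability at least $1-C_1\exp(-\Omega(k\log\frac{TLr}{\epsilon}))$.

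Combining, on the intersection of the two events one obtains $\mathcal{R}_u'=\frac1m\|\bm\eta\|_2 W\le\frac{C_0}{\sqrt m}\cdot O\!\big(\sqrt{k\log\tfrac{TLr}{\epsilon}}\big)\lesssim\sqrt{\frac{k\log\frac{TLr}{\epsilon}}{m}}$, and a union bound gives the stated failure probability; to be fully rigorous one writes $\mathbb{P}(\mathcal{R}_u'>x)=\mathbb{E}_{\bm\eta}\,\mathbb{P}(\mathcal{R}_u'>x\mid\bm\eta)$, conditions on $\|\bm\eta\|_2\le C_0\sqrt m$, and on that event bounds $\mathbb{P}(\mathcal{R}_u'>x\mid\bm\eta)\le\mathbb{P}(W>\sqrt m\,x/C_0)$ by the tail from the previous paragraph. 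I do not expect a genuinely hard step here: the only points needing care are the conditioning bookkeeping just described and the evaluation of the Dudley integral, in particular verifying it converges at $\eta=0$ and that its dominant part is $O(\sqrt{k\log\frac{TLr}{\epsilon}})$ without spurious $\log(mn)$ factors. This last point is precisely why the Dudley route is preferable here to mimicking the proof of Theorem~\ref{thm1} by netting $(\mathcal{K}_\epsilon^-)^*$ and applying Bernstein's inequality pointwise to the summands $\eta_i\bm a_i^\top\bm v$, which also works but would incur the metric entropy at a polynomially small scale and hence slightly looser logarithmic factors.
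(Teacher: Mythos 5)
Your proposal is correct and follows essentially the same route as the paper: condition on $\bm{\eta}$ so that the conditional process is Gaussian with scale $\|\bm{\eta}\|_2/m$, bound its supremum over $(\mathcal{K}_\epsilon^-)^*$ via Dudley's entropy integral using Lemma~\ref{entropy} plus a Gaussian concentration step at level $t\asymp\sqrt{k\log\frac{TLr}{\epsilon}}$, and separately bound $\|\bm{\eta}\|_2\lesssim\sqrt{m}$ with probability $1-\exp(-\Omega(m))$. The only cosmetic difference is that you invoke the exact identity $\bm{w}\mid\bm{\eta}\overset{d}{=}\|\bm{\eta}\|_2\bm{g}$ together with Borell--TIS and Dudley, whereas the paper cites Talagrand's comparison inequality for the same purpose.
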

\begin{proof}
   Conditioning on $\bm{\eta}$, the randomness of $\bm{a}_i$'s gives $\frac{1}{m}\sum_{i=1}^m \eta_i\bm{a}_i \sim \mathcal{N}(0, \frac{\|\bm{\eta}\|_2^2}{m^2}\bm{I}_n)$, and so $\|(\frac{1}{m}\sum_{i=1}^m \eta_i \bm{a}_i)^\top\bm{v}_1-(\frac{1}{m}\sum_{i=1}^m \eta_i \bm{a}_i)^\top\bm{v}_2\|_{\psi_2} \leq \frac{C_0\|\bm{\eta}\|_2\|\bm{v}_1-\bm{v}_2\|_2}{m}$ holds for any $\bm{v}_1,\bm{v}_2 \in \mathbb{R}^n$. Let $\omega(\cdot)$ be the Gaussian width as defined in Lemma \ref{mendel}. Then, using the randomness of $\bm{a}_i$'s, Talagrand's comparison inequality \cite[Exercise 8.6.5]{vershynin2018high}  yields that for any $t\geq 0$, we have \begin{equation}\label{eq:tala}
       \mathbbm{P} \Big( \mathcal{R}_u' \leq \frac{C_1\|\bm{\eta}\|_2 \cdot[\omega((\mathcal{K}_\epsilon^-)^*)+t]}{m}\Big)\geq 1-2\exp(-t^2).
   \end{equation}
   Next, we bound the Gaussian width $\omega((\mathcal{K}_\epsilon^-)^*)$.
   Recall that $(\mathcal{K}_\epsilon^-)^*$ is defined in (\ref{errset}), and Lemma \ref{entropy} bounds its metric entropy as $\mathscr{H}((\mathcal{K}_\epsilon^-)^*,\eta)\leq 2k \log\frac{12TLr}{\epsilon\eta}$. Thus, we can invoke Dudley's integral inequality \cite[Theorem 8.1.3]{vershynin2018high} to obtain $$\omega((\mathcal{K}_\epsilon^-)^*)\leq C_2\int_ 0^2 \sqrt{2k\log\frac{12TLr}{\epsilon\eta}}~\text{d}\eta \lesssim  \sqrt{k\log\frac{TLr}{\epsilon}}.$$ Now, we further let $t= \sqrt{k\log\frac{TLr}{\epsilon}}$ in (\ref{eq:tala}) to obtain that $\mathcal{R}_u' \lesssim \frac{\|\bm{\eta}\|_2 \sqrt{k\log\frac{TLr}{\epsilon}}}{m}$ holds with probability at least $1-2\exp(-k\log\frac{TLr}{\epsilon})$. It remains to deal with the randomness of $\bm{\eta}$ and bound $\|\bm{\eta}\|_2$. Because $\bm{\eta}$ has i.i.d. entries with $\|\eta_i\|_{\psi_2}=O(1)$, by \cite[Theorem 3.1.1]{vershynin2018high} we can obtain that $\|\bm{\eta}\|_2 \leq C_3\sqrt{m}$ with probability at least $1-2\exp(-c_3m)$. Substituting  this bound into $\mathcal{R}_u' \lesssim \frac{\|\bm{\eta}\|_2\sqrt{k\log\frac{TLr}{\epsilon}}}{m}$, the result follows. 
\end{proof}

To close this appendix, we briefly state how to adapt the proof of Theorem \ref{thm2} to explicitly include the additive noise $\bm{\eta}$. Specifically, the left-hand side of (\ref{2.2}) and its uniform lower bound $\Omega(\|\bm{\hat{x}}-T\bm{x^*}\|_2^2)$ remain unchanged, while   the right-hand side of (\ref{2.2}) is now bounded by $2\|\bm{\hat{x}}-T\bm{x^*}\|_2 \cdot(\mathscr{R}_{u1}+\mathscr{R}_{u2}+\mathcal{R}_u')$ (with $2\|\bm{\hat{x}}-T\bm{x^*}\|_2 \mathcal{R}_u'$ being the additional term); thus,  combining the bound (\ref{eq:ori_bound}) on $2\|\bm{\hat{x}}-T\bm{x^*}\|_2\cdot(\mathscr{R}_{u1}+\mathscr{R}_{u2})$ and Lemma \ref{lem:noisy}, we establish a uniform upper bound $$O\left(\|\bm{\hat{x}}-T\bm{x^*}\|_2\cdot \Big[(A_g^{(1)}\vee A_g^{(2)})\sqrt{\frac{k\mathscr{L}}{m}}+\sqrt{\frac{k\log\frac{TLr}{\epsilon}}{m}}\Big]\right)$$ for the right-hand side of (\ref{2.2}). Therefore, to ensure uniform recovery up to the $\ell_2$-norm accuracy of $\epsilon$ under the sub-Gaussian noise $\bm{\eta}$, it suffices to have a sample complexity
\begin{equation}\label{eq:nbew}
    m\gtrsim (A_g^{(1)}\vee A_g^{(2)})^2 \frac{k\mathscr{L}}{\epsilon^2} + \frac{k\log\frac{TLr}{\epsilon}}{\epsilon^2}. 
\end{equation}
Since the logarithmic factors $\mathscr{L}$ in (\ref{B.20}) dominates $\log\frac{TLr}{\epsilon}$,    (\ref{eq:nbew}) indeed coincides with the sample complexity $m\gtrsim (A_g^{(1)}\vee A_g^{(2)})^2\frac{k\mathscr{L}}{\epsilon^2}$ in Theorem \ref{thm2} under the mild condition of $A_g^{(1)}\vee A_g^{(2)} =\Omega(1)$.

\section{Experimental Results for the MNIST dataset}\label{app:experiment}

\subsection{Details of the Settings}

In this section, we conduct experiments on the MNIST dataset~\cite{lecun1998gradient} to support our theoretical framework. We use various nonlinear measurement models, including 1-bit, dithered 1-bit, ReLU, and uniformly quantized CS with dithering (UQD). We select 30 images from the MNIST testing set, ensuring that there are three images from each of the 10 classes for maximum variability. A single measurement matrix $\bm{A}$ is generated and used for all 30 test images. All the experiments are repeated for $10$ random trials. All the experiments are run using Python 3.10.6 and PyTorch 2.0.0, with an NVIDIA RTX 3060 Laptop 6GB GPU. 

We train a variational autoencoder (VAE) on the training set of the MNIST dataset, which has 60,000 images, each of size 784. The decoder of the VAE is a fully connected neural network with ReLU activations, with input dimension $k=20$ and output dimension $n=784$, and two hidden layers with 500 neurons each. We train the VAE using the Adam optimizer with a mini-batch size of 100 and a learning rate of $0.001$. 

Since our contributions are primarily theoretical, we only provide simple proof-of-concept experimental results. In particular, {\color{black} since~\eqref{2.1} is intractable to solve exactly, to estimate the underlying signal, we choose to use the algorithm proposed in~\cite{bora2017compressed} (referred to as~\texttt{CSGM}) to approximate it. \texttt{CSGM} performs a gradient descent algorithm in the latent space in $\mathbb{R}^k$ with random restarts.} In addition, we compare with the Lasso program that is solved by the iterative shrinkage thresholding algorithm.   

For~\texttt{CSGM}, we follow the setting in~\cite{bora2017compressed} and perform $10$ random restarts with $1000$ gradient descent steps per restart and pick the reconstruction with the best measurement error. 

\subsection{Experimental Results for Noiseless 1-bit Measurements and Uniformly Quantized Measurements with Dithering}

In this subsection, we present the numerical results for $1$-bit measurements and uniformly quantized measurements with dithering, while the results for dithered 1-bit measurements and the Lipschitz SIM where the nonlinear link function is ReLU are similarly provided in Appendix~\ref{app:exp_res}. For $1$-bit measurements, since the underlying signal is assumed to be a unit vector and we aim to recover the direction of the signal, we use cosine similarity that is calculated as $\hat{\bm{x}}^T\bm{x}^*/(\|\hat{\bm{x}}\|_2 \cdot \|\bm{x}^*\|_2)$ with $\hat{\bm{x}}$ being the estimated vector to measure the reconstruction performance. For uniformly quantized measurements with dithering, we use the relative $\ell_2$-norm distance between the underlying signal and the estimated vector, i.e., $\|\hat{\bm{x}} - \bm{x}^*\|/ \|\bm{x}^*\|_2$, to measure the reconstruction performance. 

Since this paper is concerned with uniform recovery performance, in each trial, we record the worst-case reconstruction performance (i.e., the smallest cosine similarity or the largest relative error) over the $30$ test images, and the worst-case cosine similarity or relative error is averaged over $10$ trials.

Figures~\ref{fig:images_1bit},~\ref{fig:images_UQD}, and~\ref{fig:quant_mnist_1bit_UQD} show that for noiseless $1$-bit measurements and uniformly quantized measurements with dithering with $\delta = 3$, the~\texttt{CSGM} approach can produce reasonably accurate reconstruction for all the test images when the number of measurements $m$ is as small as $150$ and $100$ respectively.

  \begin{figure}
\hspace{-3cm}
     \includegraphics[height=0.18\textwidth]{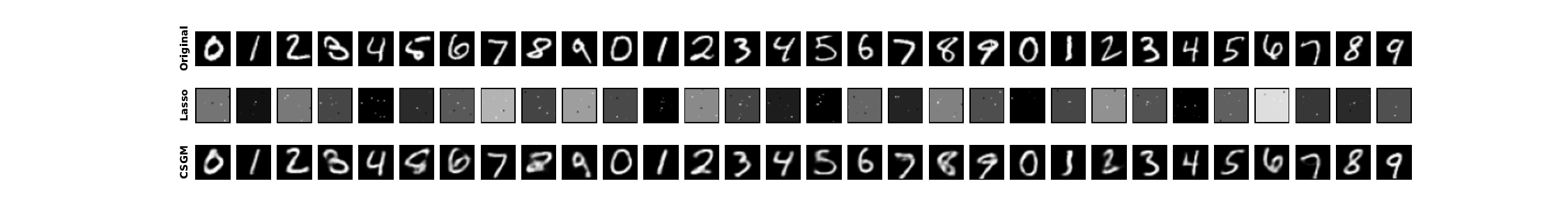}
     \caption{Reconstructed images of the MNIST dataset for the noiseless 1-bit measurements with $m = 150$.}
     \label{fig:images_1bit}
 \end{figure}

   \begin{figure}
\hspace{-3.cm}
     \includegraphics[height=0.18\textwidth]{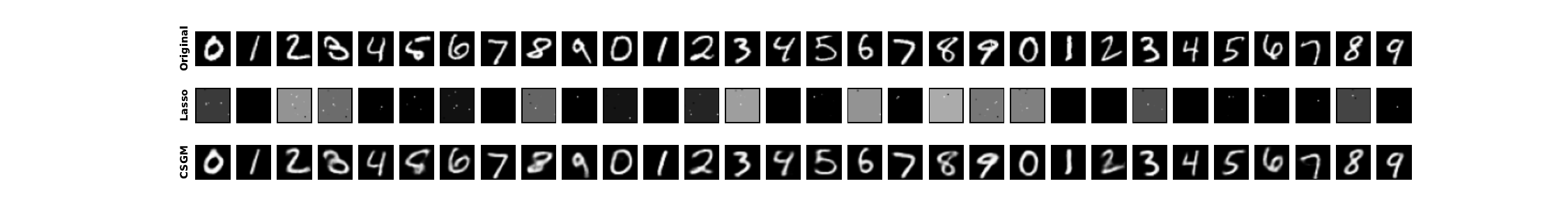}
     \caption{Reconstructed images of the MNIST dataset for UQD  with $m = 100$ and $\delta =3$.}
     \label{fig:images_UQD}
 \end{figure}

 \begin{figure}
\begin{center}
\begin{tabular}{ccc}
\includegraphics[height=0.25\textwidth]{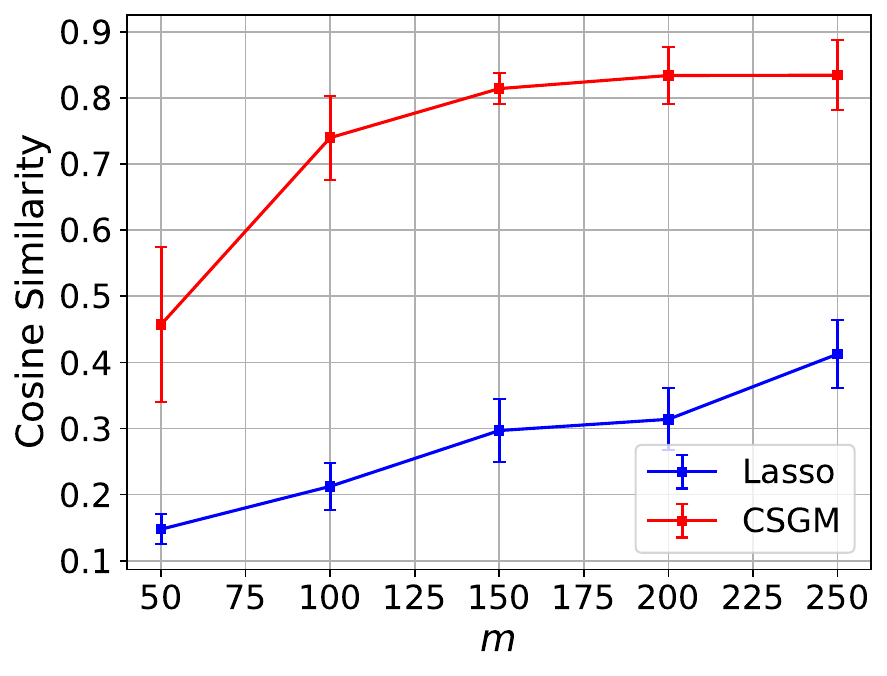} & \hspace{-0.5cm}
\includegraphics[height=0.25\textwidth]{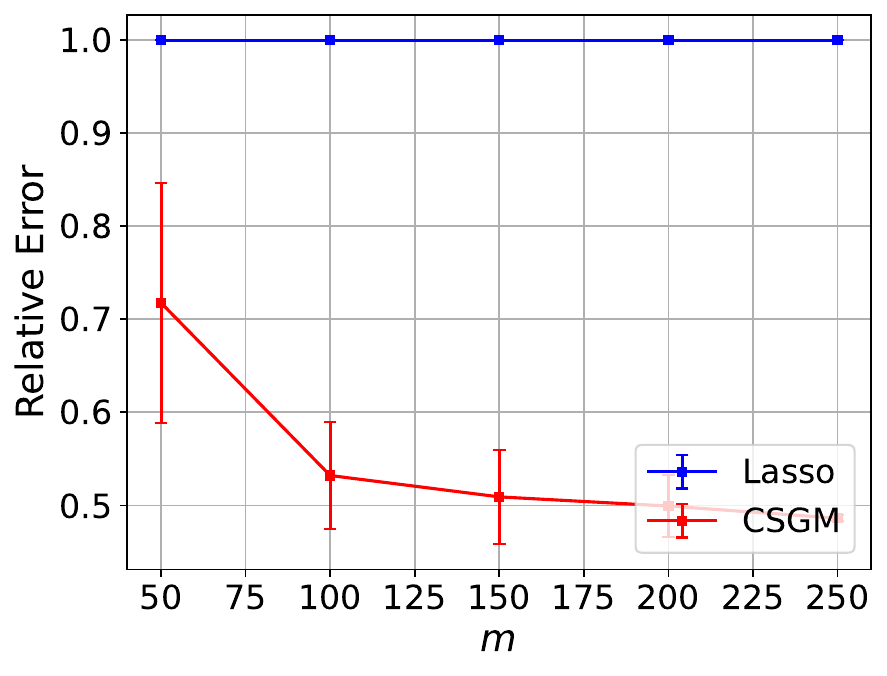} & \hspace{-0.5cm}
\includegraphics[height=0.25\columnwidth]{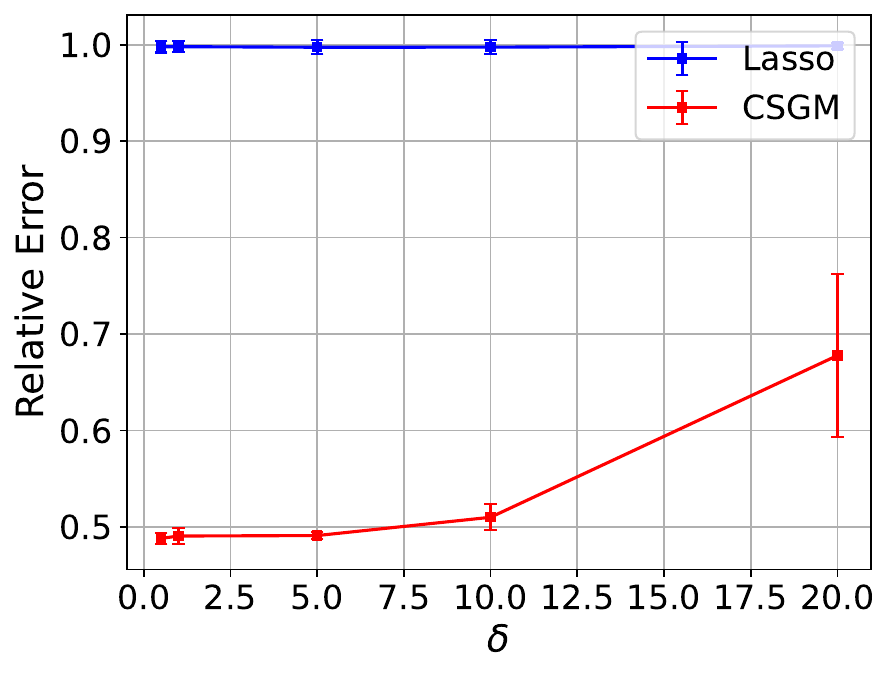} \\
{\small (a) $1$-bit with varying $m$} & \hspace{-0.8cm} {\small (b) UQD with fixed $\delta =3$} & \hspace{-0.7cm} {\small (c) UQD with fixed $m=200$}
\end{tabular}
\caption{Quantitative results of the performance of~\texttt{CSGM} for $1$-bit and UQD measurements on the MNIST dataset.} \label{fig:quant_mnist_1bit_UQD} \vspace*{-2ex}
\end{center}
\end{figure}

 \subsection{Experimental Results for ReLU and Dithered 1-bit Measurements}
 \label{app:exp_res}

We present the experimental results for the ReLU link function and dithered $1$-bit measurements in Figures~\ref{fig:images_relu1},~\ref{fig:images_1bitD}, and~\ref{fig:quant_mnist_relu_D1b}. For dithered 1-bit measurements, we set $\lambda = R \sqrt{\log m}$ with $R >0$ being a tuning parameter. For the case of using the ReLU link function, similarly to noiseless 1-bit measurements, we calculate the cosine similarity to measure the reconstruction performance. For dithered 1-bit measurements, similarly to uniformly quantized measurements with dithering, we calculate the relative $\ell_2$-norm distance. We observe that for these two nonlinear measurement models with a single realization of the random measurement ensemble,~\texttt{CSGM} can also lead to reasonably good reconstruction for all the test images when the number of measurements is small compared to the ambient dimension.

  \begin{figure}
\hspace{-3.cm}
     \includegraphics[height=0.18\textwidth]{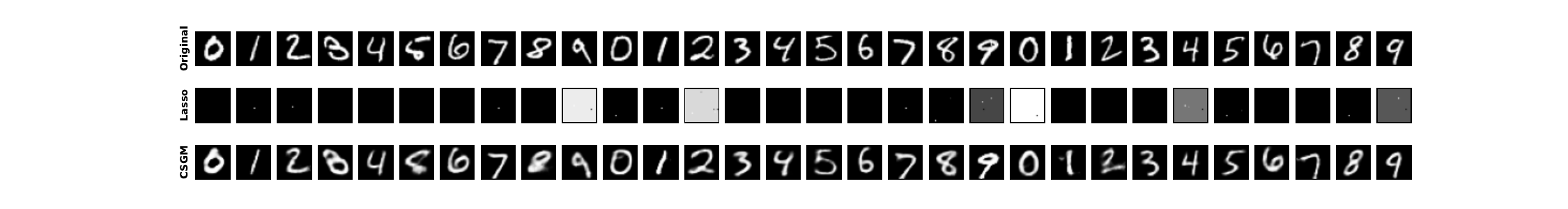}
     \caption{Reconstructed images of the MNIST dataset for the ReLU link function with $m = 150$ and $\sigma = 0.2$.}
     \label{fig:images_relu1}
 \end{figure}

   \begin{figure}
\hspace{-3.cm}
     \includegraphics[height=0.18\textwidth]{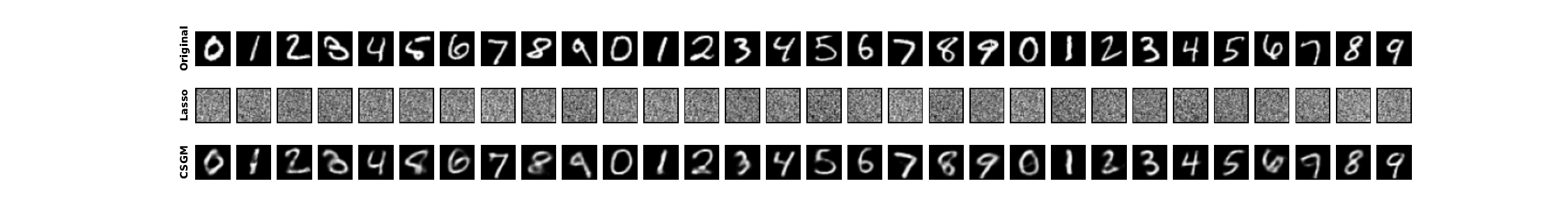}
     \caption{Examples of reconstructed images of the MNIST dataset for dithered $1$-bit measurements with $m = 250$ and $R = 5$.}
     \label{fig:images_1bitD}
 \end{figure}

 \begin{figure}
\begin{center}
\begin{tabular}{ccc}
\includegraphics[height=0.25\textwidth]{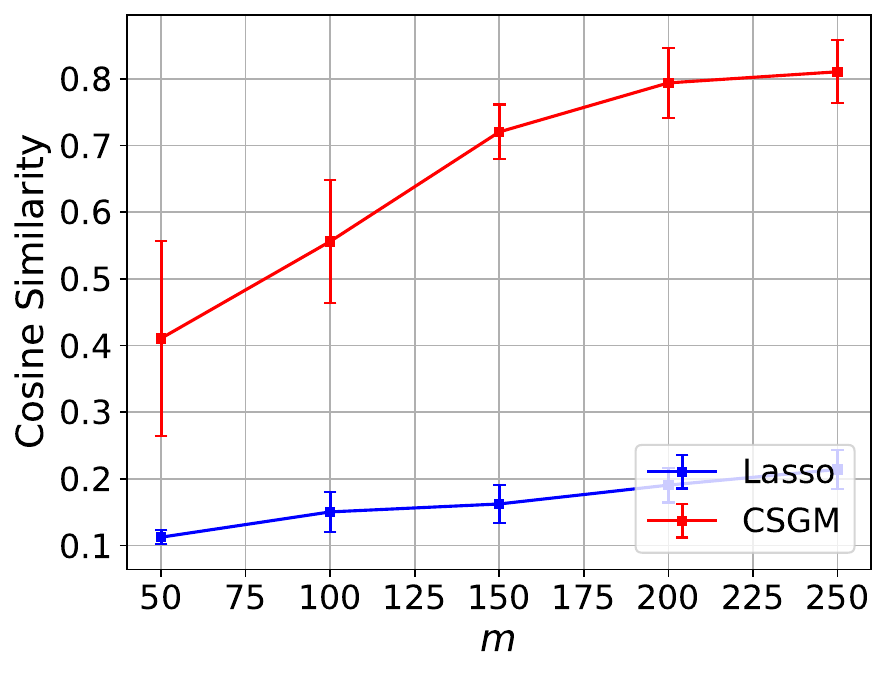} & \hspace{-0.5cm}
\includegraphics[height=0.25\textwidth]{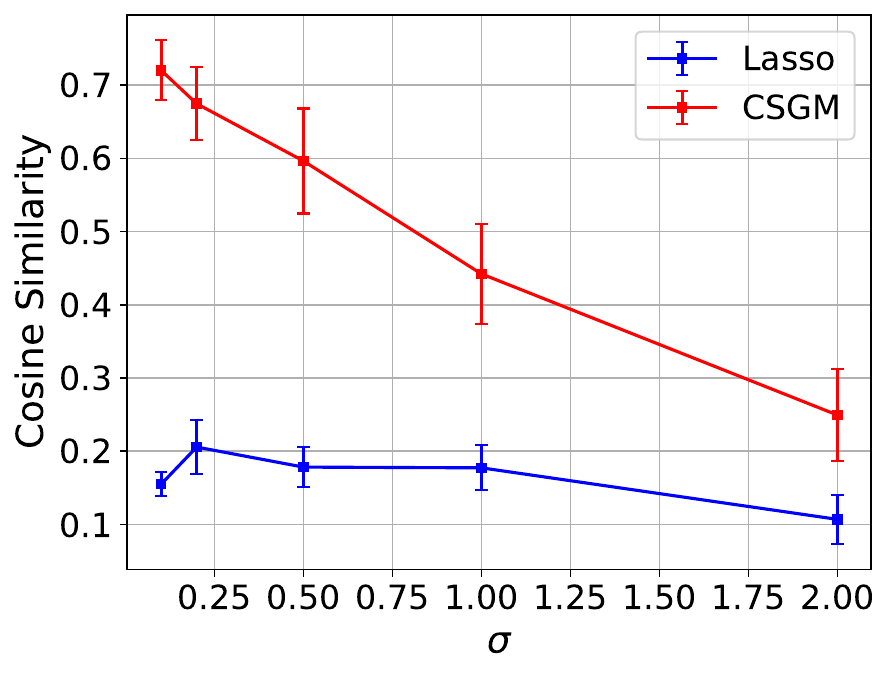} & \hspace{-0.5cm}
\includegraphics[height=0.25\columnwidth]{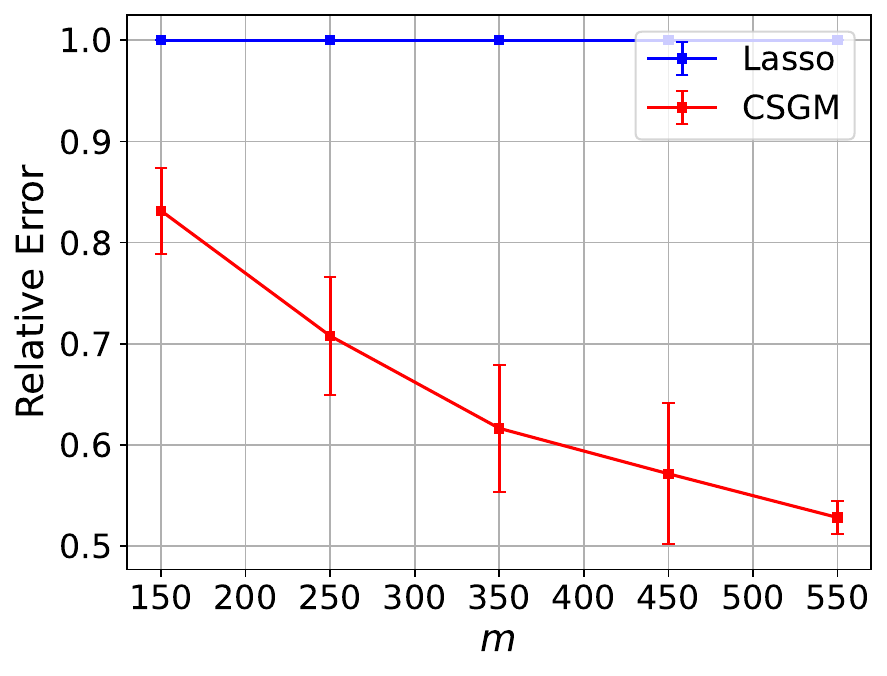} \\
{\small (a) ReLU with fixed $\sigma = 0.1$} & \hspace{-0.8cm} {\small (b) ReLU with fixed $m =150$} & \hspace{-0.7cm} {\small (c) Dithered 1-bit with fixed $R=5$}
\end{tabular}
\caption{Quantitative results of the performance of~\texttt{CSGM} for the ReLU link function and dithered 1-bit measurements on the MNIST dataset.} \label{fig:quant_mnist_relu_D1b} \vspace*{-2ex}
\end{center}
\end{figure} 

\section{Experimental Results for the CelebA dataset}\label{app:experiment_celeba}

In this section, we present numerical results for the CelebA dataset~\cite{liu2015deep}, which contains more than 200,000 face images for celebrities with an ambient dimension of $n = 12288$. We train a deep convolutional generative adversarial network (DCGAN) following the settings in~\url{https://pytorch.org/tutorials/beginner/dcgan_faces_tutorial.html}. The latent dimension of the generator is $k = 100$ and the number of epochs for training is $20$. Since the experiments for CelebA are more time-consuming than those of MNIST, we select $20$ images from the test set of CelebA and perform $5$ random trials. Other settings are the same as those for the MNIST dataset. 

Since we have observed from the numerical results for MNIST that the experiments for the ReLU link function and dithered 1-bit measurements are similar, we only present the results for noiseless 1-bit measurements and uniformly quantized observations with dithering. 

From Figures~\ref{fig:images_1bit_celeba} and~\ref{fig:quant_celeba_1bit_UQD}, we observe that for noiseless 1-bit measurements with $1500$ samples, a single measurement matrix $\bm{A}$ can lead to reasonably accurate reconstruction for all the $20$ test images. In addition, from Figures~\ref{fig:images_UQD_celeba} and~\ref{fig:quant_celeba_1bit_UQD}, we observe that for uniformly quantized measurements with dithering, a single realization of the measurement matrix and random dither is sufficient for the reasonably accurate recovery of the $20$ test images when $m = 1000$ and $\delta = 20$. 

  \begin{figure}
\hspace{-1.5cm}
     \includegraphics[height=0.12\textwidth]{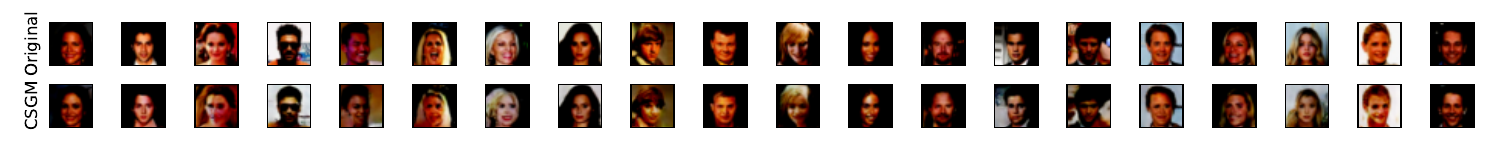}
     \caption{Reconstructed images of the CelebA dataset for the noiseless 1-bit measurements with $m = 1500$.}
     \label{fig:images_1bit_celeba}
 \end{figure}

   \begin{figure}
\hspace{-1.5cm}
     \includegraphics[height=0.12\textwidth]{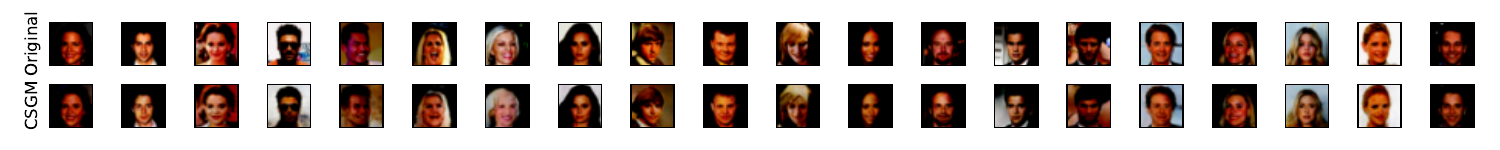}
     \caption{Reconstructed images of the CelebA dataset for UQD  with $m = 1000$ and $\delta =20$.}
     \label{fig:images_UQD_celeba}
 \end{figure}

  \begin{figure}
\begin{center}
\begin{tabular}{ccc}
\includegraphics[height=0.25\textwidth]{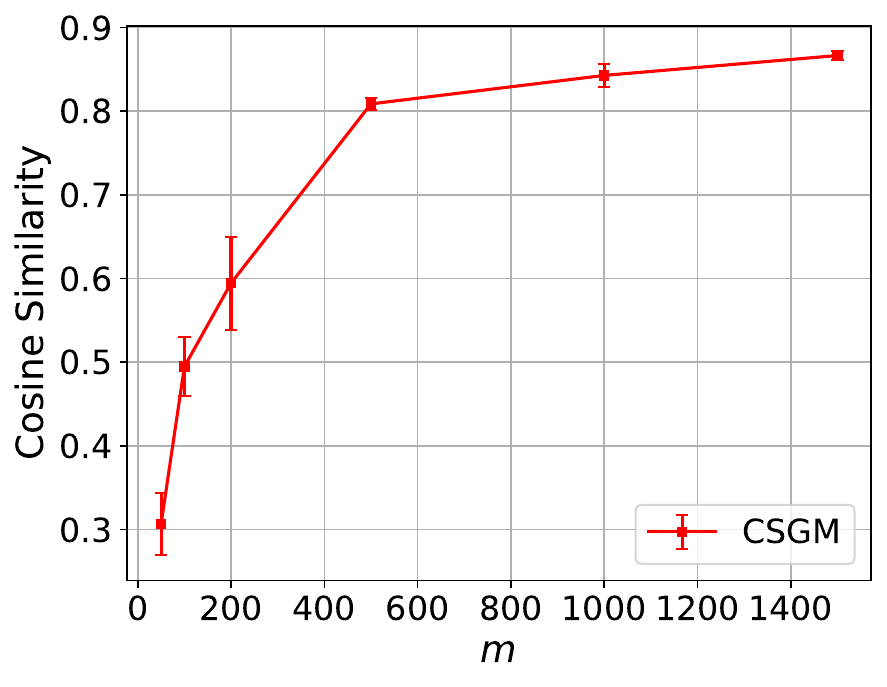} & \hspace{-0.5cm}
\includegraphics[height=0.25\textwidth]{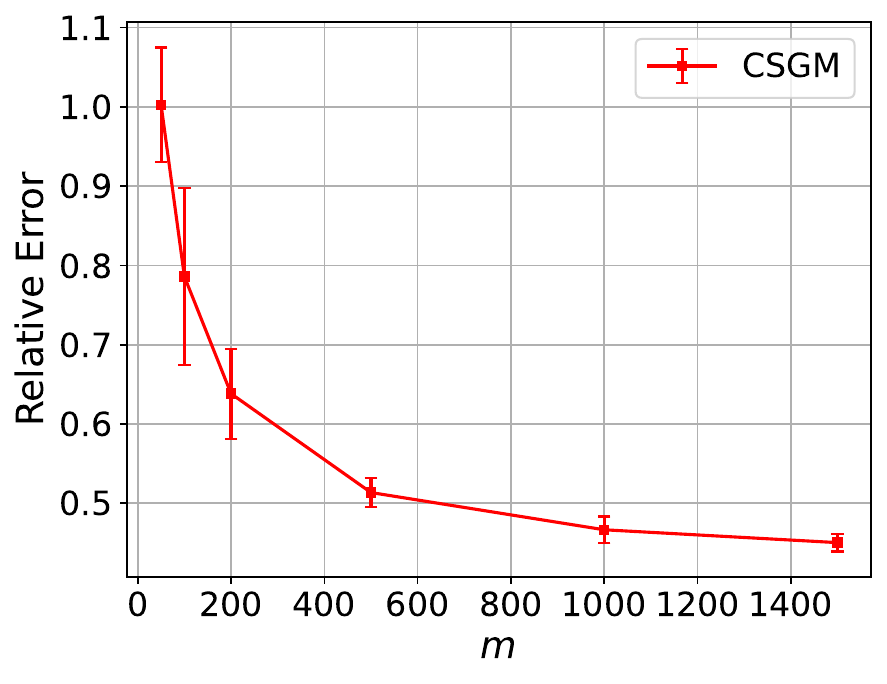} & \hspace{-0.5cm}
\includegraphics[height=0.25\columnwidth]{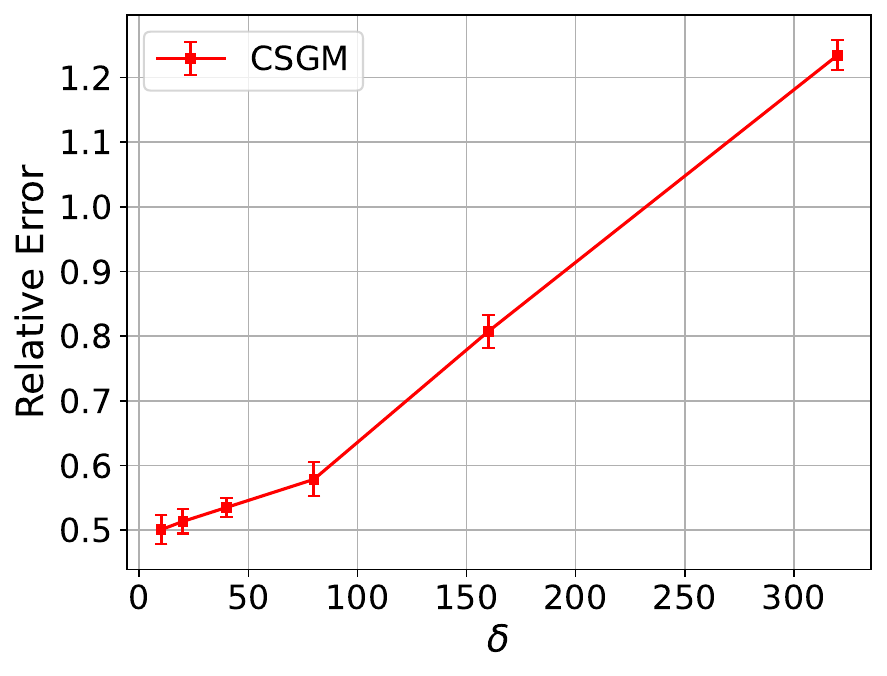} \\
{\small (a) $1$-bit with varying $m$} & \hspace{-0.8cm} {\small (b) UQD with fixed $\delta =20$} & \hspace{-0.7cm} {\small (c) UQD with fixed $m=500$}
\end{tabular}
\caption{Quantitative results of the performance of~\texttt{CSGM} for $1$-bit and UQD measurements on the CelebA dataset.} \label{fig:quant_celeba_1bit_UQD} \vspace*{-2ex}
\end{center}
\end{figure}

\end{document}